\tikzset{
  treenode/.style = {align=center, inner sep=0pt, text centered,
    font=\sffamily},
  arn_n/.style = {treenode, circle, white, font=\sffamily\bfseries, draw=black,
    fill=black, text width=1.5em},
  arn_r/.style = {treenode, circle, red, draw=red, 
    text width=1.5em, very thick},
  arn_x/.style = {treenode, rectangle, draw=black,
    minimum width=0.5em, minimum height=0.5em}
}
\title{Ordered Tree-Pushdown Systems}
\author[1]{Lorenzo Clemente\thanks{This work was partially supported by the National Science Center (decision DEC-2013/09/B/ST6/01575)}}
\author[2]{Paweł Parys\thanks{This work was partially supported by the National Science Center (decision DEC-2012/07/D/ST6/02443)}}
\author[3]{Sylvain Salvati}
\author[4]{Igor Walukiewicz\thanks{This work was partially supported by the Technische Universität München – Institute for Advanced Study, funded by the German Excellence Initiative and the European Union Seventh Framework Programme, grant  n. 291763.}}
\affil[1,2]{University of Warsaw, Poland}
\affil[3,4]{CNRS, Universit\'e de Bordeaux, INRIA, France}
\authorrunning{L. Clemente, P. Parys, S. Salvati, I. Walukiewicz} 
\subjclass{
	D.1.1 [\emph{Model checking}]: Software/Program Verification;
	D.2.4 [\emph{Applicative (Functional) Programming}]: Programming techniques;
	F.1.1 [\emph{Automata}]: Models of Computation;
	F.3.1 [\emph{Specifying and Verifying and Reasoning about Programs}]: Mechanical verification;
	F.4.1 [\emph{Lambda calculus and related systems}]: Mathematical Logic;
}
\keywords{reachability analysis, saturation technique, pushdown automata, ordered pushdown automata, higher-order pushdown automata, higher-order recursive schemes, simply-typed lambda calculus, Krivine machine}
\begin{document}

\maketitle

\begin{abstract}
	We define a new class of pushdown systems where the pushdown is a tree instead of a word.
	We allow a limited form of lookahead on the pushdown conforming to a certain ordering restriction,
	and we show that the resulting class enjoys a decidable reachability problem.
	This follows from a preservation of recognizability result for the backward reachability relation of such systems.
	As an application, we show that our simple model can encode several formalisms generalizing pushdown systems,
	such as ordered multi-pushdown systems,
	annotated higher-order pushdown systems,
	the Krivine machine,
	and ordered annotated multi-pushdown systems.
	In each case, our procedure yields tight complexity.
\end{abstract}

\section{Introduction}

\paragraph{Context.}

Modeling complex systems requires to strike the right balance between the accuracy of the model, and the complexity of its analysis.
A successful example is given by \emph{pushdown systems},
which are a popular class of infinite-state systems arising in diverse contexts,
such as language processing, data-flow analysis, security, computational biology, and program verification.
Many interesting analyses reduce to checking reachability in pushdown systems,
which can be decided in \ptime using, e.g., the popular \emph{saturation technique}
\cite{BouajjaniEsparzaMaler:Pushdown:1997,FinkelWillemsWolper:Pushdown:1997}
(cf.{\@} also the recent survey \cite{CarayolHague:SaturationSurvey:2014}).
Pushdown systems have been generalized in several directions.
One of them are \emph{tree-pushdown systems}~\cite{Guessarian:TPDA:1983},
where the pushdown is a tree instead of a word.
Unlike for ordinary pushdown systems,
non-destructive lookahead on the tree pushdown leads to undecidability.
In this work we propose an ordering condition permitting a limited
non-destructive lookahead on a tree pushdown.

A seemingly unrelated generalization is \emph{ordered multi-pushdown systems} \cite{BreveglieriCherubiniCitriniCrespi-Reghizzi:Ordered:1996,AtigBolligHabermehl:Ordered:2008,Atig:ordered:2012},
where several linear pushdowns are available instead of just one.
Since already two unrestricted linear pushdowns can simulate a Turing machine,
an ordering restriction is put on popping transitions,
requiring that all pushdowns smaller than the popped one are empty.
Reachability in this model is \twoexptimec \cite{AtigBolligHabermehl:Ordered:2008}.

\emph{Higher-order pushdown systems} provide another type of
generalization. Here pushdowns can be nested inside other pushdowns
\cite{Maslov:Multilevel:1976,KnapikNiwinskiUrzyczyn:Easy:2002}.
\emph{Collapsible pushdown systems} \cite{KnapikNikinskiUrzyczynWalukiewicz:Panic:2005,HagueMurawskiOngSerre:Collapsible:2008}
additionally enrich pushdown symbols with \emph{collapse links} to inner sub-pushdowns.
This allows the automaton to push a new symbol and to save, at the same time, the current context in which the symbol is pushed,
and to later return to this context via a collapse operation.
\emph{Annotated pushdown systems} \cite{BroadbentCarayolHagueSerre:Saturation:2012} (cf.{\@} also \cite{Kartzow:Parys:MFCS:2012})
provide a simplification of collapsible pushdown systems
by replacing collapse links with arbitrary pushdown annotations%
\footnote{%
%
	Collapsible and annotated systems generate the same configuration graphs
	when started from the same initial configuration,
	since new annotations can only be created to sub-pushdowns of the current pushdown.
	However, annotated pushdown systems have a richer backward reachability set
	which includes non-constructible pushdowns.
}.
%
The \emph{Krivine machine} \cite{SalvatiWalukiewicz:Krivine:2011} is a related model
which evaluates terms in simply-typed $\lambda Y$-calculus.
Reachability in all these models is \nexptimec {(n-1)} \cite{BroadbentCarayolHagueSerre:Saturation:2012,SalvatiWalukiewicz:Krivine:2011}
(where $n$ is the order of nesting pushdowns/functional parameters),
and one exponential higher in the presence of alternation.
Even more general, \emph{ordered annotated multi-pushdown systems} \cite{Hague:FSTTCS:2013}
have several annotated pushdown systems under an ordering restriction similar to \cite{AtigBolligHabermehl:Ordered:2008} in the first-order case.
They subsume both ordered multi-pushdown systems and annotated pushdown systems.
The 
saturation method (cf.~\cite{CarayolHague:SaturationSurvey:2014}) has
been adapted to most of these models,
and it is the basis of the prominent MOPED tool~\cite{Esparza:Schwoon:MOPED:2001} for the analysis of
pushdown systems, as well as the C-SHORe model-checker for annotated pushdown systems \cite{BroadbentCarayolHagueSerre:C-SHORe:2013}.

\paragraph{Contributions.}

Motivated by a unification of the results above,
we introduce \emph{ordered tree-pushdown systems}.
These are tree-pushdown systems with a limited destructive lookahead on the pushdown.
We introduce an order between pushdown symbols,
and we require that, whenever a sub-pushdown is read, all sub-pushdowns of smaller order must be discarded.
%
%
The obtained model is expressive enough to simulate all the systems mentioned above,
and is still not Turing-powerful thanks to the ordering
condition.
%
Our contributions are:
\begin{inparaenum}[i)]
	\item A general preservation of recognizability result for ordered tree-pushdown systems.
	\item A conceptually simple saturation algorithm working on finite tree automata representing sets of configurations
	(instead of more ad-hoc automata models),
		subsuming and unifying previous constructions.
	\item A short and simple correctness proof.
	\item Direct encodings of several popular extensions of pushdown systems,
	such as ordered multi-pushdown systems, annotated pushdown systems, the Krivine machine,
	and ordered annotated multi-pushdown systems.
      \item Encoding of our model into Krivine machines with states,
        that in turn are equivalent to collapsible pushdown automata.
	\item A complete complexity characterization of reachability in ordered tree-pushdown systems and natural subclasses thereof.
\end{inparaenum}
%

\paragraph{Related work.} Our work can be seen as a generalization of
the saturation method for collapsible pushdown
automata~\cite{BroadbentCarayolHagueSerre:Saturation:2012} to a
broader class of rewriting systems. This method has been already
generalized in~\cite{Hague:FSTTCS:2013} to  multi-stack higher-order
systems; in particular for ordered, phase-bounded, and scope-bounded
restrictions. Another related work is a saturation method for recursive program
schemes~\cite{BroadbentK13}.  Schemes are equivalent to  $\lambda
Y$-calculus, so our formalism can be used to obtain a saturation
method for schemes. 

Ordered tree-pushdown systems proposed in the present paper unify these
approaches.
The encodings of the above mentioned systems are direct and work step-to-step.
By contrast, the encoding of the Krivine machine to higher-order
pushdowns is rather
sophisticated~\cite{HagueMurawskiOngSerre:Collapsible:2008,Salvati:Walukiewicz:MSCS:2015}, and even
more so its proof of correctness. The converse encoding of
annotated higher-order pushdowns into Krivine machines is conceptually
easier, but technically quite long for at least two reasons:
a state has to be encoded by a tuple of terms, and transitions of the
automaton need to be implemented with beta-reduction.

Concerning multi-pushdown systems, there exist restrictions that we do
not cover in this paper.  In~\cite{Hague:FSTTCS:2013} decidability is
proved for annotated multi-pushdowns with phase-bounded and
scope-bounded restrictions. For standard multi-pushdown systems,
split-width has been proposed as a unifying
restriction~\cite{CyriacGK12}.

\paragraph{Outline.}

In Sec.~\ref{sec:preliminaries} we introduce common notions. 
In Sec.~\ref{sec:otpds} we define our model
and we present our saturation-based algorithm to decide reachability.
In Sec.~\ref{sec:examples} we show that ordered systems can optimally encode several popular formalisms.
In Sec.~\ref{sec:safety} we discuss the notion of safety from the Krivine machine and higher-order pushdown automata,
and how it relates to our model.
In Sec.~\ref{sec:conclusions} we conclude with some perspectives on open problems.





\section{Preliminaries}

\label{sec:preliminaries}

We work with rewriting systems on ranked trees, and with alternating tree automata.
The novelty is that every letter of the ranked alphabet will have an order.
A tree has the order determined by the letter in the root.
The order itself is used to constrain rewriting rules.


An \emph{alternating transition system} is a tuple $\Sys = \tuple{\mathcal C, \goesto}$,
where $\mathcal C$ is the set of configurations
and $\goesto \subseteq \mathcal C \times 2^{\mathcal C}$ is the alternating transition relation.
%
For two sets of configurations  $A, B \subseteq \mathcal C$
we define $A \goesto_1 B$ iff, for every $c \in A$, either $c \in B$,
or there exists $C \subseteq B$ s.t.{\@} $c \goesto C$,
and we denote by $\goesto^*_1$ its reflexive and transitive closure. 
The set of \emph{predecessors} of a set of configurations $C \subseteq \mathcal C$ is $\prestar{C} = \setof c {\set c \goesto^*_1 C}$.

\paragraph{Ranked trees.}
Let $\Nat$ be the set of non-negative integers,
and let $\Natpos$ be the set of strictly positive integers.
A \emph{node} is an element $u \in \Natpos^*$.
A node $u$ is a \emph{child} of a node $v$ if $u = v\cdot i$ for some $i\in\Natpos$.
A \emph{tree domain} is a non-empty prefix-closed set of nodes $D \subseteq \Natpos^*$
s.t., if $u\cdot(i+1) \in D$, then $u\cdot i \in D$ for every $i \in \Natpos$.
A \emph{leaf} is a node $u$ in $D$ without children. 
%
A \emph{ranked alphabet} is a pair $(\Sigma, \rank)$
of a set of symbols $\Sigma$ together with a ranking function $\rank : \Sigma \to \Nat$.
%
A \emph{$\Sigma$-tree} is a function $t : D \to \Sigma$,
where $D$ is a tree domain,
s.t., for every node $u$ in $D$ labelled with a symbol $t(u)$ of rank $k$,
$u$ has precisely $k$ children.
%
%
For a $\Sigma$-tree $t : D \to \Sigma$ and a label $a \in \Sigma$,
let $t^{-1}(a) = \setof {u \in D} {t(u) = a}$
be the set of nodes labelled with $a$. For a tree $t$ and a node
$u$ therein, the \emph{subtree} of $t$ at $u$ is defined as expected. 
We denote by $\Trees \Sigma$ the set of $\Sigma$-trees.

%

\paragraph{Order of a tree.} In this paper we will give a restriction
on a tree rewriting system guaranteeing that $\prestar{C}$ is regular
for every regular set $C$. This restriction will use the notion of an
order of a tree. The order of a tree is simply determined by the
order of the symbol in the root. Therefore, we suppose that our alphabet
$\Sigma$ comes with a function $\ord:\Sigma\to\Nat$.
The \emph{order} of a tree $t$ is $\ord(t) := \ord(t(\varepsilon))$.

\paragraph{Rewriting.}

Let $\Vars_0,\Vars_1,\dots$ be pairwise disjoint infinite sets of
variables; and let $\Vars = \bigcup_n \Vars_n$.
We consider the extended alphabet $\Sigma \cup \Vars$
where a variable $\var x \in \Vars_n$ has rank $0$ and order $n$.
We will work with the set $\Trees {\Sigma, \Vars}$ of $(\Sigma \cup
\Vars)$-trees. For such a tree $t$, let $\vars t$ be the set of
variables appearing in it. We say that $t$ is \emph{linear} if each variable in $\vars t$ appears exactly once in $t$.
For some $(\Sigma \cup \Vars)$-tree $u$,
$t$ is \emph{$u$-ground} if $\vars t \cap \vars u = \emptyset$. 
%
%
A \emph{substitution} is a finite partial mapping
$\sigma : \Vars \to \Trees {\Sigma \cup \Vars}$ respecting orders, i.e.,~$\ord(\sigma(\var x)) = \ord(\var x)$.
Given a $(\Sigma \cup \Vars)$-tree $t$ and a substitution $\sigma$,
$t\sigma$ is the $(\Sigma \cup \Vars)$-tree obtained by replacing each variable $\varord x {} {}$ in $t$ in the domain of $\sigma$ with $\sigma(\varord x {} {})$.
%
A \emph{rewrite rule} over $\Sigma$ is a pair $l \rewritesto r$ of $(\Sigma \cup \Vars)$-trees $l$ and $r$
s.t.{\@} $\vars r \subseteq \vars l$ and $l$ is linear.\footnote
{Notice that we require that all the variables appearing on the r.h.s.{\@} $r$
also appear on the l.h.s.{\@} $l$.
All our results carry over even by allowing some variables on the r.h.s.{\@} $r$ not to appear on the l.h.s.{\@} $l$,
but we forbid this for simplicity of presentation.}
%
%


\ignore{
A \emph{root rewrite system} is a tuple $\Sys = \tuple {\Sigma, \Rules}$
where $\Sigma$ is an ordered alphabet
and $\Rules$ is a set of rewrite rules $l \rewritesto r$ over $\Sigma$.
A root rewrite system induces a rewrite relation $\rewritesto_\Sys$ on ordered trees by \emph{root rewriting}:
For two ordered trees $t_0$ and $t_1$,
we write $t_0 \rewritesto_\Sys t_1$ iff there exists a rewrite rule $l \rewritesto r \in \Rules$
and a substitution $\sigma$ s.t.{\@} $l = t_0\sigma$ and $r = t_1\sigma$.
Note that root rewriting is the top-down generalization to trees of prefix-rewriting over words.
}


\ignore{
A \emph{alternating root rewrite system with states} is a tuple $\Sys = \tuple {\Sigma, Q, p_0, F, \Rules}$
where $\Sigma$ is an ordered alphabet,
$Q$ is a finite set of control locations,
$\Rules$ is a set of rules of the form $p, l \rewritesto P, r$,
where $p \in Q$, $P \subseteq Q$, and $l \rewritesto r$ is a rewrite rule over $\Sigma$.
A \emph{configuration} is a pair $(p, t)$ where $p \in Q$ and $t$ is an ordered $\Sigma$-tree.
Let $\mathcal C$ be the set of configurations
$c_0$ is the initial configuration and $T$ a set of terminal configurations.
A root rewrite system induces an alternating transition system $\tuple {...}$
A root rewrite system with states induces a rewrite relation $\rewritesto_\Sys$ on sets of configurations:
For a configuration $(p, t)$ and a set of configurations $P \times \set {u}$,
we write $(p, t) \rewritesto_\Sys P \times \set {u}$
if, and only if, 
there exists a rule $(p, l) \rewritesto (P, r) \in \Rules$
and a substitution $\sigma$ s.t.{\@} $t = l\sigma$ and $u = r\sigma$.
%

The \emph{order} of a root rewrite system with states is the maximal order of rules in $\Rules$.
}

\paragraph{Alternating tree automata.}

An \emph{alternating tree automaton} (or just \emph{tree automaton})
is a tuple $\ATA A = \tuple {\Sigma, Q, \Delta}$
where 
$\Sigma$ is a finite ranked alphabet, 
$Q$ is a finite set of states,
and $\Delta \subseteq Q \times \Sigma \times (\pow Q)^*$
is a set of alternating transitions of the form $p \trans a P_1 \cdots P_n$,
with $a$ of rank $n$. 
We say that $\ATA A$ is \emph{non-deterministic} if, for every transition as above, all $P_j$'s are singletons,
and we omit the braces in this case.
An automaton is \emph{ordered} if, for every state $p$ and symbols $a, b$ s.t.{\@} $p \trans a \cdots$ and $p \trans b \cdots$,
we have $\ord (a) = \ord (b)$.
We assume w.l.o.g.{\@} that automata are ordered,
and we denote by $\ord (p)$ the order of state $p$.
The transition relation is extended to a set of states $P \subseteq Q$ by defining
$P \trans a P_1 \cdots P_n$ iff, for every $p \in P$, there exists a transition
$p \trans a P_1^p \cdots P_n^p$,
and $P_j = \bigcup_{p \in P} P_j^p$ for every $j\in\{1,\dots,n\}$.
%
%
%
It will be useful later in the definition of the saturation procedure
to define run trees not just on ground trees,
but also on trees possibly containing variables.
A variable of order $k$ is treated like a leaf symbol which is accepted by all states of the same order.
Let $P \subseteq Q$ be a set of states, 
and let $t : D \to (\Sigma\cup\Vars)$ be an input tree. 
A \emph{run tree from $P$ on $t$} is a $\pow Q$-tree%
\footnote{Strictly speaking $\pow Q$ does not have a rank/order.
It is easy to duplicate each subset at every rank/order to obtain an ordered alphabet,
which we avoid for simplicity.}
$s : D \to \pow Q$ over the same tree domain $D$ 
%
s.t.{\@} $s(\varepsilon) = P$, and: 
	%
	%
	%
		\begin{inparaenum}
			\item[i)] if $t(u) = a$ is not a variable and of rank $n$, 
			then 
			$s(u) \trans a s(u\cdot 1) \cdots s(u\cdot n)$, and
			\item[ii)] if $t(u) = \var x$ then $\forall p \in s(u), \ord (p) = \ord (\var x)$.
			%
		\end{inparaenum}
%
%
%
%
%
%
The \emph{language} recognized by a set of states $P \subseteq Q$, 
denoted by $\lang P$, is the set of $\Sigma$-trees $t$
s.t.{\@} there exists a run tree from $P$ on $t$.


%


\section{Ordered tree-pushdown systems}

\label{sec:otpds}

We introduce a generalization of pushdown systems,
where the pushdown is a tree instead of a word.
An \emph{alternating ordered tree-pushdown system} (AOTPS) of order $n \in \Natpos$ is a tuple
$\Sys = \tuple {n, \Sigma, P, \Rules}$
where $\Sigma$ is an ordered alphabet containing symbols of order at most $n$, 
$P$ is a finite set of \emph{control locations},
and $\Rules$ is a set of rules of the form $p, l \rewritesto S, r$
s.t.{\@} $p \in P$ and $S \subseteq P$.
Moreover, $l \rewritesto r$ is a rewrite rule over
$\Sigma$ of one of the two forms:
\begin{equation*}
  \text{\emph{(shallow)}: }a(u_1,\dots,u_m)\rewritesto r\quad
\text{or}\quad
\text{\emph{(deep)}: }a(u_1,\dots,u_k,b(v_1,\dots,v_{m'}),u_{k+1},\dots,u_m)\rewritesto r 
\end{equation*}
where each $u_i,v_j$ is either
  $r$-ground or a variable, 
and for the second form we require
  \begin{equation*}
    \text{\emph{(ordering condition)}: } \text{if
      $\ord(u_i)\leq\ord(b)$, then $u_i$ is  $r$-ground;  for
  $i=1,\dots,m$. }
  \end{equation*}
The rules in $\Rules$ where $l\rewritesto r$ is of the first form are called \emph{shallow}, the others are \emph{deep}.
The tree $b(v_1,\dots,v_{m'})$ in a deep rule is called the \emph{lookahead subtree} of $l$.
A rule $l\to r$ is \emph{flat}\label{def:flat} if each $u_i,v_j$ is just a variable.
%
Let $\Rules_{\ord (b)}$ be the set of deep rules, where the lookahead symbol $b$ is of order $\ord (b)$.
%
For example, $a(\var x, \var y) \rewritesto c(a(\var x, \var y), \var x)$ is shallow and flat,
but $a(b(\var x), \var y) \rewritesto c(\var x, \var y)$ is deep (and flat); here
necessarily  $\ord (\var y) > \ord (b)$.
Finally, $a(c,d, \var x) \rewritesto b(\var x)$ is not flat since $c$ and $d$ are not variables.
In Sec.~\ref{sec:examples} we provide more examples of such rewrite rules by encoding many popular formalisms.
%
%
%
%
While $l$ must be linear, $r$ may be non-linear, thus sub-trees can be duplicated.
The \emph{size} of $\Sys$ is $\size \Sys := \size \Sigma + \size P + \size \Rules$,
where $\size \Rules := \sum_{(p, l \rewritesto S, r) \in \Rules} (1 + \size l + \size S + \size r)$.
%
%

%
%
%
%
Rewrite rules induce an alternating transition system $\tuple {\mathcal C_\Sys, \rewritesto_\Sys}$ by root rewriting.
The set of configurations $\mathcal C_\Sys$ consists of pairs $(p, t)$ with $p \in P$ and $t \in \Trees \Sigma$,
and, for every configuration $(p, t)$, set of control locations $S\subseteq P$, and tree $u$,
$(p, t) \rewritesto_\Sys S \times \set {u}$
if 
there exists a rule $((p, l) \rewritesto (S, r)) \in \Rules$
and a substitution $\sigma$ s.t.{\@} $t = l\sigma$ and $u = r\sigma$. 
%

Let $\ATA A = \tuple {\Sigma, Q, \Delta}$ be a tree automaton s.t.{\@} $P \subseteq Q$.
The \emph{language of configurations} recognized by $\ATA A$ from $P$
is $\lang {\ATA A, P} := \setof {(p, t) \in \mathcal C} {p \in P \textrm { and } t \in \lang p}$.
Given an initial configuration $(p_0, t_0) \in \mathcal C$ and a tree automaton $\ATA A$
recognizing a regular set of target configurations $\lang{\ATA A, P} \subseteq \mathcal C$,
the \emph{reachability problem} for $\Sys$ amounts to determining
whether $(p_0, t_0) \in \prestar {\lang{\ATA A, P}}$.
%
%
%

\ignore{
\begin{align}
	\tag*{$(shallow)$}
	\label{eq:shallow:rule}
	l := \begin{tikzpicture}[baseline=-2.3ex,scale=0.5,level/.style={sibling distance = 1.5cm}]
		\node {$a$}
			child{ node {$v_1$}}
		    child{ node {$\cdots$} edge from parent[draw=none]}
			child{ node {$v_n$}};
	\end{tikzpicture}
	&\longgoesto
	r:= \begin{tikzpicture}[baseline=-2.3ex,scale=0.5,level/.style={sibling distance = 1.5cm}]
		\node {$b$}
			child{ node {$t_1$}}
		    child{ node {$\cdots$} edge from parent[draw=none]}
			child{ node {$t_h$}};
	\end{tikzpicture}
	\\
	\tag*{$(deep)$}
	\label{eq:deep:rule}
	l := \begin{tikzpicture}[baseline=-2.3ex,scale=0.5,level/.style={sibling distance = 1.5cm}]
		\node {$b$}
			child{ node {$u_1$}}
		    child{ node {$\cdots$} edge from parent[draw=none]}
			child{ node {$a$}
				child { node {$v_1$}}
				child { node {$\cdots$} edge from parent[draw=none]}
				child { node {$v_n$}}
				}
		    child{ node {$\cdots$} edge from parent[draw=none]}
		    child{ node {$u_m$}};
	\end{tikzpicture}
	&\longgoesto
	r:= \begin{tikzpicture}[baseline=-2.3ex,scale=0.5,level/.style={sibling distance = 1.5cm}]
		\node {$c$}
			child{ node {$t_1$}}
		    child{ node {$\cdots$} edge from parent[draw=none]}
			child{ node {$t_h$}};
	\end{tikzpicture}
\end{align}
s.t.{\@} $l$ is $r$-shallow, and
the \ref{eq:scan:rule} rule is
%
\begin{enumerate}[(I)]
	
	\item \label{enum:I}
	For any $v_i \in \set{v_1, \dots, v_n}$,
	either $v_i$ is a variable,
	or it is semi-ground (i.e., none of its variables appear on the right).
	
	\item \label{enum:II}
	For any $u_j \in \set{u_1, \dots, u_m}$:
	\begin{enumerate}
		\item If $\ord(u_j) \leq \ord(a)$,
		then $u_j$ is semi-ground.
		
		\item If $\ord(u_j) > \ord(a)$,
		then either $u_j = \var x$ is a variable,
		or it is semi-ground.
	\end{enumerate}
	Intuitively, in the \ref{eq:scan:rule} rule
	subtrees of order not greater than the scanned subtree (the one with root $b$) shall be discarded.
	In other words, $u_j$ is either semi-ground, 
	or a variable of order strictly higher than $b$.
\end{enumerate}
%

%

\begin{remark}
	\label{rem:pop}
	The \ref{eq:pop:rule} rule can be simulated by the \ref{eq:scan:rule} rule
	by replacing $\var x$ with $b(\var x_1, \dots, \var x_n)$ for every $b \in \Sigma$. 
	We keep pop rules for simplicity of exposition.
	\ignore{
	For every \ref{eq:pop:rule} rule above and for every $b \in \Sigma$ of rank $n$ we have the following scan rule:
	%
	\begin{align*}
		\tag*{$(\scanpop)$}
		\label{eq:scan-pop:rule}
		p,\ l := \begin{tikzpicture}[baseline=-2.3ex,scale=0.5,level/.style={sibling distance = 1.5cm}]
			\node {$a$}
				child{ node {$u_1$}}
			    child{ node {$\cdots$} edge from parent[draw=none]}
				child{ node {$b$}
					child { node {$\varord y 1 {}$}}
					child { node {$\cdots$} edge from parent[draw=none]}
					child { node {$\varord y n {}$}}
					}
			    child{ node {$\cdots$} edge from parent[draw=none]}
			    child{ node {$u_m$}};
		\end{tikzpicture}
		&\longgoesto
		P,\ r:= \begin{tikzpicture}[baseline=-2.3ex,scale=0.5,level/.style={sibling distance = 1.5cm}]
			\node {$b$}
				child{ node {$\varord y 1 {}$}}
			    child{ node {$\cdots$} edge from parent[draw=none]}
				child{ node {$\varord y n {}$}};
		\end{tikzpicture}
	\end{align*}
	Note that this is a valid scan rule,
	since the $u_i$'s are semi-ground by construction.
	%
	}
\end{remark}

}

\ignore{

\begin{remark}
	\label{rem:states}
	The \ref{eq:scan:rule} rule allows to remove control locations by encoding them in the root of the tree.
	More precisely, given an AOTPS $\Sys = \tuple {\Sigma, Q, \Rules}$
	we can construct an equivalent \lorenzo{make this formal}
	AOTPS without states $\Sys' = \tuple {\Sigma', Q', \Rules'}$
	where $\Sigma' = Q \times \Sigma$ is an ordered alphabet s.t.{\@}
	$(p, a)$ in $\Sigma'$ has the same rank and order as $a$ in $\Sigma$.
	A configuration $p, a(t_1, \dots, t_n)$ is thus represented as $(p,a)(t_1, \dots, t_n)$.
	We assume that $\Sys$ does not contain pop rules, by the previous Remark~\ref{rem:pop}.
	The set of rules $\Rules'$ is defined as follows.
	If $\Rules$ contains a \ref{eq:push:rule} or \ref{eq:scan:rule} rule as above,
	then $\Rules'$ contains a corresponding rule which is obtained by just replacing the root $a$ with $(p, a)$,
	and similarly with $b, c$:
	\begin{align}
		\tag*{$(\pushp)$}
		\label{eq:push':rule}
		l' := \begin{tikzpicture}[baseline=-2.3ex,scale=0.5,level/.style={sibling distance = 1.5cm}]
			\node {$(p, a)$}
				child{ node {$v_1$}}
			    child{ node {$\cdots$} edge from parent[draw=none]}
				child{ node {$v_n$}};
		\end{tikzpicture}
		&\longgoesto
		r' := \begin{tikzpicture}[baseline=-2.3ex,scale=0.5,level/.style={sibling distance = 1.5cm}]
			\node {$(q, b)$}
				child{ node {$t_1$}}
			    child{ node {$\cdots$} edge from parent[draw=none]}
				child{ node {$t_h$}};
		\end{tikzpicture}
	\end{align}
	\begin{align}
		\tag*{$(\scanp)$}
		\label{eq:scan':rule}
		l' := \begin{tikzpicture}[baseline=-2.3ex,scale=0.5,level/.style={sibling distance = 1.5cm}]
			\node {$(p, a)$}
				child{ node {$u_1$}}
			    child{ node {$\cdots$} edge from parent[draw=none]}
				child{ node {$b$}
					child { node {$v_1$}}
					child { node {$\cdots$} edge from parent[draw=none]}
					child { node {$v_n$}}
					}
			    child{ node {$\cdots$} edge from parent[draw=none]}
			    child{ node {$u_m$}};
		\end{tikzpicture}
		&\longgoesto
		r' := \begin{tikzpicture}[baseline=-2.3ex,scale=0.5,level/.style={sibling distance = 1.5cm}]
			\node {$(q, c)$}
				child{ node {$t_1$}}
			    child{ node {$\cdots$} edge from parent[draw=none]}
				child{ node {$t_h$}};
		\end{tikzpicture}
	\end{align}
\end{remark}
}

\subsection{Reachability analysis}
\label{sec:saturation}

We present a saturation-based procedure to decide reachability in
AOTPSs. This also shows that backward reachability relation preserves
regularity. 
%


\begin{theorem}[Preservation of recognizability]\label{thm:main}
	Let $\Sys$ be an order-$n$ AOTPS
	and let $C$ be regular set of configurations.
	Then, $\prestar{C}$ is effectively regular,
	and an automaton recognizing it can be built in $n$-fold exponential time.
\end{theorem}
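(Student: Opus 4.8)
The plan is to prove the theorem by \emph{saturation}. I start from an alternating tree automaton $\ATA A_0 = \tuple{\Sigma, Q, \Delta_0}$ with $P \subseteq Q$ and $\lang{\ATA A_0, P} = C$, and I build an increasing sequence of automata by repeatedly adding transitions --- and a controlled supply of fresh states, see below --- until a fixpoint $\ATA A_\infty$ is reached that cannot be extended by the rule below. The goal is $\lang{\ATA A_\infty, P} = \prestar C$; the left-to-right inclusion is soundness and the converse is completeness.

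The saturation rule reads a rewrite rule backwards. Fix $(p, l \rewritesto S, r) \in \Rules$ and any run of the current automaton from the set $S$ on $r$, where variable-leaves and $r$-ground subtrees play the role of wildcards (exactly as in the run-tree definition for trees with variables). Such a run determines, for each $\var x \in \vars r$, a constraint $R_{\var x} \subseteq Q$ --- the union of the state-sets met at the occurrences of $\var x$ --- with the property that $r\sigma \in \lang S$ whenever $\sigma(\var x) \in \lang{R_{\var x}}$ for all $\var x$. I then add transitions making $l\sigma$ acceptable from $p$ under precisely these constraints. For a shallow rule $l = a(u_1, \dots, u_m)$ this is a single transition $p \trans a P_1 \cdots P_m$, with $P_i = R_{u_i}$ when $u_i$ is a variable and $P_i$ any state-set of the current automaton accepting $u_i$ when $u_i$ is $r$-ground. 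For a deep rule $l = a(u_1, \dots, b(v_1, \dots, v_{m'}), \dots, u_m)$ I compose a second level, routing the $b$-child to a fresh state that reads $b$ and dispatches the constraints $R_{v_1}, \dots, R_{v_{m'}}$ to the grandchildren. Here the ordering condition is decisive: every sibling $u_i$ with $\ord(u_i) \le \ord(b)$ is $r$-ground, so it is checked as a closed condition against the current automaton and contributes no constraint that travels through $r$; only strictly higher-order material propagates.

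Completeness is clean thanks to the fixpoint. I induct on the length of a witnessing alternating derivation $\set{(p, t)} \goesto^*_1 C$: the base case $(p, t) \in C$ is handled by $\ATA A_0 \subseteq \ATA A_\infty$, and in the inductive step the first step uses a rule $p, l \rewritesto S, r$ with $t = l\sigma$, so by the induction hypothesis each $(q, r\sigma)$ is accepted from $q \in S$ in $\ATA A_\infty$; restricting such a run to the positions of $r$ yields a pattern run that, because $\ATA A_\infty$ is saturated, already triggered the transition(s) for $l$, and plugging the sub-runs on the subtrees $\sigma(\var x)$ back in produces an accepting run of $t$ from $p$. Soundness is the dual inclusion: from an accepting run of $(p, t)$ whose root uses a transition added for $p, l \rewritesto S, r$ one reads off $t = l\sigma$, reconstructs a run of $r\sigma$ from $S$, and closes the loop with $(p, l\sigma) \rewritesto_\Sys S \times \set{r\sigma}$. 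The one delicate point here is that $r$ may be non-linear (subtrees are duplicated), so the reconstructed run on $r\sigma$ need not be smaller in tree size; I therefore order accepting runs by the latest saturation round they invoke --- transitions added at a round are justified by runs of strictly earlier automata --- and carry out the induction over this well-founded measure.

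The main obstacle is termination together with the $n$-fold exponential bound, and this is exactly where the ordering condition earns its keep. Without it, the two-level reads of deep rules would let constraints of every order feed back into the creation of new states and transitions, never stabilizing --- matching the undecidability of unrestricted tree lookahead recalled in the introduction. The ordering condition instead stratifies the construction by order: when a lookahead subtree of order $d$ is read destructively, all surrounding material of order $\le d$ is discarded, so the only constraints that can propagate through $r$ and spawn further fresh states come from strictly larger order. Processing orders from $n$ down to $0$, the number of fresh states (equivalently, distinct grandchild-dispatch conditions) needed at order $d$ is at most exponential in those already fixed at orders $> d$; iterating this bound across the $n$ orders produces a tower of exponentials of height $n$. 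Since each saturation step is checkable in time polynomial in the current automaton and the number of steps is bounded by the final number of transitions, the whole procedure terminates in $n$-fold exponential time and outputs an automaton for $\prestar C$.
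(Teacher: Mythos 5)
Your construction, completeness argument, and termination/complexity analysis are essentially the paper's own: your on-demand fresh states for deep rules are the paper's states $(g,P_{k+1},\dots,P_m)$, your completeness proof is the same induction on derivation length, and your order-by-order counting is the paper's bound. The genuine gap is in soundness --- exactly the direction the paper calls subtle. Your well-founded measure, ``the latest saturation round invoked by the run'', does not decrease. Suppose the given accepting run of $(p,t)$ has measure $k$, and its root transition was added at round $j\le k$, justified by a pattern run $\beta$ on $r$ in the round-$(j{-}1)$ automaton. Only $\beta$ lives in an earlier automaton, and it is a run on $r$ with variables as wildcards, not on $r\sigma$. To obtain a run on $r\sigma$ from $S$ you must plug in, at each variable occurrence of $r$, the sub-runs of the \emph{given} run on the trees $\sigma(\var x)$; these are runs of the saturated automaton and may themselves invoke round $k$ (this happens whenever $\sigma(\var x)$ is accepted only thanks to late transitions). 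The reconstructed run then still has measure $k$, and the induction does not go through. Natural repairs fail as well: counting occurrences of round-$k$ transitions breaks for exactly the non-linearity reason you mention (a sub-run invoking round $k$ is duplicated at each occurrence of $\var x$ in $r$), and the lexicographic pair (round, run size) fails because the size can grow. So the real obstacle is not the one you patched (run size), but that sub-runs of maximal round are carried over wholesale into the reconstructed run.

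The idea you are missing is the paper's semantic invariant, which decouples the induction from runs of the final automaton altogether. Interpret every state as a language: $\sem p = \setof{t}{(p,t)\in\prestar{C}}$ for initial $p\in P$; the original language for states of $\ATA A$ outside $P$; the instance language $\lang{p^v}$ for the pattern states; and, for a lookahead state $(g,P_{k+1},\dots,P_m)$ with $g=(q,\,a(u_1,\dots,u_m)\rewritesto S,r)$, the set of trees $t_k$ such that for all siblings $t_i$ drawn from the semantics of the corresponding pattern states and of $P_{k+1},\dots,P_m$, the configuration $(q,a(t_1,\dots,t_m))$ lies in $\prestar{C}$. Call a transition \emph{sound} if it maps semantic arguments to a semantic result. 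One then shows: (i) the original transitions are sound; (ii) every transition added by saturation is sound, by induction on the order of addition, where the inner induction is on the height of the fixed pattern run $\beta$ on $r$, treating the memberships $t_i\in\sem{P_i}$ as hypotheses rather than as runs to recurse into; and (iii) if all transitions are sound then $\lang q\subseteq\sem q$ for every state, by induction on the height of the accepted tree. Since $\prestar{C}$ is closed under prepending a rewrite step, soundness of an added transition is a statement about sets of trees, not about runs, so the circularity you hit disappears. With your last paragraph replaced by this invariant argument, the rest of your proof stands.
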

\noindent
Let $\Sys = \tuple {n, \Sigma, P, \Rules}$ be an AOTPS.
The target set $C$ is given as a tree automaton $\ATA A = \tuple {\Sigma, Q, \Delta}$ s.t.{\@} $\lang{\ATA A, P} = C$.
W.l.o.g.{\@} we assume that in $\ATA A$ initial states (states in $P$) have no incoming transitions.
Classical saturation algorithms for pushdown automata proceed by adding transitions to the original automaton $\ATA A$,
until no more new transitions can be added.
Here, due to the lookahead of the l.h.s.{\@} of deep rules,
we need to also add new states to the automaton.
However, the total number of new states is bounded once the order of the AOTPS is fixed,
which guarantees termination.
We construct a tree automaton 
$\ATA B = \tuple {\Sigma, Q', \Delta'}$
recognizing $\prestar{\lang {\ATA A, P}}$,
where $Q'$ is obtained by adding states to $Q$,
and $\Delta'$ by adding transitions to $\Delta$,
according to a saturation procedure described below.

%
\ignore{
First, 
we add a universal state $p^*$ accepting all trees. 
Formally, let $Q_0 = \set {p^*}$,
and let $\Delta_0$ contain, for every symbol $a$ of rank $n$, 
the following transition
\begin{align}
	\tag*{($\Delta'$-univ)}
	\label{eq:delta':univ}
	p^* \trans a p^* \cdots p^* \in \Delta_0
\end{align}
%
}%
For every rule $(p, l \rewritesto S, r) \in \Rules$ and for every subtree $v$ of $l$
we create a new state $p^v$ of the same order as $v$ recognizing all $\Sigma$-trees\ignore{of order $k$} that can be obtained
by replacing variables in $v$ by arbitrary trees,
i.e., $\lang {p^v} = \setof {v\sigma} {\sigma : \Vars \to \Trees
  \Sigma,\ v\sigma\in\Trees\Sigma}$; recall that the substitution
should respect the order.
Let $Q_0$ be the set of such $p^v$'s, and let $\Delta_0$ contain the required transitions. 
Notice that $\size {Q_0}, \size {\Delta_0} \leq \size {\mathcal R}$.
%

\label{def:B}
In order to deal with deep rules 
we add new states in the following stratified way.
Let $Q'_{n+1}=Q\cup Q_0$. We define sets
$Q'_{n},\dots,Q'_1$ inductively starting with $Q'_{n}$. Assume
that $Q'_{i+1}$ is already defined. We make $Q'_i$ contain $Q'_{i+1}$.
Then we add to $Q'_i$ states for every deep rule $g\in \Rules_i$ of the form
$p,a(u_1,\dots,u_k,b(\dots),u_{k+1},\dots,u_m)\rewritesto S,r$, with
$\ord(b)=i$. For simplicity of notation, let us suppose that 
$u_1,\dots,u_k$ are of order at most $\ord(b)$, and that
$u_{k+1},\dots,u_m$ are of order strictly greater than $\ord(b)$%
\footnote{This assumption is w.l.o.g.{\@} since one can always add shallow rules to reorder subtrees and put them in the required form.}.
We add to
$Q'_i$ states:
\begin{equation*}
  (g,P_{k+1},\dots,P_m)\in Q'_i\qquad\text{for all $P_{k+1},\dots,P_m\subseteq Q'_{i+1}$.}
\end{equation*}
In particular, to $Q_n$ we add states of the form $(g)$ since $n$ is the
maximal order. We define the set of states in $\ATA B$ to be $Q' := Q'_1$.


%
We add transitions to $\ATA B$ in an iterative process until no more transitions can be added.
During the saturation process, we maintain the following invariant:
\emph{For $1 \leq i \leq n$, states in $Q'_i\setminus Q'_{i+1}$ recognize only trees of order $i$}.
Therefore, $\ATA B$ is also an ordered tree automaton.  Formally,
$\Delta'$ is the least set containing $\Delta \cup \Delta_0$ and
closed under adding transitions according to the following procedure.
Take a deep rule
\begin{equation*}
	g=(p,a(u_1,\dots,u_k,b(v_1,\dots,v_{m'}),u_{k+1},\dots,u_m)\rewritesto S,r) \in \Rules_{\ord(b)}
\end{equation*}
and assume as before that the order of $u_j$ is at most $\ord(b)$ for $j\leq
k$, and strictly bigger than $\ord(b)$ otherwise. We
consider a run tree $t$ from $S$ on $r$ in $\ATA B$. For every
$j=1,\dots,m$ we set:
$P^t_j=\set{p^{u_j}}$ if $u_j$ is $r$-ground, and
$P^t_j= \bigcup t(r^{-1}(\var x))$ if
$u_j=\var x$ is a variable appearing in $r$.
The set $\bigcup t(r^{-1}(\var x))$ collects all states of $\ATA B$ from which the subtree for which $\var x$ can be replaced must be accepted.
%
Moreover, for the lookahead 
subtree $b(v_1, \dots, v_{m'})$,
we let $P_b^t = \set{(g, P^t_{k+1},\dots,P^t_m)}$.
Analogously, we define $S^t_1,\dots,S^t_{m'}$ considering
$v_1,\dots,v_{m'}$ instead of $u_1,\dots,u_m$. Then, we add two transitions:
\begin{equation}\label{eq:rules}
  	p \trans a P^t_1 \cdots P^t_k P_b^t P_{k+1}^t\cdots P^t_m\qquad\text{and}\qquad
        (g,P^t_{k+1},\dots,P^t_m)\trans b S^t_1\dots S^t_{m'}\ .
\end{equation}
Thanks to the ordering condition, $P^t_{k+1},\dots,P^t_m\subseteq
Q'_{\ord(b)+1}$, so $(g,P^t_{k+1},\dots,P^t_m)$ is indeed a state in
$Q'_{\ord(b)}$. For a shallow rule $g$ the procedure is the same but
ignoring the part about the $b(v_1,\dots,v_{m'})$ component; so only one rule
is added in this case.

\theoremstyle{plain}
\newtheorem{lem}[theorem]{Lemma}
\begin{restatable}[Correctness of saturation]{lem}{lemsaturationiscorrect}
	\label{lem:saturation:correct}
	For $\ATA A$ and $\ATA B$ be as above,
	 $\lang {\ATA B, P} = \prestar {\lang {\ATA A,P}}$.
\end{restatable}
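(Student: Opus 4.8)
The plan is to prove the two inclusions $\lang{\ATA B, P} \subseteq \prestar{\lang{\ATA A, P}}$ and $\prestar{\lang{\ATA A, P}} \subseteq \lang{\ATA B, P}$ separately. Since $\prestar{C} = \setof{c}{\set c \goesto^*_1 C}$ is defined via the reflexive-transitive closure $\goesto^*_1$, each inclusion will be an induction: for soundness, an induction on the saturation process (i.e.\ on when a transition is added to $\Delta'$), and for completeness, an induction on the number of rewriting steps witnessing membership in $\prestar{\lang{\ATA A, P}}$.

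\textbf{Completeness} ($\prestar{\lang{\ATA A,P}} \subseteq \lang{\ATA B,P}$). The base case is $(p, t) \in \lang{\ATA A, P}$ itself, which holds since $\Delta \subseteq \Delta'$ and initial states are preserved. For the inductive step, suppose $(p, t) \rewritesto_\Sys S \times \set u$ with $(q, u) \in \prestar{\lang{\ATA A,P}}$ accepted by $\ATA B$ for every $q \in S$; so $t = l\sigma$ and $u = r\sigma$ for the rule $g = (p, l \rewritesto S, r)$ and some substitution $\sigma$. By the induction hypothesis there is a run tree of $\ATA B$ from $S$ on $u = r\sigma$. I would restrict this run to the image of $r$ to obtain exactly the run tree $t$ on $r$ used in the saturation procedure, read off the sets $P^t_j$ and $S^t_j$ it induces on the arguments, and then verify that the two added transitions in~\eqref{eq:rules} splice together with accepting runs on each $\sigma$-substituted argument to yield an accepting run of $\ATA B$ on $l\sigma = t$ from $p$. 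The key bookkeeping point is that $P^t_j = \bigcup t(r^{-1}(\var x))$ is precisely the set of states that must accept the tree substituted for $\var x$, which is consistent because the run on $r\sigma$ accepts $\sigma(\var x)$ from each occurrence; for $r$-ground arguments, $\lang{p^{u_j}}$ accepts $u_j\sigma$ by construction of $Q_0$; and for the lookahead subtree the intermediate state $(g, P^t_{k+1}, \dots, P^t_m)$ transitions correctly by the second rule of~\eqref{eq:rules}.

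\textbf{Soundness} ($\lang{\ATA B,P} \subseteq \prestar{\lang{\ATA A,P}}$). Here I would strengthen the statement to an invariant maintained throughout saturation, proved by induction on the stage at which a transition is added. The natural invariant is that every transition in $\Delta'$ is \emph{sound} in the sense that any tree accepted from a state $p \in P$ using that transition at the root is a $\prestar$-predecessor of $\lang{\ATA A, P}$; more generally one needs a claim covering the auxiliary states $p^v$ and $(g, P_{k+1}, \dots, P_m)$. For the newly added transitions from~\eqref{eq:rules}, the argument reverses the completeness reasoning: an accepting run of $\ATA B$ on $t = l\sigma$ that uses the new transition at the root decomposes, via the sets $P^t_j$ and $S^t_j$, into accepting runs on each $\sigma$-substituted argument; reassembling these according to the run tree on $r$ yields an accepting run on $r\sigma = u$, and since the run tree was taken from $S$ in $\ATA B$, each $(q, u)$ for $q \in S$ is already a predecessor, whence $(p, t)$ reaches $\lang{\ATA A, P}$ in one more step.

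\textbf{Main obstacle.} The hardest part will be the soundness direction, and specifically the correct handling of the auxiliary states $(g, P_{k+1}, \dots, P_m)$ attached to the lookahead subtree, since these encode a delayed obligation: the state records which sets $P^t_j$ must later be verified on the higher-order siblings $u_{k+1}, \dots, u_m$, and one must show this obligation is consistently discharged when the $b$-transition fires. The ordering condition is exactly what makes this work --- it guarantees $P^t_{k+1}, \dots, P^t_m \subseteq Q'_{\ord(b)+1}$, so the stratification $Q'_n \subseteq \cdots \subseteq Q'_1$ is respected and the auxiliary states live at the right order, which both bounds their number (ensuring termination, already argued) and ensures the invariant ``states in $Q'_i \setminus Q'_{i+1}$ recognize only trees of order $i$'' is preserved. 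I expect the soundness invariant to require careful formulation so that it is simultaneously strong enough to be useful at the root and inductive enough to survive the splicing of sub-runs through the intermediate lookahead states.
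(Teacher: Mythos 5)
Your completeness half coincides with the paper's own proof: induction on the length of the shortest rewrite sequence, merging the runs from $S$ on $r\sigma$ into a single run tree, restricting it to $r$, and observing that the saturation step triggered by exactly this rule and this run tree supplies the transitions of~\eqref{eq:rules} needed to accept $t = l\sigma$ from $p$. The gap is in the soundness half. Your invariant is phrased in terms of acceptance by $\ATA B$ (``any tree accepted from $p \in P$ using that transition at the root is a predecessor''), and your inductive step ends with: reassembling the sub-runs along the run tree on $r$ yields an accepting run of $\ATA B$ on $r\sigma$ from $S$, hence ``each $(q,u)$ for $q \in S$ is \emph{already} a predecessor''. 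That last inference is exactly the statement being proved, and the induction on the stage at which the root transition was added does not license it: the run tree on $r$ and, worse, the sub-runs on the trees $\sigma(\var x)$ are runs of the \emph{full} automaton $\ATA B$ and may use transitions added at arbitrary --- including later --- stages, so no induction hypothesis is available for the reassembled run. Nothing decreases either: $r\sigma$ can be taller than $l\sigma$, and since $r$ may be non-linear the number of saturation-added transitions occurring in the reassembled run can even grow. As written, the soundness argument is circular.

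The paper breaks this circularity by decoupling the invariant from the automaton's runs. It assigns to every state of $\ATA B$ a semantics $\sem{\cdot} \subseteq \Trees{\Sigma}$ defined directly from the system: $\sem p = \setof{t}{(p,t) \in \prestar{\lang{\ATA A, P}}}$ for $p \in P$, $\mathcal L_{\ATA A}(p)$ resp.\ $\mathcal L_{\ATA B}(p)$ for the other old states, and --- this is the formulation you anticipated but did not supply --- for an auxiliary state $(g, P_{k+1}, \dots, P_m)$ with $g = (q, a(u_1,\dots,u_m) \rewritesto S, r)$ and lookahead subtree $u_k$, the set of trees $t_k \in \sem{p^{u_k}}$ such that for \emph{all} $t_1 \in \sem{p^{u_1}}, \dots, t_{k-1} \in \sem{p^{u_{k-1}}}$ and all $t_{k+1} \in \sem{P_{k+1}}, \dots, t_m \in \sem{P_m}$ one has $(q, a(t_1,\dots,t_m)) \in \prestar{\lang{\ATA A, P}}$; this is well defined by induction on the order, precisely because the ordering condition places $P_{k+1},\dots,P_m$ in higher strata. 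A transition is then declared \emph{sound} if it maps semantics to semantics --- a local condition mentioning no runs. With this, the proof splits into three steps, each well-founded: the initial transitions are sound (\propref{prop:sound:initial}); saturation adds only sound transitions (\propref{prop:sound:inductive}), by induction on the order of addition, with an inner induction on the height of the run tree on $r$ (all of whose transitions were added earlier, hence are sound by the outer hypothesis) showing $r\sigma \in \sem S$; and, separately, if all transitions are sound then $\lang p \subseteq \sem p$ for every state (\propref{prop:sound}), by induction on the height of the accepted tree. Evaluating this last inclusion at $p \in P$ gives the desired $\lang{\ATA B, P} \subseteq \prestar{\lang{\ATA A, P}}$. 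Your plan needs this (or an equivalent) decoupling between ``soundness of a transition'' and ``acceptance by $\ATA B$'' in order to go through.
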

\noindent
The correctness proof, even though short, is presented in App.~\ref{app:correctness}. 
The right-in-left inclusion is by straightforward induction on the number of rewrite steps to reach $\lang{\ATA A, P}$.
The left-in-right inclusion is more subtle, but with an appropriate
invariant of the saturation process it also follows by a direct inspection. 



\subsection{Complexity}

The reachability problem for AOTPSs can be solved using the saturation
procedure from Theorem~\ref{thm:main}. 
For an initial configuration $(p_0, t_0) \in \mathcal C$
and an automaton $\ATA A$ recognizing a regular set of target configurations $\lang{\ATA A, P}$,
we construct  $\ATA B$ as in the previous section,
and then test $(p_0, t_0) \in \lang {\ATA B, P}$. In this section we
will analyze the complexity of this procedure in several relevant cases.
All lower-bounds follow from the reductions presented in Sec.~\ref{sec:examples}.

Let $m > 1$ be the maximal rank of any symbol in $\Sigma$.
Using the notation from the previous subsection,
we have that $\size {Q'_{n+1}} \leq \size Q + \size {\Rules}$,
$\size{Q'_n} \leq \size {Q'_{n+1}} + \size \Rules$, and for every $k\in\{1,\dots,n-1\}$,
$\size {Q'_k} \leq \size {Q'_{k+1}} + \size {\Rules} \cdot 2^{(m-1) \cdot \size {Q'_{k+1}}}
		\leq O\left(\size {\Rules} \cdot 2^{(m - 1) \cdot \size {Q'_{k+1}}}\right)$,
and thus
%
	$\size{Q'} \leq \nexp {n-1}{O((m - 1) \cdot (\size Q + \size {\Rules}))}$,
%
where $\nexp 0 x = x$ and, for $i \geq 0$, $\nexp {i+1} x = 2^{\nexp i x}$.
The size of the transition relation is at most one exponential more than the number of states,
thus $\size{\Delta'} \leq \nexp {n}{O((m - 1) \cdot (\size Q + \size {\Rules}))}$. This implies:
%
	%
	%
	%
%
\begin{theorem}
	\label{thm:complexity}
	Reachability in order-$n$ AOTPSs is \nexptimec n.
\end{theorem}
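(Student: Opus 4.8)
The plan is to prove both a \nexptime{n} upper bound and a matching \nexptimeh{n} lower bound, which together give \nexptimec{n}. The upper bound is essentially a corollary of the saturation machinery already developed. Given an instance consisting of an initial configuration $(p_0,t_0)$ and a target automaton $\ATA A$, I would first apply Theorem~\ref{thm:main} to build the saturated automaton $\ATA B$ with $\lang{\ATA B,P}=\prestar{\lang{\ATA A,P}}$; by the size estimates of this subsection, $\ATA B$ has $\size{\Delta'}\le\nexp{n}{O((m-1)\cdot(\size Q+\size\Rules))}$ transitions and is constructed in $n$-fold exponential time. Reachability then reduces to the membership test $(p_0,t_0)\in\lang{\ATA B,P}$, i.e.\ to checking $p_0\in P$ and $t_0\in\lang{p_0}$.

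The remaining ingredient for the upper bound is that membership of a fixed ground tree in an alternating tree automaton is decidable in polynomial time. I would argue this by a bottom-up computation that marks every node $u$ of $t_0$ with the set of states accepting the subtree rooted at $u$: a state $q$ belongs to the mark of a node $u$ with label $a$ of rank $k$ iff there is a transition $q\trans a P_1\cdots P_k$ with $P_j$ contained in the mark of the $j$-th child of $u$, and $p_0$ accepts $t_0$ iff $p_0$ lies in the mark of the root. This runs in time polynomial in $\size{t_0}\cdot\size{\ATA B}$, hence within $n$-fold exponential time overall, placing reachability in \nexptime{n}. If one prefers not to invoke Theorem~\ref{thm:main} as a black box, the same bound follows by noting that each saturation round adds at least one transition, so there are at most $\size{\Delta'}$ rounds, while each round enumerates the rules of $\Rules$ together with the run trees over the finite right-hand side $r$ in $\ATA B$ --- of which there are at most $(\pow{\size{Q'}})^{\size r}$ --- at a per-round cost that is again $n$-fold exponential; squaring a tower of height $n$ stays $n$-fold exponential.

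For the lower bound I would invoke the order-preserving encodings of Sec.~\ref{sec:examples}, reducing from a model whose reachability is known to be \nexptimeh{n}. The natural source is alternating order-$n$ annotated (equivalently, collapsible) pushdown systems: as recalled in the introduction, reachability for their nondeterministic variant is \nexptimec{(n-1)}, and one exponential higher --- hence $n$-EXPTIME-complete --- in the presence of alternation. Since the encoding of Sec.~\ref{sec:examples} maps an order-$n$ system step-to-step into an order-$n$ AOTPS, it transfers this hardness directly. The main obstacle is precisely this order bookkeeping in the reductions: one must verify that alternation together with the annotation/collapse mechanism is simulated by AOTPS rules \emph{without inflating the order}, so that the lower bound lands at exactly $n$-EXPTIME rather than one level off. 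This is the delicate content deferred to Sec.~\ref{sec:examples}; the polynomial membership argument above is routine by comparison.
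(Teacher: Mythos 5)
Your proposal is correct and follows essentially the same route as the paper: the upper bound comes from constructing the saturated automaton $\ATA B$ of Theorem~\ref{thm:main}, whose size/time bounds are $n$-fold exponential, and then testing membership of the initial configuration, while the lower bound is inherited through the order-preserving encodings of Sec.~\ref{sec:examples} from alternating order-$n$ annotated pushdown systems (equivalently, level-$n$ Krivine machines), whose \nexptimeh{n} reachability is a known result. You merely make explicit two points the paper leaves implicit --- the polynomial-time bottom-up membership check for alternating tree automata and the per-round cost of the saturation loop --- both of which are routine and consistent with the paper's analysis.
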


%
%
\noindent
We identify four subclasses of AOTPSs,
for which the reachability problem is of progressively decreasing complexity.
First, we can save one exponential if we consider control-state
reachability for the class of \emph{non-deterministic, flat} AOTPSs.
A system is \emph{non-deterministic} when for every rule $p, l \rewritesto S, r$, the
set $S$ is a singleton.
A system is \emph{flat} when its rules $p, l \rewritesto S, r$ are flat (defined on page \pageref{def:flat}).
%
\emph{Control-state reachability} of a given set of locations $T\subseteq P$ means that the
language of final configurations is $T\times \Trees \Sigma$.
A proof of the theorem below is presented in App.~\ref{app:complexity:nondeterministic}. 
\begin{restatable}{theorem}{thmcomplexitynondeterministic}
	\label{thm:complexity:nondeterministic}
	Control-state reachability in order-$n$ non-deterministic flat AOTPSs is \nexptimec {(n-1)}, where $n \geq 2$.
\end{restatable}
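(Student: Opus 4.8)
The plan is to prove matching upper and lower bounds. The lower bound, \nexptimeh{(n-1)}, I would obtain for free from the encodings of Sec.~\ref{sec:examples}: an order-$n$ non-alternating annotated (higher-order) pushdown system, whose control-state reachability is known to be \nexptimeh{(n-1)}, is encoded step-by-step by an order-$n$ AOTPS whose rules are flat and have a singleton $S$, i.e.\ by a non-deterministic flat AOTPS, and control locations are preserved, so control-state reachability transfers. Thus the work is in the upper bound, namely showing that the saturation procedure behind Theorem~\ref{thm:main} can be carried out in $(n-1)$-fold exponential time under the three restrictions. (Alternatively one could route this through the encoding of our model into non-alternating Krivine machines with states, hence collapsible pushdown automata, and cite their known $(n-1)$-fold bound; but I would prefer the self-contained saturation analysis.)

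For the upper bound I would reuse the automaton $\ATA B$ built for Theorem~\ref{thm:main} and only re-examine its size. The state count is unaffected by the restrictions: as computed before Theorem~\ref{thm:complexity}, $\size{Q'}\le\nexp{n-1}{O((m-1)\cdot(\size Q+\size\Rules))}$, and for control-state reachability the target automaton $\ATA A$ can be taken with only polynomially many states, namely a single universal state $q_\top$ with $q_\top\trans a q_\top\cdots q_\top$ for every $a$ (accepting all trees, one copy per order), together with the locations $P$, where $p\trans a q_\top\cdots q_\top$ exactly when $p\in T$. The sole reason Theorem~\ref{thm:complexity} pays one extra exponential is that $\ATA B$ is alternating: the added transitions $p\trans a P^t_1\cdots P^t_m$ carry arbitrary subset-valued targets $P^t_j\subseteq Q'$, so $\size{\Delta'}$ is exponential in $\size{Q'}$. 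Hence everything reduces to one claim: under non-determinism, flatness, and control-state reachability, $\ATA B$ may be kept \emph{non-deterministic}, i.e.\ all targets $P^t_j$, $P^t_b$, $S^t_j$ are singletons. Granting this, $\size{\Delta'}\le\size\Sigma\cdot\size{Q'}^{m+1}=\nexp{n-1}{O(\dots)}$, and the final membership test $(p_0,t_0)\in\lang{\ATA B,P}$ is computable bottom-up in time polynomial in $\size{\ATA B}\cdot\size{t_0}$, which yields the \nexptimec{(n-1)} bound.

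I would prove the claim by induction over the saturation, with the invariant that every transition has singleton targets, so that every run tree $t$ considered, started from the singleton $S$, is itself singleton-valued at each node. The base automata are already non-deterministic: $\ATA A$ is, by the choice above, and each auxiliary state $p^v\in Q_0$ recognises either all trees of a fixed order or all trees matching the shallow pattern given by $v$ (since flatness makes the children of $v$ plain variables), which needs only singleton transitions into copies of $q_\top$ and other $Q_0$-states. In the inductive step a ground $u_j$ gives the singleton $P^t_j=\set{p^{u_j}}$, the lookahead gives the singleton $P^t_b=\set{(g,P^t_{k+1},\dots,P^t_m)}$, and a variable $u_j=\var x$ gives $P^t_j=\bigcup t(r^{-1}(\var x))$. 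The last item is the only place where non-linearity of $r$ could produce a genuine union, and I expect it to be the main obstacle. The way around it is exactly control-state reachability: once the subtree bound to a duplicated variable is no longer inspected by any later deep rule, its content is irrelevant to reaching a location in $T$, so all of its occurrences in the run tree can be routed to the single universal state $q_\top$, collapsing $\bigcup t(r^{-1}(\var x))$ to $\set{q_\top}$; flatness helps here because a deep rule inspects only the root symbol of its lookahead subtree, so this routing can be applied uniformly. Making this collapse precise, in particular that one never needs two distinct non-universal states at two copies of the same variable, is the crux, after which the size estimates above deliver \nexptimec{(n-1)} and match the lower bound.
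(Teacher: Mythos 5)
Your lower bound and your diagnosis of where Theorem~\ref{thm:complexity} loses the extra exponential are fine, but the central claim of your upper bound --- that under non-determinism, flatness, and control-state reachability the saturated automaton $\ATA B$ can be kept \emph{non-deterministic} --- is false, and your proposed fix does not repair it. The ``route all duplicated copies to the universal state'' trick is sound only for variables of order exactly $n$: there the ordering condition forces a deep rule inspecting an order-$n$ lookahead subtree to discard every other subtree (all have order $\leq n$), so along the single computation path of a non-deterministic system at most one copy of a duplicated order-$n$ subtree can ever be inspected. For a duplicated variable $\var x$ of order $k<n$ this fails: the r.h.s.\ of a flat rule is unrestricted, so one copy can sit as a child of the root while another is nested inside a sibling of order $>k$; that sibling survives the deep rule inspecting the first copy, and a later step of the same (non-alternating) computation can then inspect the second copy as well. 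Hence both constraints genuinely matter, $\bigcup t(r^{-1}(\var x))$ is a genuine union, and collapsing it to $\set{q_\top}$ makes saturation unsound (it would declare configurations to be in $\prestar{\lang{\ATA A, P}}$ that are not). Your claim is also refutable on purely complexity-theoretic grounds: if $\ATA B$ were non-deterministic, then exactly as in the proof of Theorem~\ref{thm:complexity:linear:nondet} only states of the form $(g,\set{p_{k+1}},\dots,\set{p_m})$ would ever be used, and the procedure would run in doubly exponential time for \emph{every} $n$; since the encodings of non-deterministic order-$n$ annotated pushdown systems (Sec.~\ref{sec:annotated:pda}) are flat, non-deterministic, and \nexptimeh{(n-1)}, this contradicts the time hierarchy theorem for $n\geq 4$. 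So the invariant you want to maintain by induction over saturation steps cannot hold.

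The paper's proof keeps $\ATA B$ alternating and instead maintains a much weaker invariant, confined to order $n$: the automaton stays \emph{non-$n$-alternating}, meaning its states split into $Q_-$ (closed under transitions and containing a single, universal order-$n$ state $p^n$) and $Q_+$ (from which every transition has at most one $Q_+$ target in total). Non-determinism of the system ($\size S = 1$) guarantees that in every run tree used by saturation the $Q_+$ states lie along a single path, so for order-$n$ variables $\var x$ the set $\bigcup t(r^{-1}(\var x))$ contains at most one state other than $p^n$ --- this is the rigorous form of your ``only one copy matters'' intuition, and it is provable only at order $n$. The exponential is then saved in the \emph{state} count, not the transition count: the lookahead states $(g,P_{k+1},\dots,P_m)$ of level $n-1$, whose components are sets of order-$n$ states, need only be kept when $\sum_j \size{P_j\setminus\set{p^n}}\leq 1$, so level $n-1$ stays polynomial, the tower of exponentials starts one level lower, and one ends with $\nexp{n-2}{\cdot}$ states and $\nexp{n-1}{\cdot}$ transitions; transitions remain exponential in the number of states because alternation persists at orders below $n$. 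If you rewrite your argument, this restricted invariant (together with the bookkeeping of which lookahead states are actually needed) is what your induction should establish, rather than full non-determinism of $\ATA B$.
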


\noindent
Second, we consider the class of \emph{linear} non-deterministic systems.
Suppose that we consider \emph{non-deterministic reachability}, i.e., that $\ATA A$ is non-deterministic. 
When $\Sys$ is linear, i.e., variables in the r.h.s.{\@} of rules in $\Rules$ appear exactly once,
then all $P^t_i$'s and $S^t_i$'s in~\eqref{eq:rules} are singletons,
and thus $\ATA B$ is also non-deterministic.
Consequently, the only states from $Q_i'\setminus Q'_{i+1}$ that are
used by rewriting 
rules have the form $(g,\set{p_{k+1}},\dots,\set{p_m})$ for $p_{k+1},\dots,p_m\in
Q'_{i+1}$. Therefore, there are at most 
$O (\left(\size Q+\size {\Rules}\right)^{(m-1)^{n}})$ states
and $O(\size {\Rules} \cdot {\size {Q'}}^m)$ transitions,
and $\ATA B$ is thus doubly exponential in $n$. 
	%
	%
%
\begin{theorem}
	\label{thm:complexity:linear:nondet}
	Non-deterministic reachability in linear non-deterministic AOTPSs is \twoexptimec.
\end{theorem}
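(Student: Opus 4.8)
The plan is to prove matching upper and lower bounds. For the upper bound I show that, under the linearity and non-determinism hypotheses, the saturated automaton $\ATA B$ from Theorem~\ref{thm:main} has size only doubly exponential in $\size\Sys$, and that membership can then be tested in polynomial time; for the lower bound I invoke the encoding of ordered multi-pushdown systems from Sec.~\ref{sec:examples}. The heart of the upper bound is an invariant maintained throughout saturation: \emph{$\ATA B$ stays non-deterministic}. I would prove this by induction on the saturation steps. Initially $\Delta \cup \Delta_0$ is non-deterministic, since $\ATA A$ is non-deterministic by assumption and the transitions of $\Delta_0$ recognizing the languages $\lang{p^v}$ are non-deterministic by construction. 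For the inductive step, consider the transitions added in~\eqref{eq:rules} for a rule $g$. Because $\Sys$ is non-deterministic the set $S$ is a singleton, so any run tree $t$ from $S$ on $r$ in the (inductively) non-deterministic $\ATA B$ assigns a singleton $t(u)$ to every node $u$. Hence $P^t_j = \set{p^{u_j}}$ when $u_j$ is $r$-ground, while if $u_j = \var x$ then linearity of $\Sys$ forces $\var x$ to occur exactly once in $r$, so $r^{-1}(\var x)$ is a single node and $P^t_j = \bigcup t(r^{-1}(\var x))$ is again a singleton; the same holds for the $S^t_j$. Thus both transitions in~\eqref{eq:rules} have singleton component sets, and non-determinism is preserved.

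It follows that the only states of $Q'_i \setminus Q'_{i+1}$ that ever occur in a transition have the shape $(g, \set{p_{k+1}}, \dots, \set{p_m})$ with $p_{k+1}, \dots, p_m \in Q'_{i+1}$, so all states of this layer with a genuinely non-singleton component can be discarded. Writing $s_i = \size{Q'_i}$, this yields the recurrence $s_i \leq s_{i+1} + \size\Rules \cdot s_{i+1}^{m-1}$, in place of the exponential $s_i \leq s_{i+1} + \size\Rules \cdot 2^{(m-1)s_{i+1}}$ of the general alternating case, with $s_{n+1} \leq \size Q + \size\Rules$. Unrolling these $n$ polynomial-degree steps gives $\size{Q'} = O((\size Q + \size \Rules)^{(m-1)^n})$ and $\size{\Delta'} = O(\size\Rules \cdot \size{Q'}^{m})$. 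Since $m, n \leq \size\Sys$, the exponent $(m-1)^n$ is only \emph{singly} exponential in $\size\Sys$, so $\ATA B$ has doubly-exponential size and, as saturation adds at least one transition per round, is computable in doubly-exponential time. Finally, because $\ATA A$ is non-deterministic we are testing non-deterministic reachability, i.e.\ membership $(p_0, t_0) \in \lang{\ATA B, P}$; for the non-deterministic automaton $\ATA B$ this reduces to deciding the existence of a run tree on the \emph{fixed} tree $t_0$, which is doable in time polynomial in $\size{t_0}\cdot\size{\ATA B}$. The whole procedure therefore runs in doubly-exponential time, giving membership in \twoexptimec.

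For the matching lower bound, I reduce from reachability in ordered multi-pushdown systems, which is \twoexptimeh~\cite{AtigBolligHabermehl:Ordered:2008}. Section~\ref{sec:examples} provides a direct, step-to-step encoding of ordered multi-pushdown systems into AOTPSs; it then remains to observe that this encoding produces \emph{linear} and \emph{non-deterministic} systems, since ordered multi-pushdown systems have no alternation (so every $S$ is a singleton) and their stack operations never duplicate subtrees (so every r.h.s.\ is linear). Composing the encoding with the hardness of the source problem yields \twoexptimeh for our class, and together with the upper bound this establishes \twoexptimec.

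The main obstacle is the non-determinism-preservation invariant of the upper bound: it is precisely this property that collapses the per-layer power-set blow-up of the general construction into a mere polynomial-degree blow-up, thereby turning the $n$-fold exponential bound of Theorem~\ref{thm:complexity} into a doubly-exponential one. The delicate points are to check that linearity of $\Sys$ together with non-determinism of $\ATA A$ really does keep every $P^t_j$ and $S^t_j$ a singleton across all saturation rounds (so that states with set-valued components are genuinely unreachable and may be pruned without affecting $\lang{\ATA B, P}$), and then to confirm carefully that the resulting exponent $(m-1)^n$ contributes exactly one extra exponential rather than $n$ of them.
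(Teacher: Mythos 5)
Your proposal is correct and follows essentially the same route as the paper: the paper's own argument is exactly that linearity of $\Sys$ together with non-determinism of $\ATA A$ makes all $P^t_i$'s and $S^t_i$'s in~\eqref{eq:rules} singletons, so $\ATA B$ stays non-deterministic, only states of the form $(g,\set{p_{k+1}},\dots,\set{p_m})$ with $p_{k+1},\dots,p_m\in Q'_{i+1}$ are ever used, yielding $O\left((\size Q+\size{\Rules})^{(m-1)^{n}}\right)$ states and hence a doubly-exponential automaton, with the matching lower bound inherited from the encoding of ordered multi-pushdown systems in Sec.~\ref{sec:examples}. Your explicit induction over saturation rounds, the unrolled recurrence, and the polynomial membership test are simply more detailed write-ups of the same steps the paper states in compressed form.
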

%
%
The next simplification is when the system is \emph{shallow} in the
sense that it does not have deep rules. In this case we do not need to
add states recursively ($Q':= Q \cup Q_0$), and we thus avoid the multiple exponential blow-up.
Similarly, when the system is \emph{unary}, i.e., the maximal rank is $m=1$,
only polynomially many states are added.
\begin{theorem}
	\label{thm:complexity:shallow:unary}
	Reachability in shallow as well as in unary AOTPSs is \exptimec.
\end{theorem}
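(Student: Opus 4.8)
The plan is to establish membership in \exptime for both subclasses by specializing the generic state- and transition-count analysis from the complexity subsection, and to obtain \exptimeh by appealing to the encodings of Sec.~\ref{sec:examples}; together these give \exptimec.

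First I would bound $\size{Q'}$. In the shallow case there are no deep rules, so the stratified construction of $Q'_n,\dots,Q'_1$ adds nothing: we simply take $Q' = Q \cup Q_0$, and since $\size{Q_0}\leq\size\Rules$, the state set has polynomial size $O(\size Q + \size\Rules)$. In the unary case $m=1$, so the single source of blow-up in the recurrence, the factor $2^{(m-1)\cdot\size{Q'_{k+1}}}$, collapses to $1$; the recurrence degenerates to $\size{Q'_k}\leq\size{Q'_{k+1}}+\size\Rules$, and summing over the $n$ strata (with base $\size{Q'_{n+1}}\leq\size Q+\size\Rules$) yields $\size{Q'}\leq\size Q+(n+1)\size\Rules$, again polynomial. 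In either regime $\size{Q'}$ is polynomial in $\size\Sys + \size{\ATA A}$.

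Second, from the polynomial bound on $\size{Q'}$ I would derive a single-exponential bound on the whole construction. Every transition of $\ATA B$ has the form $p\trans a P_1\cdots P_m$ with each $P_i\subseteq Q'$, so there are at most $\size{Q'}\cdot\size\Sigma\cdot 2^{m\cdot\size{Q'}}$ candidate transitions, i.e.\ at most exponentially many. The saturation loop adds at least one transition per round and reaches a fixpoint after at most exponentially many rounds; deciding $(p_0,t_0)\in\lang{\ATA B,P}$ then reduces, since $p_0\in P$, to testing $t_0\in\lang{p_0}$ in the alternating automaton $\ATA B$, which is solved in time polynomial in $\size{\ATA B}\cdot\size{t_0}$ by the usual bottom-up evaluation of the set of states accepting each subtree of $t_0$. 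Composing, reachability is in \exptime.

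Third, for the lower bound I would invoke Sec.~\ref{sec:examples}: the encodings there of alternating (word) pushdown systems yield AOTPSs that are simultaneously shallow and unary (rank-$1$ trees are words), and reachability for alternating pushdown systems is \exptimeh, so both subclasses inherit \exptimeh; with the matching upper bound we conclude \exptimec. The main obstacle I expect is not the hardness direction, which is a citation, but certifying that each saturation round genuinely stays within exponential time: one must check that searching for run trees of $\ATA B$ on the finite right-hand sides $r$ costs at most exponentially, which holds because each $r$ has polynomial size while $\ATA B$ has at most exponentially many transitions, so the number of candidate run trees is at most $(\text{single exponential})^{\size r}$, i.e.\ still a single exponential. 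Verifying this, together with the polynomial bound on $\size{Q'}$ controlling the number of added transitions, is exactly what guarantees the overall \exptime bound rather than a larger one.
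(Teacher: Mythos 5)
Your proposal is correct and follows essentially the same route as the paper: polynomially many states (since shallowness removes the recursive state additions, and $m=1$ kills the exponential factor in the recurrence), hence a single-exponential transition relation and saturation time, with \exptimeh inherited via the Sec.~\ref{sec:examples} encoding of alternating pushdown systems as shallow unary AOTPSs. The extra details you supply (round-by-round termination bound, bottom-up membership test, run-tree enumeration cost) are exactly the steps the paper leaves implicit.
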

If moreover the system is non-deterministic, then we get \ptime\
complexity, provided the rank of the letters in the alphabet is bounded.
\begin{theorem}
	\label{thm:complexity:shallow:unary:nondeterministic}
	Non-deterministic reachability in unary non-deterministic AOTPSs and in shallow non-deterministic AOTPSs of fixed rank is in \ptime.
\end{theorem}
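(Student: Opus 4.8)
The plan is to run the saturation construction of Theorem~\ref{thm:main} and to argue that, under the hypotheses of the statement, the resulting automaton $\ATA B$ has polynomially many states and transitions and is produced after polynomially many rounds. Reachability then reduces to building $\ATA B$ (with $\lang{\ATA B, P} = \prestar{\lang{\ATA A, P}}$) and testing membership $t_0 \in \lang{p_0}$ in $\ATA B$, and membership in a tree automaton is polynomial in $\size{\ATA B}$ and $\size{t_0}$. So it suffices to bound $\size{Q'}$ and $\size{\Delta'}$ polynomially and to bound the number of saturation rounds polynomially.

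First I would bound the states. For shallow systems there are no deep rules, so no states are added recursively and $Q' = Q \cup Q_0$ with $\size{Q_0} \leq \size{\Rules}$; hence $\size{Q'}$ is polynomial. For unary systems the maximal rank is $m = 1$, so a deep rule $a(b(\dots))$ has no sibling of order greater than $\ord(b)$ and contributes only the single state $(g)$; thus $\size{Q'} \leq \size{Q} + O(\size{\Rules})$ is again polynomial. The invariant that states in $Q'_i\setminus Q'_{i+1}$ recognize only trees of order $i$ is inherited from Theorem~\ref{thm:main}, so $\ATA B$ is still an ordered automaton.

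Next I would bound the transitions, and this passes through showing that $\ATA B$ is \emph{non-deterministic}, so that every added transition $p \trans a P^t_1 \cdots P^t_R$ (and the lookahead transition, in the unary case) has singleton components. Since $\ATA A$ is non-deterministic and every rule is non-deterministic ($S$ a singleton), every run tree $t$ from $S$ on $r$ assigns a singleton to each node; the only way a component $P^t_j = \bigcup t(r^{-1}(\var x))$ can fail to be a singleton is when $\var x$ occurs more than once in $r$. For unary systems this never happens, because a unary right-hand side is a chain with a unique leaf and is therefore automatically linear, so each union collapses to a singleton. For shallow systems the same conclusion follows by the reasoning of Theorem~\ref{thm:complexity:linear:nondet}, namely that non-determinism of the rules and of $\ATA A$ keeps each freshly added transition non-deterministic; by induction on the saturation rounds $\ATA B$ stays non-deterministic. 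A non-deterministic ordered automaton with $\size{Q'}$ states and maximal rank $m$ has at most $\size\Sigma \cdot \size{Q'}^{\,m+1}$ transitions, which is polynomial for fixed $m$ (the shallow case) and for $m = 1$ (the unary case, where no rank bound is even needed). Finally, each round either adds a new transition or halts, giving at most $\size{\Delta'}$ rounds; within a round, finding an applicable rule together with a run tree witnessing a new transition is a polynomial fixpoint computation over the bounded-size trees occurring in the rules, because $\ATA B$ is non-deterministic.

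The main obstacle is exactly the third step: guaranteeing that saturation never creates a genuinely alternating transition, i.e.\ that the unions $\bigcup t(r^{-1}(\var x))$ remain singletons throughout the process. This is where non-determinism of both the target automaton and the rules is essential, and where the unary restriction pays off through the automatic linearity of chain-shaped right-hand sides; if a variable could contribute several distinct states, the subset-like components would grow across rounds and the number of transitions would become exponential, which is precisely why the general shallow and unary bounds of Theorem~\ref{thm:complexity:shallow:unary} are only exponential. Once non-determinism is secured, the polynomial bounds on states, on transitions, and on the number of rounds combine to give a polynomial-time construction of $\ATA B$, and hence a polynomial-time reachability test.
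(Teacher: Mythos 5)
Your overall route is the same as the paper's (which only sketches this proof in the text preceding the statement): run the saturation of Theorem~\ref{thm:main}, observe that shallowness or unarity keeps the state set $Q'$ polynomial, and then argue that non-determinism keeps $\ATA B$ non-deterministic, so that the number of transitions is at most $\size\Sigma\cdot\size{Q'}^{m+1}$, polynomial for fixed rank $m$. Your unary half is correct, and your key observation there --- a unary right-hand side is a chain with a single leaf, hence automatically linear, so every component $\bigcup t(r^{-1}(\var x))$ is a singleton --- is exactly what makes the argument of Theorem~\ref{thm:complexity:linear:nondet} applicable, and it also gives the polynomial state bound since the lookahead states degenerate to $(g)$.

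The gap is in the shallow half. Preservation of non-determinism does \emph{not} ``follow by the reasoning of Theorem~\ref{thm:complexity:linear:nondet}'': that reasoning needs \emph{linearity} of the right-hand sides, which shallow systems do not guarantee, and non-determinism of the rules and of $\ATA A$ alone does not supply it. Take the paper's own example as a non-deterministic shallow rule $p, a(\var x,\var y) \rewritesto \set{q}, c(a(\var x,\var y),\var x)$, with $\ATA A$ containing $q \trans c q_1\, q_2$ and $q_1 \trans a q_3\, q_4$. The run tree from $\set q$ on the right-hand side labels the two occurrences of $\var x$ with $\set{q_3}$ and $\set{q_2}$, so saturation adds the transition $p \trans a \set{q_2,q_3}\set{q_4}$, which is genuinely alternating; in later rounds run trees then carry non-singleton labels, the component sets can keep growing, and your polynomial bounds on $\size{\Delta'}$ and on the per-round work collapse. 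Nor can this be repaired by a finer counting argument: a single duplicating shallow rule $p,\sigma(\var x) \rewritesto \set{q_f}, b(\cdots b(\sigma(\var x),\sigma(\var x))\cdots)$ producing $n$ copies of $\sigma(\var x)$, combined with rules that first guess an arbitrary word on the pushdown, lets a non-deterministic target automaton run $n$ different NFAs on the $n$ identical copies, so non-deterministic reachability for non-linear shallow rank-$2$ systems encodes intersection non-emptiness of $n$ finite automata, a PSPACE-hard problem. So the shallow half of the argument genuinely needs the right-hand sides to be linear (the hypothesis of Theorem~\ref{thm:complexity:linear:nondet}); once that is assumed, your argument --- then identical to your unary argument --- goes through.
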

%

%




\subsection{Expressiveness}
\label{sec:expressiveness}

In the next section we give a number of examples of systems that can
be directly encoded in AOTPSs. Before that, we would like to underline that
AOTPSs can themselves be encoded into collapsible pushdown systems. 
We formally formulate this equivalence in terms of Krivine machines with states,
which are defined later in Sec.~\ref{sec:krivine:machine}.
The details of this reduction are presented in App.~\ref{app:aotps2km}.

\begin{restatable}{theorem}{thmaotpstokm}
	\label{thm:aotps2km}
	Every AOTPS of order $n$ can be encoded in a Krivine machine with states of the same level
	s.t.{\@} every rewriting step of the AOTPS corresponds to a number of reduction steps of the Krivine machine.
\end{restatable}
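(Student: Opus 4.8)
The plan is to give a \emph{direct, step-to-step simulation} of the AOTPS $\Sys=\tuple{n,\Sigma,P,\Rules}$ by a Krivine machine with states. The key design decision is to keep each control location $p\in P$ as a machine state: this is exactly what targeting Krivine machines \emph{with} states buys us, and it avoids the main complication of the reverse encoding discussed in the related-work section, namely representing the finite control by tuples of terms. The ordered tree $t$ of a configuration $(p,t)$ is encoded as a simply-typed term $\sem{t}$ whose applicative skeleton mirrors $t$, and root rewriting of $\Sys$ is realised by head reduction of the machine, so that one rewrite step is matched by a bounded number of reduction steps. Throughout, types are chosen so that an order-$k$ symbol is manipulated by level-$k$ machinery, and the overall goal is to show that the resulting machine has level exactly $n$.

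First I would fix the type discipline. Because the order of a tree is determined solely by its root symbol, I assign to each symbol a type whose level equals its order and encode a tree $a(t_1,\dots,t_m)$ as the application of the object denoting $a$ to $\sem{t_1},\dots,\sem{t_m}$. The subtlety is that the level must not grow as trees are assembled; I ensure this by the same discipline that underlies the ordering condition, so that the level of $\sem{t}$ stays bounded by $n$. With this in place, \emph{shallow} rules are immediate: a rule $p,a(u_1,\dots,u_m)\rewritesto S,r$ becomes a transition that, in state $p$ and facing head symbol $a$, binds the children against the pattern $u_1,\dots,u_m$ and continues with $\sem{r}$ in the set of states $S$. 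Non-linearity of $r$ (duplicated subtrees) is handled for free by using a bound variable several times, and the branching into $S$ is provided by the alternating control of the machine.

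The hard part will be the simulation of \emph{deep} rules
\[
  p,a(u_1,\dots,u_k,b(v_1,\dots,v_{m'}),u_{k+1},\dots,u_m)\rewritesto S,r,
\]
which require a destructive lookahead inside the designated child. Here the machine must (i) detect that the head of the $(k+1)$-st child is $b$, (ii) make the components $v_1,\dots,v_{m'}$ together with the higher-order siblings $u_{k+1},\dots,u_m$ available for building $\sem{r}$, and (iii) discard the siblings $u_j$ with $\ord(u_j)\le\ord(b)$. Step (iii) is precisely where the \emph{ordering condition} is used: since every such low-order sibling is $r$-ground it does not occur in $r$ and may be dropped, so the closure carrying the lookahead back up retains only material of order strictly greater than $\ord(b)$. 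This is what keeps the level of the continuation bounded and, inductively over decreasing $\ord(b)=i$ (mirroring the stratified construction $Q'_{n+1}\supseteq\cdots\supseteq Q'_1$ used in the saturation automaton), keeps the overall level at $n$. Concretely, I would force evaluation of the $(k+1)$-st child by putting it in head position; its encoding reduces to expose the constant $b$, whereupon a transition keyed on $b$ binds $v_1,\dots,v_{m'}$ and resumes the construction of $\sem{r}$. The delicate book-keeping of threading the surviving higher-order siblings through this lookahead without raising the type level is where most of the technical work lies, and is the main obstacle.

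Finally I would establish correctness by a simulation invariant: whenever $(p,t)$ is encoded by a machine configuration $\kappa$, we have $(p,t)\rewritesto_\Sys S\times\set{u}$ iff $\kappa$ reaches, in a bounded number of reduction steps, a configuration encoding $S\times\set{u}$, and conversely every machine run decomposes into such blocks. The forward direction is a case analysis over shallow and deep rules as above; the backward direction uses that the only nondeterminism of the machine corresponds to the choice of an applicable rule, so that no spurious runs are introduced. Since the construction assigns level-$k$ types to order-$k$ symbols and, by the argument of the previous paragraph, never exceeds level $n$, the resulting Krivine machine with states has the same level $n$ as $\Sys$, which is the claim.
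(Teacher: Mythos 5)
Your high-level intuition is right in two places (the ordering condition is what keeps the type level bounded, and deep rules are the crux), but the plan as stated has a concrete error and leaves the two ideas that actually carry the paper's proof unsupplied. First, your type discipline is inverted. You assign level-$k$ types to order-$k$ symbols; the correct correspondence (used consistently in the paper, both here and in the Krivine-to-AOTPS direction, where $\ord(\alpha)=l-\level{\alpha}$) is that an order-$k$ tree is encoded by a term of level at most $n-k$. This inversion is not cosmetic: a symbol of low order may have children of order up to $n$ (e.g.\ the root $\bullet$ of order $1$ with children of all orders $1,\dots,n$ in the multi-pushdown encoding), so under your convention its encoding would need a type of level $n+1$, already breaking the claimed bound. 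Moreover, the level accounting for deep rules works precisely \emph{because} of the inversion: when a deep rule inspects a child of order $i$, the surviving siblings have order $>i$, hence level $<n-i$, so the continuation that receives them as arguments stays at level $\leq n-i$; with your convention the continuation's level would exceed the child's and grow under nesting.

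Second, two essential mechanisms are missing. (i) A Krivine machine cannot ``bind the children against the pattern $u_1,\dots,u_m$'' when the $u_i$ are arbitrary ground trees: $\beta$-reduction and constant transitions cannot inspect the structure of an argument term. The paper therefore first \emph{flattens} the AOTPS, enriching the alphabet so that every node records which patterns of $\sub(\Rules)$ match its children; only after this are l.h.s.\ inspections reduced to reading one or two root symbols. (ii) For deep rules, the point you defer as ``the main obstacle'' is exactly the content of the construction: every tree must be encoded \emph{simultaneously} in several ways---one basic representation plus one representation $\encP^{b,o',\ihat}(\cdot)$ for each potential parent context---and these are threaded in parallel as extra arguments, so that a parent can hand its surviving high-order siblings to the child's alternative representation, which then simulates the deep rules keyed on its own root symbol. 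Without this (or an equivalent device), step (ii) of your plan does not go through. Finally, note that keeping the control locations $P$ as machine states does not mesh with the model: in a Krivine machine with states, the state influences only transitions at constants, so it cannot guard rule selection during reduction; the paper instead \emph{eliminates} control locations by pushing them into the root symbol and realizes alternation with $\vee/\wedge$ constants read by a one-state tree automaton.
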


\noindent
Since parity games over the configuration graph of the Krivine machine with states are known to be decidable \cite{SalvatiWalukiewicz:InfComp:2014},
this equivalence yields decidability of parity games over AOTPSs.
However, in this paper we concentrate on reachability properties of AOTPSs,
which are decidable thanks to our simple saturation algorithm from Sec.~\ref{sec:saturation}.
No such saturation algorithm was previously known for the Krivine machine with states.

\section{Applications}

\label{sec:examples}

In this section, we give several examples of systems that can be encoded as AOTPSs.
Ordinary alternating pushdown systems (and even prefix-rewrite systems) can be easily encoded as \emph{unary} AOTPSs
by viewing a word as a linear tree; the ordering condition is trivial since symbols have rank $\leq 1$.
Moreover, \emph{tree-pushdown systems} \cite{Guessarian:TPDA:1983} can be seen as \emph{shallow} AOTPSs.
By Theorem~\ref{thm:complexity:shallow:unary}, reachability is in \exptime for both classes,
and, by Theorem~\ref{thm:complexity:shallow:unary:nondeterministic},
it reduces to \ptime for the non-alternating variant (for fixed maximal rank).
%
%
%

In the rest of the section, we show how to encode four more sophisticated classes of systems,
namely \emph{ordered multi-pushdown systems} (Sec.~\ref{sec:ordered:mpds}),
\emph{annotated higher-order pushdown systems} (Sec.~\ref{sec:annotated:pda}),
the \emph{Krivine machine with states} (Sec.~\ref{sec:krivine:machine}),
and \emph{ordered annotated multi-pushdown systems} (Sec.~\ref{sec:oapds}),
and we show that reachability for these models (except the last one) can be decided with tight complexity bounds using our conceptually simple saturation procedure.
%
%
%


\subsection{Ordered multi-pushdown systems}

\label{sec:ordered:mpds}

In an ordered multi-pushdown system there are $n$ pushdowns.
Symbols can be pushed on any pushdown,
but only the first non-empty pushdown can be popped
\cite{BreveglieriCherubiniCitriniCrespi-Reghizzi:Ordered:1996,AtigBolligHabermehl:Ordered:2008,Atig:ordered:2012}.
This is equivalent to saying that to pop a symbol from the $k$-th pushdown,
the contents of the previous pushdowns $1, \dots, k-1$ should be discarded.
Formally, an \emph{alternating ordered multi-pushdown system} is a tuple $\OMPDS = \tuple {n, \Gamma, Q, \Delta}$,
where $n \in \Natpos$ is the \emph{order} of the system (i.e., the number of pushdowns),
$\Gamma$ is a finite pushdown alphabet,
$Q$ is a finite set of control locations,
and $\Delta \subseteq Q \times O_n \times 2^Q$
is a set of rules of the form $(p, o, P)$
with $p \in Q$, $P \subseteq Q$,
and $o$ a pushdown operation in
$O_n := \setof {\opush k a, \opop k a} {1 \leq k \leq n, a \in \Gamma}$.
We say that $\OMPDS$ is \emph{non-deterministic} when $P$ is a singleton for every rule.
A multi-pushdown system induces an alternating transition system $\tuple {\mathcal C_\OMPDS, \goesto_\OMPDS}$
where the set of configurations is $\mathcal C_\OMPDS = Q \times (\Gamma^*)^n$,
and the transitions are defined as follows: 
for every $(p, \opush k a, P) \in \Delta$ there exists a transition $(p, w_1, \dots, w_n) \goesto_\OMPDS P \times \set{(w_1, \dots, a \cdot w_k, \dots, w_n)}$,
and for every $(p, \opop k a, P) \in \Delta$ there exists a transition $(p, w_1, \dots, a \cdot w_k, \dots, w_n) \goesto_\OMPDS P \times \set {(\varepsilon, \dots, \varepsilon, w_k, \cdots, w_n)}$.
%
%
For $c\in\mathcal C_\OMPDS$ and $T \subseteq Q$,
the \emph{(control-state) reachability problem} for $\OMPDS$ asks whether
$c \in \prestar {T \times (\Gamma^*)^n}$.

\paragraph{Encoding.}

We show that an ordered multi-pushdown system can be simulated by an AOTPS.
The idea is to encode the $k$-th pushdown as a linear tree of order $k$,
and to encode a multi-pushdown as a tree of linear pushdowns.
Let $\bot$ and $\bullet$ be two new symbols not in $\Gamma$,
let $\Gamma_\bot = \Gamma \cup \set \bot$,
and let $\Sigma = (\Gamma_\bot \times \set{1,\dots,n})\cup\set{\bullet}$ be an ordered alphabet,
where a symbol $(a, i) \in \Gamma_\bot \times \set i$ has order $i$,
rank $1$ if $a \in \Gamma$
and rank $0$ if $a = \bot$.
Moreover, $\bullet$ has rank $n$ and order $1$.
For simplicity, we write $a^i$ instead of $(a, i)$.
%
%
%
A multi-pushdown $w_1, \dots, w_n$, where each $w_j = a_{j, 1} \dots a_{j, n_j}$
is encoded as the tree
$\enc {} {w_1, \dots, w_n} := \bullet(a^1_{1, 1}(a^1_{1, 2}(\dots \bot^1)), \dots, a^n_{n, 1}(a^n_{n, 2}(\dots \bot^n)))$.
\ignore{
\begin{align*}
	\enc {} {w_1, \dots, w_n} =
	\begin{tikzpicture}[baseline=-2.3ex,scale=0.6,level/.style={sibling distance = 2.5cm}]
		\node {$\bullet$}
			child{ node {$a^1_{1, 1}$} child {node {$a^1_{1, 2}$} child {node {$\vdots$} child {node {$\bot^1$}}}}}
			child{ node {$a^2_{2, 1}$} child {node {$a^2_{2, 2}$} child {node {$\vdots$} child {node {$\bot^2$}}}}}
		    child{ node {$\cdots$} edge from parent[draw=none]}
			child{ node {$a^n_{n, 1}$} child {node {$a^n_{n, 2}$} child {node {$\vdots$} child {node {$\bot^n$}}}}};
	\end{tikzpicture}
\end{align*}
}
%
%
For an ordered multi-pushdown system $\OMPDS = \tuple {n, \Gamma, Q, \Delta}$
we define an equivalent AOTPS $\Sys = \tuple {n, \Sigma, Q, \Rules}$
with $\Sigma$ defined as above,
and set of rules $\Rules$ defined as follows (we use the convention that variable $\var x_k$ has order $k$):
For every push rule $(p, \opush k a, P) \in \Delta$,
we have a rule $(p, \bullet(\var x_1, \dots, \var x_n) \goesto P, \bullet(\var x_1, \dots, a^k(\var x_k), \dots, \var x_n))\in \Rules$,
\ignore{
\begin{align*}
	p, \begin{tikzpicture}[baseline=-2.3ex,scale=0.5,level/.style={sibling distance = 1.4cm}]
		\node {$\bullet$}
			child{ node {$\var x_1$}}
		    child{ node {$\cdots$} edge from parent[draw=none]}
			child{ node {$\var x_n$}};
	\end{tikzpicture}
	\longgoesto
	P, \begin{tikzpicture}[baseline=-2.3ex,scale=0.5,level/.style={sibling distance = 1.4cm}]
		\node {$\bullet$}
			child{ node {$\var x_1$}}
		    child{ node {$\cdots$} edge from parent[draw=none]}
			child{ node {$a^k$} child {node {$\var x_k$}}}
		    child{ node {$\cdots$} edge from parent[draw=none]}		
			child{ node {$\var x_n$}};
	\end{tikzpicture}
\end{align*}
}%
%
and for every pop rule $(p, \opop k a, P) \in \Delta$, we have
$(p, \bullet(\var x_1, \dots, a^k(\var x_k), \dots, \var x_n) \goesto P, \bullet(\bot^1, \dots, \bot^{k-1}, \var x_k, \var x_{k+1}, \dots, \var x_n)) \in \Rules$.
\ignore{
\begin{align*}
	p, \begin{tikzpicture}[baseline=-2.3ex,scale=0.7,level/.style={sibling distance = 1cm}]
		\node {$\bullet$}
			child{ node {$\varord x 1 {}$}}
		    child{ node {$\cdots$} edge from parent[draw=none]}
			child{ node {$\varord x {k-1} {}$}}
			child{ node {$a^k$} child {node {$\varord x k {}$}}}
			child{ node {$\varord x {k+1} {}$}}
		    child{ node {$\cdots$} edge from parent[draw=none]}
			child{ node {$\varord x n {}$}};
	\end{tikzpicture}
	\longgoesto
	P, \begin{tikzpicture}[baseline=-2.3ex,scale=0.7,level/.style={sibling distance = 1cm}]
		\node {$\bullet$}
			child{ node {$\bot^1$}}
		    child{ node {$\cdots$} edge from parent[draw=none]}
			child{ node {$\bot^{k-1}$}}
			child{ node {$\varord x k {}$}}
			child{ node {$\varord x {k+1} {}$}}
		    child{ node {$\cdots$} edge from parent[draw=none]}		
			child{ node {$\varord x {n} {}$}};
	\end{tikzpicture}
\end{align*}
where variable $\varord x i {}$ has order $i$.
This is a valid scan rule since low order variables $\varord x i {}$'s with $i < k$ do not appear on the right-hand side  (and thus semi-ground w.r.t. this rewrite rule).
}
Both kinds of rules above are linear, and the latter one satisfies the ordering condition
since lower-order variables $\var x_1, \dots, \var x_{k-1}$ are discarded.
%
%
It is easy to see that $(p, w_1, \dots, w_n) \goesto_\OMPDS^* P \times \set{(w'_1, \dots, w'_n)}$ if, and only if,
$(p, \enc {} {w_1, \dots, w_n}) \goesto_{\Sys}^* P \times \set{\enc {} {w'_1, \dots, w'_n}}$.
Thus, the encoding preserves reachability properties.
%
%
By Theorem~\ref{thm:complexity}, we obtain an \nexptime n upper-bound for reachability in alternating multi-pushdown systems of order $n$.
Moreover, since $\Sys$ is linear,
and since $\Sys$ is non-deterministic when $\OMPDS$ is non-deterministic,
by Theorem~\ref{thm:complexity:linear:nondet}
we recover the optimal \twoexptimec complexity proved by \cite{AtigBolligHabermehl:Ordered:2008} (cf.{\@} also \cite{Atig:ordered:2012}).
\begin{theorem}[\cite{AtigBolligHabermehl:Ordered:2008}]
	Reachability in alternating ordered multi-pushdown systems is in \nexptime n.
	Reachability in non-deterministic ordered multi-pushdown systems is \twoexptimec.
\end{theorem}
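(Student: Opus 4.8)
The plan is to transport both upper bounds through the encoding $\enc{}{\cdot}$ defined just above the theorem, and to read off the matching lower bound from the literature. The encoding already associates to every ordered multi-pushdown system $\OMPDS = \tuple{n, \Gamma, Q, \Delta}$ an AOTPS $\Sys = \tuple{n, \Sigma, Q, \Rules}$ of the same order $n$, so the first task is to establish the step-by-step correspondence stated above, namely that $(p, w_1, \dots, w_n) \goesto_\OMPDS P \times \set{(w'_1, \dots, w'_n)}$ holds if and only if $(p, \enc{}{w_1, \dots, w_n}) \goesto_\Sys P \times \set{\enc{}{w'_1, \dots, w'_n}}$, and then to lift it to the reflexive-transitive closures $\goesto_\OMPDS^*$ and $\goesto_\Sys^*$ by a routine induction on the number of steps.

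For a single step I would examine the two rule shapes directly. A push rule $(p, \opush k a, P)$ is simulated by the shallow flat rule that prepends $a^k$ to the $k$-th subtree; this turns the encoding of $w_k$ into the encoding of $a \cdot w_k$ and leaves the remaining subtrees untouched. A pop rule $(p, \opop k a, P)$ is simulated by the deep flat rule whose lookahead subtree $a^k(\var x_k)$ inspects the top of the $k$-th pushdown, removes $a^k$, and replaces the lower siblings $\var x_1, \dots, \var x_{k-1}$ by $\bot^1, \dots, \bot^{k-1}$; here the ordering condition holds precisely because these discarded siblings have order strictly below $k = \ord(a^k)$ and do not occur on the right-hand side, whereas the retained child $\var x_k$ sits under the lookahead symbol and is therefore unconstrained. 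The one point requiring care is the invariant that every tree reachable from $\enc{}{w_1, \dots, w_n}$ is again a genuine encoding $\bullet(t_1, \dots, t_n)$ in which each $t_i$ is a linear $\bot^i$-terminated chain of order-$i$ symbols: both rule shapes preserve this shape, so no spurious AOTPS configuration outside the image of $\enc{}{\cdot}$ is ever produced, and the rewrite relation of $\Sys$ restricted to reachable configurations is isomorphic to $\goesto_\OMPDS$.

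With the simulation established, the control-state reachability instance $c \in \prestar{T \times (\Gamma^*)^n}$ for $\OMPDS$ becomes the instance $\enc{}{c} \in \prestar{T \times \Trees{\Sigma}}$ for $\Sys$, whose target set is regular. The alternating upper bound then follows by invoking Theorem~\ref{thm:complexity}: since $\Sys$ has order $n$, its reachability problem, and hence that of $\OMPDS$, lies in \nexptime{n}. For the non-deterministic case I would observe that every rule of $\Sys$ is flat and linear, that $\Sys$ is non-deterministic whenever $\OMPDS$ is, and that the automaton recognizing $T \times \Trees{\Sigma}$ is non-deterministic; Theorem~\ref{thm:complexity:linear:nondet} then gives the \nexptime{2} upper bound. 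The matching \twoexptimeh lower bound is exactly the one already established for ordered multi-pushdown systems in \cite{AtigBolligHabermehl:Ordered:2008}, so non-deterministic reachability is \twoexptimec. The main obstacle is thus not the complexity accounting, which the two cited theorems hand to us, but the faithful-simulation step: one must verify both that the ordering condition on the pop rule is genuinely met and that no reachable tree escapes the encoding, so that $\OMPDS$ and its image $\Sys$ exhibit identical reachability behaviour.
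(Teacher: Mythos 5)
Your proposal is correct and follows essentially the same route as the paper: the same tree encoding (push rules as shallow rules, pop rules as deep rules whose discarded lower-order variables satisfy the ordering condition), the same step-preserving simulation claim, and the same invocation of Theorem~\ref{thm:complexity} for the alternating \nexptime{n} bound and of Theorem~\ref{thm:complexity:linear:nondet} (plus the lower bound of \cite{AtigBolligHabermehl:Ordered:2008}) for the non-deterministic \twoexptimec bound. The paper merely compresses your careful simulation argument into ``it is easy to see,'' so your elaboration of the shape-preservation invariant and the ordering-condition check fills in exactly what the paper leaves implicit.
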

\noindent
Reachability for the alternating variant of the model (in \nexptime n) was not previously known.


\ignore{

\subsection{Saturation rules}

Let $n$ be the number of pushdowns.
Let $\ATA A = \tuple {\set p \cup Q, \Sigma, \Delta}$ be a nondeterministic tree automaton recognizing (finite) ordered multi-pushdowns.
There exists a distinguished initial state $p$ recognizing multi-pushdowns,
and internal states in $Q$ recognizing ordinary pushdowns.
The ordering restriction implies that internal states in $Q$ can be ordered w.r.t. a total partial order $\leq$ of height $n$.
Thus $Q$ can be partitioned into $Q = Q_1 \cup \cdots \cup Q_n$,
where $Q_i$ is the $i$-h equivalence class of $\leq$.
Moreover, $q \leq q'$ iff $q \in Q_i$ and $q' \in Q_j$ with $i \leq j$.
Thus, an initial transition in $\ATA A$ takes the form $p \trans {a(k)} q_k \cdots q_n$ with $q_k < \cdots < q_n$.
Moreover, we can assume w.l.o.g.{\@} that internal transitions in $Q$ start and leave from the same class $Q_i$.

We build a new nondeterministic tree automaton $\ATA A'$ recognizing the predecessors of $\lang {\ATA A}$.
The new automaton $\ATA A' = \tuple {\set p \cup Q', \Sigma, \Delta'} $ is obtained by adding states and transitions to $\ATA A$.
The new set of states $Q'$ extends $Q$ and it is partitioned $Q' = Q_1' \cup \cdots \cup Q_n'$,
where the $Q_j'$'s are defined inductively as follows:
For $i \geq 0$,
\begin{align*}
	Q_{n-i}' := Q_{n-i} \cup \set p \times \Sigma \times Q_{n-(i-1)}' \times Q_{n-(i-2)}' \times \cdots \times Q_n'
\end{align*}
The (new) auxiliary states of the form $(p, q_k' \cdots q_n')$ represent \emph{contexts}.

If $Q$ contains $\size Q = O(m)$ states, then
\begin{align*}
	\size {Q'_n} &= \size {Q_n} = O(m) \\
	\size {Q'_{n-1}} &= \size {Q_{n-1}} + \size {Q_n'} = O(m) \\
	\size {Q'_{n-2}} &= \size {Q_{n-2}} + \size {Q_{n-1}'} \cdot \size {Q_n'} = O(m^2) \\
	\vdots \\
	\size {Q'_{n-i}} &= \size {Q_{n-i}} + \size {Q_{n-(i-1)}'} \cdot \size {Q_{n-(i-2)}'} \cdots \size {Q_n'} = O(m^{2^{i-1}})
\end{align*}
Therefore, $\size {Q'_1} = O(m^{2^{n-2}})$.
Thus, it suffices to add doubly-exponential states in $n$.

\begin{align}
	\tag*{$(\pop 1)$}
	\label{rule:ompds:pop1}
	\frac
	{p \trans {b(k)} q_k \cdots q_n}
	{\begin{array}{l}
		p \trans {a(k)} (p, a(k), q_{k+1} \cdots q_n) q_{k+1} \cdots q_n \\
		(p, a(k), q_{k+1} \cdots q_n) \trans b q_k
	\end{array}}
\end{align}
\begin{align}
	\tag*{$(\pop 2)$}
	\label{rule:ompds:pop2}
	\frac
	{p \trans {b(k+1)} q_{k+1} \cdots q_n}
	{\begin{array}{l}
		p \trans {a(k)} q_\bot (p, a(k), q_{k+2} \cdots q_n) q_{k+2} \cdots q_n \\
		(p, a(k), q_{k+2} \cdots q_n) \trans b q_{k+1}
	\end{array}}
\end{align}
\begin{align}
	\tag*{$(\push b {h \leq k})$}
	\label{rule:ompds:push1}
	\frac
	{p \trans {b(h)} q_\bot \cdots q_\bot q_k \cdots q_n \qquad q_k \trans a q_k'}
	{p \trans {a(k)} q_k' q_{k+1} \cdots q_n}
\end{align}
\begin{align}
	\tag*{$(\push b {h > k})$}
	\label{rule:ompds:push2}
	\frac
	{p \trans {a(k)} q_k \cdots q_n \qquad q_h \trans b q_h'}
	{p \trans {a(k)} q_k \cdots q_h' \cdots q_n}
\end{align}

\subsection{Completeness}

\begin{lemma}
	\label{lem:ompds:complete}
	The saturation procedure is complete, i.e., $\prestar {\lang {\ATA A}} \subseteq \lang {\ATA A'}$.
\end{lemma}

\begin{proof}
	By induction on the length of the shortest path to reach $\lang {\ATA A}$.
\end{proof}

\subsection{Soundness}

\begin{align*}
	\sem p &:= \prestar {\lang {\ATA A}} \\
	\sem q &:= \lang q, \textrm{ with } q \in Q \\
	\sem {(p, a(k), q_{k+1} \cdots q_n)} &:= \setof {w_k}
	{\forall w_{k+1} \in \sem {q_{k+1}} \dots \forall w_n \in \sem {q_n}\cdot a(k)(w_k, \dots, w_n) \in \sem p}
\end{align*}

An internal transition $q \trans a q'$ is \emph{sound} if, whenever $w \in \sem {q'}$, $a(w) \in \sem q$.
Similarly, an initial transition $p \trans {a(k)} q_k \cdots q_n$ is \emph{sound} if,
whenever $w_k \in \sem {q_k} \dots w_n \in \sem {q_n}$,
$a(k)(w_k, \dots, w_n) \in \sem p$.
An automaton is sound if all its transitions are sound.

\begin{lemma}
	\label{lem:ompds:sound:incl}
	In a sound automaton, $\lang q \subseteq \sem q$.
\end{lemma}

\begin{proof}
	If $q$ is an internal state, then one proceeds by induction on the length of accepting runs and using the definition of $\sem{\cdot}$.
	If $q$ is the initial state $p$, and $a(k)(w_k, \dots, w_n) \in \lang p$,
	then there is a sound transition $p \trans {a_k} q_k \cdots q_n$ s.t.{\@}
	$w_k \in \lang {q_k} \dots w_n \in \lang {q_n}$.
	By the first part, $w_k \in \sem {q_k} \dots w_n \in \sem {q_n}$.
	Since the transition was sound, we get $a(k)(w_k, \dots, w_n) \in \sem p$.
\end{proof}

\begin{lemma}
	\label{lem:ompds:sound:init}
	The initial automaton $\ATA A$ is sound.
\end{lemma}

\begin{proof}
	Immediate, since the only internal transitions in $\ATA A$ are those between states in $Q$
	for which $\sem q$ coincides with the language recognized by $q$.
\end{proof}

\begin{lemma}
	\label{lem:ompds:sound:induct}
	The saturation process adds only sound transitions.
\end{lemma}

\begin{proof}
	We proceed by case analysis on the newly added rule.
	\paragraph{Case~\ref{rule:ompds:pop1}.}
	Assume that (1) $p \trans {b(k)} q_k \cdots q_n$ is a sound rule.
	We show that (2) $p \trans {a(k)} (p, a(k), q_{k+1} \cdots q_n) q_{k+1} \cdots q_n$
	and (3) $(p, a(k), q_{k+1} \cdots q_n) \trans b q_k$ are sound rules.
	For (2), assume $w_k \in \sem {(p, a(k), q_{k+1} \cdots q_n)}, w_{k+1} \in \sem {q_{k+1}} \dots w_n \in \sem q_n$.
	By the definition of $\sem {(p, a(k), q_{k+1} \cdots q_n)}$,
	$a(k)(w_k, \dots, w_n) \in \sem p$.
	Notice that we didn't use the fact that (1) is sound, i.e., rules of the form (2) are always sound by construction.
	
	For (3), assume $w_k \in \sem {q_k}$, $w_{k+1} \in \sem {q_{k+1}} \dots w_n \in \sem {q_n}$,
	and we have to show that $a(k)(b (w_k), w_{k+1}, \dots, w_n) \in \sem p$.
	Since (1) is sound, we have $b(k)(w_k, \dots, w_n) \in \sem p$.
	But $a(k)(b (w_k), w_{k+1}, \dots, w_n) \goesto {} b(k)(w_k, \dots, w_n)$
	in one step by one application of rule~\ref{rule:ompds:pop1},
	thus $a(k)(b (w_k), w_{k+1}, \dots, w_n) \in \sem p$.
	
	\paragraph{Case~\ref{rule:ompds:pop2}.}
	Similar.
	
	\paragraph{Case~\ref{rule:ompds:push1}.}
	Assume that (1) $p \trans {b(h)} q_\bot \cdots q_\bot q_k \cdots q_n$ and (2) $q_k \trans a q_k'$ are sound rules,
	and we show that (3) $p \trans {a(k)} q_k' q_{k+1} \cdots q_n$ is a sound rule.
	Let $w_k \in \sem {q_k'}, w_{k+1} \in \sem {q_{k+1}} \dots w_n \in \sem {q_n}$,
	and we have to show that $a(k)(w_k, \dots, w_n) \in \sem p$.
	Since (2) is sound, $a(w_k) \in \sem {q_k}$.
	Since (1) is sound, $b(h)(\bot, \dots, \bot, a(w_k), w_{k+1}, \dots, w_n) \in \sem p$.
	But $a(k)(w_k, \dots, w_n) \goesto {} b(h)(\bot, \dots, \bot, a(w_k), w_{k+1}, \dots, w_n)$
	in one step by one application of rule~\ref{rule:ompds:push1},
	thus $a(k)(w_k, \dots, w_n) \in \sem p$.
	
	\paragraph{Case~\ref{rule:ompds:push2}.}
	Similar.	
\end{proof}

\begin{lemma}
	\label{lem:ompds:sound}
	The saturation procedure is sound, i.e., $\lang {\ATA A'} \subseteq \prestar {\lang {\ATA A}}$.
\end{lemma}

\begin{proof}
	By Lemma~\ref{lem:ompds:sound:init}, the initial automaton $\ATA A$ is sound,
	and, by Lemma~\ref{lem:ompds:sound:induct}, the final automaton $\ATA A'$ is sound, too.
	By Lemma~\ref{lem:ompds:sound:incl} applied to the initial state $p$,
	$\lang {\ATA A'} = \lang p \subseteq \sem p = \prestar {\lang {\ATA A}}$.
\end{proof}

}

\subsection{Annotated higher-order pushdown systems}

\label{sec:annotated:pda}

Let $\Gamma$ be a finite pushdown alphabet. In the following, we fix an order $n \geq 1$,
and we let $1 \leq k \leq n$ range over orders.
For our purpose, it is convenient to expose the topmost pushdown at every order recursively.%
\footnote{Our definition is equivalent to \cite{BroadbentCarayolHagueSerre:Saturation:2012}.}
We define $\Gamma_k$, the set of \emph{annotated higher-order pushdowns (stacks) of order $k$},
simultaneously for all $k\in\{1,\dots,n\}$,
as the least set containing the empty pushdown $\stack\,$,
and, whenever $u_1 \in \Gamma_1, \dots, u_k \in \Gamma_k$, $v_j \in \Gamma_j$ for some $j\in\{1,\dots,n\}$,
then $\stack {a^{v_j}, u_1, \dots, u_k} \in \Gamma_k$.
Similarly, if we do not consider stack annotations $v_j$'s,
we obtain the set of \emph{higher-order pushdowns of order $k$}.
Operations on annotated pushdowns are as follows.
The operation $\apush k b$ pushes a symbol $b \in \Gamma$
on the top of the topmost order-$1$ stack
and annotates it with the topmost order-$k$ stack,
%
$\apush k {}$ duplicates the topmost order-$(k-1)$ stack,
$\apop k$ removes the topmost order-$(k-1)$ stack,
%
and $\acollapse k$ replaces the topmost order-$k$ stack with the order-$k$ stack annotating the topmost symbol:
%
%
\vspace{-.5ex}
	\begin{align*}
	  	&\apusharg {k} b {\stack{a^u, u_1, \dots, u_n}}
			= \stack{b^{\stack{a^u, u_1, \dots, u_k}}, \stack{a^u, u_1}, u_2, \dots, u_n}, \\
		&\apusharg k {} {\stack{a^u, u_1, \dots, u_n}}
			= \stack{a^u, u_1, \dots, u_{k-1}, \stack{a^u, u_1, \dots, u_k}, u_{k+1}, \dots, u_n}, \\
		&\apoparg k {\stack{a^u, v_1, \dots,  v_{k-1}, \stack{b^v, u_1, \dots, u_k}, u_{k+1}, \dots, u_n}} = 
			\stack{b^v, u_1, \dots, u_n}, \\
		&\acollapsearg k {\stack{a^{\stack{b^v, v_1, \dots, v_k}}, u_1, \dots, u_n}}
			= \stack{b^v, v_1, \dots, v_k, u_{k+1}, \dots, u_n}.
		\vspace{-4ex}
	\end{align*}
%
Let $O_n = \bigcup_{k=1}^n \setof {\apush k b, \apush k {}, \apop k, \acollapse k} {b \in \Gamma}$
be the set of stack operations.
Similarly, one can define operations $\apush {} b$ and $\apop k$ on stacks without annotations (but not $\acollapse k$, or $\apush k b$).
An \emph{alternating order-$n$ annotated pushdown system} is a tuple $\APDS = \tuple {n, \Gamma, Q, \Delta}$,
where $\Gamma$ is a finite stack alphabet,
$Q$ is a finite set of control locations,
and $\Delta \subseteq Q \times \Gamma \times O_n \times 2^Q$ is a set of rules. 
%
An \emph{alternating order-$n$ pushdown system} (i.e., without annotations) is as $\APDS$ above,
except that we consider non-annotated stack and operations on non-annotated stacks.
An annotated pushdown system induces a transition system $\tuple {\mathcal C_\APDS, \goesto_\APDS}$,
where $\mathcal C_\APDS = Q \times \Gamma_n$,
and the transition relation is defined as
$(p, w) \goesto_\APDS P \times \set {w'}$ whenever $(p, a, o, P) \in \Delta$
with $w = \stack {a^u, \cdots}$ and $w' = o(w)$.
Thus, a rule $(p, a, o, P)$ first checks that the topmost stack symbol is $a$,
and then applies the transformation provided by the stack operation $o$ to the current stack
(which may, or may not, change the topmost stack symbol $a$).
Given $c\in\mathcal C_\APDS$ and $T \subseteq Q$, the \emph{(control-state) reachability problem} for $\APDS$
asks whether $c\in \prestar {T \times \Gamma_n}$.

\paragraph{Encoding.}

We represent annotated pushdowns as trees.
Let $\Sigma$ be the ordered alphabet containing, for each $k\in\{1,\dots,n\}$,
an end-of-stack symbol $\bot^k \in \Sigma$ of rank $0$ and order $k$.
Moreover, for each $a \in \Gamma$ and order $k\in\{1,\dots,n\}$,
there is a symbol $\tuple {a, k}\in\Sigma$ of order $k$ and rank $k+1$
representing the root of a tree encoding a stack of order $k$.
An order-$k$ stack is encoded as a tree recursively by $\enc k {\stack\,} = \bot^k$ and
$\enc k {\stack{a^u, u_1, \dots, u_k}} = \tuple {a, k}(\enc i u, \enc 1 {u_1}, \dots, \enc k {u_k})$,
where $i$ is the order of $u$.
\ignore{
\begin{align*}
	\enc k {\stack{a^u, u_1, \dots, u_k}} &= 
	\begin{tikzpicture}[baseline=-2.3ex,scale=0.6,level/.style={sibling distance = 2.5cm}]
		\node {$a(i, k)$}
			child{ node {$\enc 1 {u_1}$}}
		    child{ node {$\cdots$} edge from parent[draw=none]}
			child{ node {$\enc k {u_k}$}}
			child{ node {$\enc i u$}};
	\end{tikzpicture}
\end{align*}
where $i$ is the order of the stack $u$.
}
Let $\APDS=\tuple {n, \Gamma, Q, \Delta}$ be an annotated pushdown system.
We define an equivalent AOTPS $\Sys = \tuple {n, \Sigma, Q, \Rules}$,
where $\Sigma$ is as defined above,
and $\Rules$ contains a rule $p, l \goesto P, r$ for each rule in $(p, a, o, P) \in \Delta$ and orders $m, m_1$,
where $l \goesto r$ is as follows (cf.{\@} also Fig.~\ref{fig:apds:otpds:rules} in the appendix for a pictorial representation).
We use the convention that a variable subscripted by $i$ has order $i$,
and we write ${\var x}_{i..j}$ for $(\var x_i, \dots, \var x_j)$,
and similarly for $\var z_{i..j}$:
\vspace{-.5ex}
\begin{align*}
	&\tuple{a, n}(\var y_m, \var x_{1..n}) \goesto
		\tuple {b, n}(\tuple{a, k}(\var y_m, \var x_{1..k}), \tuple {a, 1}(\var y_m, \var x_1), \var x_{2..n})
		&\textrm{ if } o = \apush k b,
	\\
	&\tuple{a, n}(\var y_m, \var x_{1..n}) \goesto
		\tuple{a, n}(\var y_m, \var x_{1..k-1}, \tuple{a, k}(\var y_m, \var x_{1..k}), \var x_{k+1..n})
		&\textrm{ if } o = \apush k {},
	\\
	&\tuple{a, n}(\var z'_{m_1}, \var z_{1..k-1}, \tuple{b, k}(\var y_m, \var x_{1..k}), \var x_{k+1..n}) \goesto
		\tuple{b, n}(\var y_m, \var x_{1..n})
		&\textrm{ if } o = \apop k,
	\\
	&\tuple{a, n}(\tuple{b, k}(\var y_m, \var x_{1..k}), \var z_{1..k}, \var x_{k+1..n}) \goesto
		\tuple{b, n}(\var y_m, \var x_{1..n})
		&\textrm{ if } o = \acollapse k.
\end{align*}
%
The last two rules satisfy the ordering condition of AOTPSs
since only higher-order variables $\var x_{k+1}, \dots, \var x_n$ are not discarded.
%
%
	%
%
%
%
%
%
%
It is easy to see that $(p, w) \goesto_\APDS^* P \times \set {w'}$ if, and only if,
$(p, \enc n w) \goesto_{\Sys}^* P \times \set {\enc n {w'}}$.
Consequently, the encoding preserves reachability properties.
Since an annotated pushdown system of order $n$ is simulated by a flat AOTPS of the same order,
the following complexity result is an immediate consequence of Theorems~\ref{thm:complexity} and \ref{thm:complexity:nondeterministic}.
\begin{theorem}[\cite{BroadbentCarayolHagueSerre:Saturation:2012}]
	\label{thm:apds:complexity}
	Reachability in alternating annotated pushdown systems of order $n$ and
	in non-deterministic annotated pushdown systems of order $n+1$ is \nexptimec n.
\end{theorem}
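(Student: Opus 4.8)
The plan is to read off both upper bounds directly from the encoding of annotated pushdown systems into flat AOTPSs given above, fed into the two complexity results already proved, and to import the matching lower bounds from the cited work \cite{BroadbentCarayolHagueSerre:Saturation:2012}.

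First I would record the three features of the encoding that make it usable. From an order-$n$ annotated pushdown system $\APDS = \tuple{n, \Gamma, Q, \Delta}$ it produces an AOTPS $\Sys = \tuple{n, \Sigma, Q, \Rules}$ of the same order $n$, since an order-$k$ stack becomes an order-$k$ tree. Inspecting the four rule schemes (for $\apush k b$, $\apush k {}$, $\apop k$, and $\acollapse k$) shows that every argument position on both sides is a single variable -- in particular the lookahead subtrees of the $\apop k$ and $\acollapse k$ rules have only variable children -- so $\Sys$ is flat. Because the set of control locations of each generated rule is copied verbatim from the corresponding $\APDS$ rule, $\Sys$ is non-deterministic precisely when $\APDS$ is. Finally, the step-for-step equivalence $(p, w) \goesto_\APDS^* P \times \set{w'}$ iff $(p, \enc n w) \goesto_\Sys^* P \times \set{\enc n {w'}}$ reduces $\APDS$-reachability to reachability in $\Sys$.

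For the alternating case I would then invoke the membership direction of Theorem~\ref{thm:complexity} on $\Sys$, an order-$n$ AOTPS, obtaining $\APDS$-reachability in \nexptime n. For the non-deterministic case I start from a non-deterministic $\APDS$ of order $n+1$, whose encoding is a non-deterministic flat AOTPS of order $n+1$; control-state reachability of a set $T \subseteq Q$ is exactly the setting of Theorem~\ref{thm:complexity:nondeterministic}, which (legitimately, since $n+1 \geq 2$) gives membership in \nexptime n, as $(n+1)-1 = n$. The matching \nexptimeh n bounds for both problems are established in \cite{BroadbentCarayolHagueSerre:Saturation:2012}, yielding the stated \nexptimec n characterisation.

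I expect no real difficulty, as the content sits entirely in the encoding and the two complexity theorems; the single point needing care is that the pushdown-system target $T \times \Gamma_n$ is narrower than the target $T \times \Trees \Sigma$ required by the control-state formulation of Theorem~\ref{thm:complexity:nondeterministic}. This is harmless: since $\prestar{\cdot}$ quantifies over forward runs and every $\Sys$-run issuing from a valid stack encoding visits only valid stack encodings, reaching $T \times \Trees \Sigma$ from an encoded initial configuration is the same as reaching $T \times \set{\enc n w : w \in \Gamma_n}$, so the two targets are indistinguishable from such a start and the reduction goes through.
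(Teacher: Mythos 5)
Your proposal is correct and takes essentially the same route as the paper, whose proof is precisely the observation that the encoding yields a flat AOTPS of the same order preserving non-determinism and reachability, so that the upper bounds follow immediately from Theorem~\ref{thm:complexity} (alternating, order $n$) and Theorem~\ref{thm:complexity:nondeterministic} (non-deterministic, applied at order $n+1$), with the matching lower bounds imported from \cite{BroadbentCarayolHagueSerre:Saturation:2012}. Your closing point about reconciling the target $T \times \Gamma_n$ with the target $T \times \Trees{\Sigma}$ of the control-state formulation is a correct treatment of a detail the paper leaves implicit.
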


\ignore{

\subsection{Saturation rules}

Let $n$ be the maximal stack order.
Let $\ATA A = \tuple {\set p \cup P, \Sigma, \Delta}$ be an \emph{alternating} tree automaton recognizing (finite) $n$-pushdowns.
There exists a distinguished initial state $p$ recognizing $n$-pushdowns,
and internal states in $P$ partitioned into $P = Q_1 \cup \cdots \cup Q_n$,
where $Q_k$ recognizes $k$-pushdowns, for $1 \leq k \leq n$. 
Moreover, we assume that there exists a distinguished state $q^*$ recognizing every stack.
We define a total pre-order on internal states by letting $q \leq q'$ iff $q \in Q_i$ and $q' \in Q_j$ with $i \leq j$;
let $\approx$ be the induced equivalence relation.
An initial transition in $\ATA A$ takes the form $p \trans {a(n)} q_1 \cdots q_n q$ with $q_1 < \cdots < q_n \approx q$.

We build a new nondeterministic tree automaton $\ATA A'$ recognizing the predecessors of $\lang {\ATA A}$.
The new automaton $\ATA A' = \tuple {\set p \cup P', \Sigma, \Delta'} $ is obtained by adding states and transitions to $\ATA A$.
The new set of states $P'$ extends $P$ and it is similarly partitioned into $Q' = Q_1' \cup \cdots \cup Q_n'$,
where the $Q_j'$'s are defined inductively as follows:
For $i \geq 1$,
\begin{align*}
	Q_n'		&:= Q_n \cup \set p \times \Sigma \times \set {q^*} \cup \set p \times \Sigma \\
	Q_{n-i}'	&:= Q_{n-i} \cup \set p \times \Sigma \times \underbrace{2^{Q_{n-(i-1)}'} \times 2^{Q_{n-(i-2)}'} \times \cdots \times 2^{Q_n'}}_{i} \times \set {q^*}
\end{align*}
For example, $Q_{n-1}' = Q_{n-1} \cup \set p \times \Sigma \times 2^{Q_n'} \times \set {q^*}$.
Notice that $Q_{n-i}$ is well defined since it depends only on $Q_{n-j}'$'s for a smaller $j < i$.
The (new) auxiliary states of the form $(p, a(n), P_k \cdots P_n q^*)$ represent \emph{contexts}.

If $Q$ contains $\size Q = O(m)$ states, then
%
%
$\size {Q'_1}$ is a tower of $n$ exponentials.
TODO: check the calculation.
TODO: why introducing $q^*$ instead of using alternation with $\emptyset$?

\input{apds_saturation_rules}

\subsection{Completeness}

\begin{lemma}
	\label{lem:apds:complete}
	The saturation procedure is complete, i.e., $\prestar {\lang {\ATA A}} \subseteq \lang {\ATA A'}$.
\end{lemma}

\begin{proof}
	By induction on the length of the shortest path to reach $\lang {\ATA A}$.
\end{proof}

\subsection{Soundness}

\begin{align*}
	\sem p &:= \prestar {\lang {\ATA A}} \\
	\sem q &:= \lang q, \textrm{ with } q \in Q \\
	\sem {(p, a(n))} &:= \setof t
	{\forall t_1 \cdots t_n \cdot a(n)(t_1, \dots, t_n, t) \in \sem p} \\
	\sem {(p, a(n), P_{k+1} \cdots P_n q^*)} &:= \setof {t_k}
	{	\forall t_1 \cdots t_{k-1},
		\forall t_{k+1} \in \sem {P_{k+1}} \cdots \forall t_n \in \sem {P_n},
		\forall t \cdot a(n)(t_1, \dots, t_n, t) \in \sem p}
\end{align*}

A(n initial or internal) transition $q \trans {a(k)} P_1 \cdots P_k P$ is \emph{sound} if,
whenever $t_1 \in \sem {P_1} \cdots t_k \in \sem {P_k}, t \in \sem P$,
we have $a(k)(t_1, \dots, t_k, t) \in \sem q$.
An automaton is sound if all its transitions are sound.

\begin{lemma}
	\label{lem:apds:sound:incl}
	In a sound automaton, $\lang q \subseteq \sem q$ for every state $q$.
\end{lemma}

\begin{proof}
	By induction on the length of accepting runs.
	For the base case, let $a \in \lang q$, where $a$ is a leaf (arity 0).
	There exists a sound transition $q \trans a$,
	which implies $a \in \sem q$ by the definition of sound transition.
	For the inductive case, let $a(t_1, \cdots, t_k) \in \lang q$.
	There exists a sound transition $q \trans a P_1 \cdots P_k$
	s.t.{\@} $t_1 \in \lang {P_1} \cdots t_k \in \lang {P_k}$.
	By inductive hypothesis, $t_1 \in \sem {P_1} \cdots t_k \in \sem {P_k}$.
	Since the transition is sound, $a(t_1, \cdots, t_k) \in \sem q$.
\end{proof}

\begin{lemma}
	\label{lem:apds:sound:init}
	The initial automaton $\ATA A$ is sound.
\end{lemma}

\begin{proof}
	Immediate, since the only internal transitions in $\ATA A$ are those between states in $Q$
	for which $\sem q$ coincides with the language recognized by $q$.
\end{proof}

\begin{lemma}
	\label{lem:apds:sound:induct}
	The saturation process adds only sound transitions.
\end{lemma}

\begin{proof}
	We proceed by case analysis on the newly added rule.
	\paragraph{Case~\ref{rule:apds:popk}.}
	Assume that (1) $p \trans {b(n)} P_1 \cdots P_n P$ is a sound transition.
	We show that (2) $p \trans {a(n)} q^* \cdots q^* (p, a(n), P_{k+1} \cdots P_n q^*) P_{k+1} \cdots P_n q^*$
	and (3) $(p, a(n), P_{k+1} \cdots P_n q^*) \trans {b(k)} P_1 \cdots P_k P$ are sound transitions.
	For (2), assume $t_1 \in \sem {q^*} \cdots t_{k-1} \in \sem {q^*}$,
	$t_k \in \sem{(p, a(n), P_{k+1} \cdots P_n q^*)}$,
	$t_{k+1} \in \sem {P_{k+1}} \cdots t_n \in \sem {P_n}$,
	and $t \in \sem {q^*}$.
	By the definition of $\sem {(p, a(n), P_{k+1} \cdots P_n q^*)}$,
	$a(n)(t_1, \dots, t_n, t) \in \sem p$.
	Notice that we didn't use the fact that (1) is sound, i.e., rules of the form (2) are always sound by construction.
	
	For (3), assume $t_1 \in \sem {P_1} \cdots t_k \in \sem {P_k}$ and $t \in \sem P$,
	and we have to show $b(k)(t_1, \dots, t_k, t) \in \sem {(p, a(n), P_{k+1} \cdots P_n q^*)}$.
	This means that given arbitrary $v_1 \cdots v_{k-1}$ and $v$,
	and $t_{k+1} \in \sem {P_{k+1}} \cdots t_n \in \sem {P_n}$,
	it must be the case that \[ t' := a(n)(v_1, \dots, v_{k-1}, b(k)(t_1, \dots, t_k, t), t_{k+1}, \dots, t_n, v) \in \sem p \ .\]
	Since (1) is sound, we have $b(n)(t_1, \dots, t_n, t) \in \sem p$.
	But \[ t' = a(n)(v_1, \dots, v_{k-1}, b(k)(t_1, \dots, t_k, t), t_{k+1}, \dots, t_n, v) \goesto {} b(n)(t_1, \dots, t_n, t) \in \sem p \]
	in one step by one application of rule~\ref{rule:apds:popk},
	thus $t' \in \sem p$.
	
	\paragraph{Case~\ref{rule:apds:pushk}.}
	Assume that (1) $p \trans {a(n)} P_1 \cdots P_n P$
	and (2) $P_k \trans {a(k)} P_1' \cdots P_k' P'$
	are sound transitions.
	We show that (3) $p \trans {a(n)} P_1 \cup P_1' \cdots P_{k-1} \cup P_{k-1}' P_k' P_{k+1} \cdots P_n P \cup P'$
	is a sound transition.
	Let $t_1 \in \sem {P_1 \cup P_1'} \cdots t_{k-1} \in \sem {P_{k-1} \cup P_{k-1}'}$,
	$t_k \in \sem {P_k'}$,
	$t_{k+1} \in \sem {P_{k+1}} \cdots t_n \in \sem {P_n}$,
	and $t \in \sem {P \cup P'}$,
	and we show $a(n)(t_1, \dots, t_n, t) \in \sem p$.
	This entails that $t_1 \in \sem {P_1} \cap \sem {P_1'} \cdots t_{k-1} \in \sem {P_{k-1}} \cap \sem {P_{k-1}'}$
	and $t \in \sem P \cap \sem {P'}$.
	Since transition (2) is sound,
	$a(k)(t_1, \dots, t_k, t) \in \sem {P_k}$.
	Since transition (1) is sound,
	$a(n)(t_1, \dots, t_{k-1}, a(k)(t_1, \dots, t_k, t), t_{k+1}, \dots t_n, t) \in \sem p$.
	From the operational semantics,
	\[ a(n)(t_1, \dots, t_n, t) \goesto {} a(n)(t_1, \dots, t_{k-1}, a(k)(t_1, \dots, t_k, t), t_{k+1}, \dots t_n, t) \]
	in one step by one application of the \ref{rule:apds:pushk} rule.
	Therefore, $a(n)(t_1, \dots, t_n, t) \in \sem p$.

	\paragraph{Case~\ref{rule:apds:collapse}.}
	Assume that (1) $p \trans {b(n)} P_1 \cdots P_n P$ is a sound rule,
	and we show that (2) $p \trans {a(n)} q^* \cdots q^* (p, a(n))$
	and (3) $(p, a(n)) \trans {b(n)} P_1 \cdots P_n P$ are sound rules.
	For (2), assume $t_1 \dots t_k$ are arbitrary and $t \in \sem {(p, a(n))}$
	By the definition of $\sem {(p, a(n))}$,
	$a(n)(t_1, \dots, t_k, t) \in \sem p$.
	Notice that we used just the definition of $\sem {(p, a(n))}$,
	and not the fact that (1) is sound, or the operational semantics of our system.
	
	For (3), let $t_1 \in \sem {P_1} \dots t_n \in \sem {P_n}$ and $t \in \sem P$,
	and we have to show that $b(n)(t_1, \dots, t_n, t) \in \sem {(p, a(n))}$.
	By the definition of $\sem {(p, a(n))}$,
	this means that for arbitrary $v_1, \dots, v_n$,
	we have to show that
	$a(n)(v_1, \dots, v_n, b(n)(t_1, \dots, t_n, t)) \in \sem p$.
	But rule (1) is sound, therefore $b(n)(t_1, \dots, t_n, t) \in \sem p$,	and
	\[ a(n)(v_1, \dots, v_n, b(n)(t_1, \dots, t_n, t)) \goesto {} b(n)(t_1, \dots, t_n, t) \]
	in one step by an application of rule~\ref{rule:apds:collapse}.
	Thus, $a(n)(v_1, \dots, v_n, b(n)(t_1, \dots, t_n, t)) \in \sem p$.
	
	\paragraph{Case~\ref{rule:apds:pushb}.}
	Assume that (1) $p \trans {b(n)} P_1 \cdots P_n P$,
	(2) $P_1 \trans {a(1)} R_1 R$,
	and	(3) $P \trans {a(n)} R_1' \cdots R_n' R'$ are sound transitions,
	and we show that (4) $p \trans {a(n)} R_1 \cup R_1' P_2 \cup R_2' \cdots P_n \cup R_n' R \cup R'$
	is a sound transition.
	Assume $t_1 \in \sem {R_1 \cup R_1'} = \sem {R_1} \cap \sem {R_1'}$,
	$t_2 \in \sem {P_2 \cup R_2'} = \sem {P_2} \cap \sem {R_2'} \cdots t_n \in \sem {P_n \cup R_n'} = \sem {P_n} \cap \sem {R_n'}$,
	and $t \in \sem {R \cup R'} = \sem R \cap \sem {R'}$,
	and we have to show that $a(n)(t_1, \dots, t_n, t) \in \sem p$.
	Since $t_1 \in \sem {R_1'} \cdots t_n \in \sem {R_n'}$ and $t \in \sem {R'}$
	and (3) is a sound transition, we have $a(n)(t_1, \dots, t_n, t) \in \sem P$.
	Since $t_1 \in \sem R_1$ and $t \in \sem R$,
	and (2) is a sound transition,
	we have $a(1)(t_1, t) \in \sem {P_1}$.
	Finally, since $a(1)(t_1, t) \in \sem {P_1}$,
	$t_2 \in \sem {P_2} \cdots t_n \in \sem {P_n}$
	and $a(n)(t_1, \dots, t_n, t) \in \sem P$,
	and (1) is a sound transition,
	we have $b(n)(a(1)(t_1, t), t_2, \dots, t_n, a(n)(t_1, \dots, t_n, t)) \in \sem p$.
	By the operational semantics,
	\[ a(n)(t_1, \dots, t_n, t) \goesto {} b(n)(a(1)(t_1, t), t_2, \dots, t_n, a(n)(t_1, \dots, t_n, t)) \]
	in one step by the application of rule~\ref{rule:apds:pushb}.
	Therefore, $a(n)(t_1, \dots, t_n, t) \in \sem p$ as required.
	
	\paragraph{Case~\ref{rule:apds:rewb}}
	Assume that (1) $p \trans {b(n)} P_1 \cdots P_n P$ is a sound transition,
	and we show that (2) $p \trans {a(n)} P_1 \cdots P_n P$ is a sound transition.
	Assume $t_1 \in \sem {P_1} \cdots t_n \in \sem {P_n}$ and $t \in \sem P$.
	Since (1) is sound, $b(n)(t_1, \dots, t_n, t) \in \sem p$.
	By the operational semantics,
	\[ a(n)(t_1, \dots, t_n, t) \goesto {} b(n)(t_1, \dots, t_n, t) \]
	by one application of rule~\ref{rule:apds:rewb},
	so $a(n)(t_1, \dots, t_n, t) \in \sem p$ as required.
 	
\end{proof}

\begin{lemma}
	\label{lem:ompds:sound}
	The saturation procedure is sound, i.e., $\lang {\ATA A'} \subseteq \prestar {\lang {\ATA A}}$.
\end{lemma}

\begin{proof}
	By Lemma~\ref{lem:apds:sound:init}, the initial automaton $\ATA A$ is sound,
	and, by Lemma~\ref{lem:apds:sound:induct}, the final automaton $\ATA A'$ is sound, too.
	By Lemma~\ref{lem:apds:sound:incl} applied to the initial state $p$,
	$\lang {\ATA A'} = \lang p \subseteq \sem p = \prestar {\lang {\ATA A}}$.
\end{proof}
}

\subsection{Krivine machine with states}

\label{sec:krivine:machine}

We show that the Krivine machine evaluating simply-typed $\lambda Y$-terms can be encoded as an AOTPS.
Essentially, this encoding was already given in the presentation of the Krivine machine operating on $\lambda Y$-terms from \cite{SalvatiWalukiewicz:Krivine:2011},
though not explicitly given as tree pushdowns.
In this sense, this provides the first saturation algorithm for the Krivine machine,
thus yielding an optimal reachability procedure.
Moreover, in App.~\ref{app:aotps2km} we present also a converse reduction (as announced earlier in Theorem~\ref{thm:aotps2km}),
thus showing that the two models are in fact equivalent.


A \emph{type} is either the basic type 0 or $\alpha\arrow\beta$ for types $\alpha, \beta$.
The \emph{level} of a type is 
$\level 0 = 0$ and $\level {\alpha \arrow \beta} = \max (\level \alpha + 1, \level \beta)$.
We abbreviate $\alpha \arrow \cdots \arrow \alpha \arrow \beta$
as $\alpha^k \arrow \beta$.
Let $\Vars = \set{x_1^{\alpha_1}, x_2^{\alpha_2}, \dots}$ be a countably infinite set of typed variables,
and let $\Gamma$ be a ranked alphabet.
A \emph{term} is either
\begin{inparaenum}[(i)]
	\item a constant $a^{0^k \arrow 0} \in \Gamma$,
	\item a variable $x^{\alpha} \in \Vars$,
	\item an abstraction $(\lambda x^\alpha. M^\beta)^{\alpha \arrow \beta}$,
	\item an application $(M^{\alpha \arrow \beta} N^\alpha)^\beta$, or
	\item a fixpoint $(Y M^{\alpha \arrow \alpha})^\alpha$.
\end{inparaenum}
We sometimes omit the type annotation from the superscript, in order to simplify the notation.
For a given term $M$, its set of \emph{free variables} is defined as usual.
A term $M$ is \emph{closed} if it does not have any free variable. 
We denote by $\Lambda(M)$ be the set of \emph{sub-terms} of $M$.
%
%
%
%
An \emph{environment} $\rho$ 
is a finite type-preserving function assigning closures to variables,
and a \emph{closure} $C^\alpha$ is a pair consisting of a term of type $\alpha$ and an environment,
as expressed by the following mutually recursive grammar:
$\rho ::= \emptyset\ |\ \rho[x^\alpha \mapsto C^\alpha]$ and $C^\alpha ::= (M^\alpha, \rho)$.
%
We say that a closure $(M, \rho)$ is \emph{valid} if $\rho$ binds all variables which are free in $M$ (and no others),
and moreover $\rho(x^\alpha)$ is itself a valid closure for each free variable $x^\alpha$ in $M$.
Sometimes, we need to restrict an environment $\rho$ by discarding some bindings in order to turn a closure $(M, \rho)$ into a valid one.
Given a term $M$ and an environment $\rho$,
the \emph{restriction} of $\rho$ to $M$, denoted $\restrict \rho {M}$,
is obtained by removing from $\rho$ all bindings for variables which are not free in $M$.
In this way, if $(M, \rho)$ is a closure where $\rho$ assigns valid closures to at least all variables which are free in $M$,
then $(M, \restrict \rho {M})$ is a valid closure.
In a closure $(M, \rho)$, $M$ is called the \emph{skeleton}, 
and it determines the type and level of the closure.
Let $\Closures^\alpha(M)$ be the set of valid closures of type $\alpha$ with skeleton in $\Lambda(M)$.
%
%
An \emph{alternating Krivine machine%
\footnote{Cf. also \cite{Krivine:Machine:2007} 
for a definition of the Krivine machine in a different context.}
with states} of level $l \in \Natpos$ is a tuple $\KM = \tuple {l, \Gamma, Q, K^0, \Delta}$,
where 
$\tuple{\Gamma, Q, \Delta}$ is an alternating tree automaton (in which a constant $a^{0^k\arrow 0}\in\Gamma$ is seen as a letter $a$ of rank $k$),
and $K^0$ is a closed term of type $0$ s.t.{\@} the level of any sub-term in $\Lambda(K^0)$ is at most $l$.
%
%
%
In the following, let $\alpha = \alpha_1 \arrow \cdots \arrow \alpha_k \arrow 0$.
The Krivine machine $\KM$ induces a transition system $\tuple{\mathcal C_\KM, \goesto_\KM}$,
where in a configuration $(p, C^\alpha, C_1^{\alpha_1}, \dots, C_k^{\alpha_k}) \in \mathcal C_\KM$,
$p \in Q$, 
$C^{\alpha} \in \Closures^\alpha(K^0)$ is the \emph{head closure},
and $C_1^{\alpha_1} \in \Closures^{\alpha_1}(K^0), \dots, C_k^{\alpha_k} \in \Closures^{\alpha_k}(K^0)$ are the \emph{argument closures}.
%
%
The transition relation $\goesto_\KM$ depends on the structure of the skeleton of the head closure.
It is deterministic 
except when the head is a constant in $\Gamma$,
in which case the transitions in $\Delta$ control how the state changes (cf.{\@} also Fig.~\ref{fig:km:otpds:rules} in the appendix for a pictorial representation):
%
%
\vspace{-.5ex}
	\begin{align*}
		(p, (x^\alpha, \rho), C_1^{\alpha_1}, \dots, C_k^{\alpha_k})\ \goesto_\KM\ 
			& \set {(p, \rho(x^\alpha), C_1^{\alpha_1}, \dots, C_k^{\alpha_k})}, \\
		(p, (M^\alpha N^{\alpha_1}, \rho), C_2^{\alpha_2}, \dots, C_k^{\alpha_k})\ \goesto_\KM\ 
			& \set {(p, (M^\alpha, \restrict \rho {M^\alpha}), (N^{\alpha_1}, \restrict \rho {N^{\alpha_1}}), C_2^{\alpha_2}, \dots, C_k^{\alpha_k})}, \\
		(p, (Y M^{\alpha\arrow\alpha}, \rho), C_1^{\alpha_1}, \dots, C_k^{\alpha_k})\ \goesto_\KM\
			& \set {(p, (M^{\alpha\arrow\alpha}, \rho), ((Y M)^\alpha, \rho), C_1^{\alpha_1}, \dots, C_k^{\alpha_k})}, \\
		(p, (\lambda x^{\alpha_0}. M^\alpha, \rho), C_0^{\alpha_0}, \dots, C_k^{\alpha_k})\ \goesto_\KM\ 
			& \set {(p, (M^\alpha, \rho[x^{\alpha_0} \mapsto C_0^{\alpha_0}]), C_1^{\alpha_1}, \dots, C_k^{\alpha_k})}, \\
		(p, (a^{0^k \arrow 0}, \rho), C_1^0, \dots, C_k^0)\ \goesto_\KM\ 
			&(P_1 \times \set {C_1^0}) \cup \cdots \cup (P_k \times \set {C_k^0}) \\
				&\textrm {for every } p \trans a P_1 \cdots P_k \in \Delta.
	\end{align*}
%
%
We say that $\KM$ is \emph{non-deterministic} if $\tuple{\Gamma, Q, \Delta}$ is non-deterministic and all letters in $\Gamma$ have rank at most $1$.
Given $c\in\mathcal C_\KM$ and $T \subseteq Q$, the \emph{(control-state) reachability problem} for $\KM$ asks whether
$c\in\prestar {T \times (\bigcup_{\alpha = \alpha_1 \arrow \cdots \arrow \alpha_k \arrow 0}
	\Closures^\alpha(K^0) \times \Closures^{\alpha_1}(K^0) \times \cdots \times \Closures^{\alpha_k}(K^0))}$.

\paragraph{Encoding.}

Following \cite{SalvatiWalukiewicz:Krivine:2011},
we encode valid closures and configurations of the Krivine machine as ranked trees.
%
%
%
Fix a Krivine machine $\KM = \tuple {l, \Gamma, Q, K^0, \Delta}$ of level $l$.
We assume a total order on all variables $\tuple {x_1^{\beta_1}, \dots, x_n^{\beta_n}}$ appearing in $K^0$.
For a type $\alpha$, we define $\ord (\alpha) = l - \level \alpha$.
We construct an AOTPS $\Sys = \tuple {l, \Sigma, Q', \Rules}$ of order $l$ as follows.
The ordered alphabet is 
\begin{align*}
	\Sigma = \setof{N^\alpha}{N^\alpha \in \Lambda(K^0)\land\level\alpha<l}\cup\setof {\marked {N^\alpha}} {N^\alpha \in \Lambda(K^0)} \cup \setof{\bot_i}{i\in\set{1,\dots,n}}.
\end{align*}
Here, $N^\alpha$ is a symbol of $\rank (N^\alpha) = n$ and $\ord (N^\alpha) =\ord (\alpha)$.
Moreover, if $\alpha = \alpha_1 \arrow \cdots \arrow \alpha_k \arrow 0$ for some $k \geq 0$,
then $\marked {N^\alpha}$ is a symbol of $\rank (\marked {N^\alpha}) = n + k$
and $\ord (\marked {N^\alpha}) = l$ (in fact, $\ord(\marked{N^\alpha})$ is irrelevant, as $\marked{N^\alpha}$ is used only in the root). 
%
%
Finally, $\bot_i$ is a leaf of order $i$.
The set of control locations is $Q' = Q \cup \bigcup_{(p \trans a P_1 \cdots P_k)\in \Delta}\set{(1, P_1), \dots, (k, P_k)}$.
%
%
A closure $(N^\alpha, \rho)$ is encoded recursively as $\enc {} {N^\alpha, \rho} = N^\alpha(t_1, \dots, t_n)$,
where, for every $i\in\{1,\dots,n\}$,
\begin{inparaenum}[i)]
	\item if $x_i \in \freevars {N^\alpha}$ then $t_i = \enc {} {\rho(x_i)}$, and
	\item $t_i = \bot_{\ord(\beta_i)}$ otherwise (recall that $\beta_i$ is the type of $x_i$).
\end{inparaenum}%
\ignore{
\begin{align*}
	\enc {} {N^\alpha, \rho} = \begin{tikzpicture}[baseline=-2.3ex,scale=0.5,level/.style={sibling distance = 3cm/#1}]
						\node {$N^\alpha$}
						    child{ node {$\enc {} {\rho(x_1)}$} }
						    child{ node {$\cdots$} edge from parent[draw=none] }
						    child{ node {$\enc {} {\rho(x_n)}$} }
						; 
					\end{tikzpicture}, 
\end{align*}
}%
%
%
A configuration $c = (p, (N^\alpha, \rho), C_1^{\alpha_1}, \dots, C_k^{\alpha_k})$
is encoded as the tree $\enc {} c = \marked {N^\alpha}(t_1, \dots, t_n, \enc {} {C_1^{\alpha_1}}, \dots, \enc {} {C_k^{\alpha_k}})$,
where the first $n$ subtrees encode the closure $(N^\alpha, \rho)$,
i.e., $\enc {} {N^\alpha, \rho} = N^\alpha(t_1, \dots, t_n)$.
%
\ignore{
\begin{align*}
	\enc {} {p, (N^\alpha, \rho), C_1^{\alpha_1}, \dots, C_k^{\alpha_k}} =
		\!\!\!\!\!\!\!\!\!\!\!\!\!\!\!\!\!\!\!\!\!\!\!\!\!\!\!\!\!\!\!\!\!\!\!\!\!\!\!\!\!\!\!\!
		\begin{tikzpicture}[baseline=-1ex,scale=0.5,level/.style={sibling distance = 3.3cm/#1}]
			\node {$\marked {N^\alpha}$}
			    child{ node {$\enc {} {\rho(x_1)}$} }
			    child{ node {$\cdots$} edge from parent[draw=none] }
			    child{ node {$\enc {} {\rho(x_n)}$} }
			    child{ node {$\enc {} {C_1^{\alpha_1}}$} }
			    child{ node {$\cdots$} edge from parent[draw=none] }
			    child{ node {$\enc {} {C_k^{\alpha_k}}$} }
			; 
		\end{tikzpicture}
\end{align*}
}%
The encoding is extended point-wise to sets of configurations.
Notice that $K^0$ uses only variables of level at most $l-1$ (the subterm $\lambda x^\alpha.N$ introducing $x^\alpha$ is of level higher by one), so all skeletons in an environment are of order at most $l-1$.
Similarly, skeletons in argument closures are of level at most $l-1$; only the head closure may have a skeleton of level $l$.
Thus we do not need symbols $N^\alpha$ for $\level\alpha=l$.

%
Below, we assume that $\alpha = \alpha_1 \arrow \cdots \arrow \alpha_k
\arrow 0$, that variable $\var y_j$ has order
$\ord (\alpha_j)$ for every $j\in\{0,\dots,k\}$, and that  
variables $\var x_i$ and $\var z_i$ have order $\ord (\beta_i)$ for
every $i\in\{1,\dots,n\}$. 
Notice that $\ord (\alpha) < \ord (\alpha_1), \dots, \ord (\alpha_k)$.
Moreover, we write $\vec {\var x} = \tuple{\var x_1, \dots, \var x_n}$,
$\vec {\var z} = \tuple{\var z_1, \dots, \var z_n}$,
and $\vec {\var y} = \tuple {\var y_1, \dots, \var y_k}$.
Finally, by $\restrict {\vec {\var x}} {M}$ we mean the tuple which is the same as $\vec {\var x}$,
except that positions corresponding to variables not free in $M$ are replaced by the symbol $\bot_{\ord(\beta_i)}$.
%
%
$\Rules$ contains the following rules:
\begin{align*}
	&p, \marked {x_i^\alpha}(\var z_1, \dots, \var z_{i-1}, M^\alpha(\vec {\var x}), \var z_{i+1}, \dots, \var z_n, \vec {\var y})
		\goesto 
		\set p, \marked {M^\alpha}(\vec {\var x}, \vec {\var y}),
	\\
 	&p, \marked {M^\alpha N^{\alpha_1}}(\vec {\var x}, \var y_2, \dots, \var y_k)
		\goesto
		\set p, \marked {M^\alpha}(\restrict {\vec {\var x}} {M^\alpha}, N^{\alpha_1}(\restrict {\vec {\var x}} {N^{\alpha_1}}), \var y_2, \dots, \var y_k),
	\\
	&p, \marked {Y M^{\alpha \arrow \alpha}}(\vec {\var x}, \vec {\var y})
		\goesto
		\set p, \marked {M^{\alpha \arrow \alpha}}(\vec {\var x}, Y M^{\alpha \arrow \alpha}(\vec {\var x}), \vec {\var y}),
	\\
	&p, \marked {\lambda x_i^{\alpha_0}. M^\alpha}(\vec {\var x}, \var y_0, \vec {\var y})
		\goesto
		\set p, \marked {M^\alpha}(\var x_1, \dots, \var x_{i-1}, y_0, \var x_{i+1}, \dots, \var x_n, \vec {\var y}),\\
	&p, \marked {a^{0^k \arrow 0}}(\vec {\var x}, \vec {\var y})
		\goesto
		\set{(1, P_1), \dots, (k, P_k)}, \marked {a^{0^k \arrow 0}}(\vec {\var x}, \vec {\var y})
		\qquad\qquad
		\forall (p \trans a P_1 \cdots P_k)\in \Delta,
	\\
	&(i, P_i), \marked {a^{0^k \arrow 0}}(\vec {\var z}, \var y_1, \dots, \var y_{i-1}, M_i^0(\vec {\var x}), \var y_{i+1}, \dots, \var y_k)
		\goesto
		P_i, \marked {M_i^0}(\vec {\var x}).
\end{align*}
The first rule satisfies the ordering condition
since the shared variables $\var y_i$ are of order strictly higher than $\ord(M^\alpha)$.
\ignore{
\begin{remark}
	Notice that, except for the abstraction rule \ref{rule:abs2},
	in every other rule the left-hand side variables also appear in the right-hand side.
	Moreover, all rules except the application \ref{rule:app} and fixpoint \ref{rule:fix} rule are linear.
	Finally, if the original lambda-expression is linear (as a lambda expression),
	then the application rule becomes linear.
\end{remark}
}%
%
%
%
	%
%
%
A direct inspection of the rules shows that, for a configuration $c$ and a set of configurations $D$,
we have $c \goesto_\KM^* D$ if, and only if,
$\enc {} {c} \goesto_\Sys^* \enc {} {D}$.
Therefore, the encoding preserves reachability properties.
Since a Krivine machine of level $n$ is simulated by a flat AOTPS of order $n$,
the following is an immediate consequence of Theorems~\ref{thm:complexity} and \ref{thm:complexity:nondeterministic}.
\begin{theorem}[\cite{aehlig07}]
	\label{thm:krivine:complexity}
	Reachability in alternating Krivine machines with states of level $n$
	and in non-deterministic Krivine machines with states of level $n+1$ is \nexptimec n.
\end{theorem}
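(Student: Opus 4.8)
The plan is to derive both bounds by pushing the reachability question through the reachability-preserving encoding of $\KM$ into an AOTPS constructed immediately above, and then to invoke the two AOTPS complexity results. Recall that from a level-$n$ machine $\KM$ the construction produces a \emph{flat} AOTPS $\Sys$ of order $n$ (each $u_i$ and $v_j$ occurring in the simulating rules is a plain variable), and that $c \goesto_\KM^* D$ if and only if $\enc{}{c} \goesto_\Sys^* \enc{}{D}$. Since control-state reachability of $T \subseteq Q$ in $\KM$ asks whether some configuration in $T \times (\text{all closures})$ is reachable, and since every configuration reachable from $\enc{}{c}$ is again a valid encoding, this question transports under the encoding into the AOTPS control-state reachability question of whether $\enc{}{c} \in \prestar{T \times \Trees\Sigma}$, whose target is trivially regular.

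For the alternating machine of level $n$ I would simply apply Theorem~\ref{thm:complexity} to the order-$n$ AOTPS $\Sys$, obtaining an $n$-fold exponential procedure and hence membership in \nexptime n. The matching lower bound is the one recorded in \cite{aehlig07}; within the present framework it can also be obtained by composing the known \nexptimeh n hardness of annotated order-$n$ pushdown systems (Theorem~\ref{thm:apds:complexity}) with their encoding into AOTPSs and with the encoding of AOTPSs into Krivine machines of the same level (Theorem~\ref{thm:aotps2km}). Together these give \nexptimec n.

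For the non-deterministic machine of level $n+1$ the same encoding yields a \emph{non-deterministic flat} AOTPS of order $n+1$. The point I would check is that non-determinism is indeed preserved: when $\tuple{\Gamma, Q, \Delta}$ is non-deterministic and every letter has rank at most $1$, the set $S$ on the right of each simulating rule is a singleton --- for the four skeleton rules $S = \set p$, while for the two constant rules the singletons come from the non-determinism of $\Delta$ together with the rank bound. Applying Theorem~\ref{thm:complexity:nondeterministic} at order $m = n+1 \geq 2$, whose bound is \nexptimec{(m-1)}, then gives control-state reachability in \nexptime n, and with the matching lower bound we again obtain \nexptimec n.

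The hard part is bookkeeping rather than any genuinely new argument. I must make certain (i) that the simulating rules really are flat, since otherwise Theorem~\ref{thm:complexity:nondeterministic} would not apply, and (ii) that no branching sneaks into the constant-symbol rules in the non-deterministic regime; the rank-$0$ constant case is the one place where the $S$-set could fail to be a singleton and therefore deserves an explicit inspection. Everything else --- preservation of control-state reachability and regularity of the target $T \times \Trees\Sigma$ --- is immediate from the properties already established for the encoding, so the whole theorem reduces to correctly matching the order/level offset ($n$ versus $n+1$) when feeding $\Sys$ into Theorems~\ref{thm:complexity} and \ref{thm:complexity:nondeterministic}.
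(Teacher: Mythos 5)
Your proposal is correct and takes essentially the same route as the paper: the paper proves this theorem in one line, as an immediate consequence of Theorems~\ref{thm:complexity} and~\ref{thm:complexity:nondeterministic}, using exactly your observation that the encoding turns a level-$n$ Krivine machine into a reachability-equivalent \emph{flat} AOTPS of the same order (non-deterministic when the machine is non-deterministic), with the matching lower bound taken from the literature. The bookkeeping you spell out---flatness of the rules, singleton $S$-sets in the constant rules under the rank-$\leq 1$ assumption, and the level/order offset when invoking Theorem~\ref{thm:complexity:nondeterministic} at order $n+1$---is precisely what the paper leaves implicit in its ``immediate consequence'' claim.
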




\subsection{Ordered annotated multi-pushdown systems}

\label{sec:oapds}

Ordered annotated multi-pushdown systems are the common generalization of ordered multi-pushdown systems and annotated pushdown systems \cite{Hague:FSTTCS:2013}.
Such a system is comprised of $m > 0$ annotated higher-order pushdowns arranged from left to right, where each pushdown is of order $n > 0$.
While push operations are unrestricted, pop and collapse operations implicitly destroy all pushdowns to the left of the pushdown being manipulated,
in the spirit of \cite{BreveglieriCherubiniCitriniCrespi-Reghizzi:Ordered:1996,AtigBolligHabermehl:Ordered:2008,Atig:ordered:2012}.
\cite{Hague:FSTTCS:2013} has shown that reachability in this model can be decided in $mn$-fold exponential time,
by using a saturation-based construction leveraging on the previous analysis for the first-order case \cite{BreveglieriCherubiniCitriniCrespi-Reghizzi:Ordered:1996,AtigBolligHabermehl:Ordered:2008,Atig:ordered:2012}.
In App.~\ref{sec:oapds:app}, we provide a simple encoding of an annotated multi-pushdown system with parameters $(m, n)$
into an AOTPS of order $m n$.
It is essentially obtained by taking together our previous encodings of ordered (cf.{\@} Sec.~\ref{sec:ordered:mpds}) and annotated systems (cf.{\@} Sec.~\ref{sec:annotated:pda}).
As a consequence of this encoding,
by using the fact that an AOTPS of order $m n$ can be encoded by a Krivine machine of the same level (by Theorem.~\ref{thm:aotps2km}),
and by recalling the known fact that the latter can be encoded by a 1-stack annotated multi-pushdown system of order $m n$ \cite{Salvati:Walukiewicz:MSCS:2015},
we deduce that the concurrent behavior of an ordered $m$-stack annotated multi-pushdown system of order $n$
can be \emph{sequentialized} into a 1-stack annotated pushdown system of order $m n$ (thus at the expense of an increase in order).
The following complexity result is a direct consequence of Theorem~\ref{thm:complexity}.

\begin{theorem}[\cite{Hague:FSTTCS:2013}]
	Reachability in alternating ordered annotated multi-pushdown systems of parameters $(m, n)$ is in \nexptime {(mn)}.
\end{theorem}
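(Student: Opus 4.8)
The plan is to carry out the route announced just before the statement: encode an ordered annotated multi-pushdown system with parameters $(m,n)$ into an AOTPS of order $mn$, and then appeal to Theorem~\ref{thm:complexity}, which places reachability for order-$N$ AOTPSs in \nexptime{N}. Setting $N = mn$ gives the desired \nexptime{(mn)} upper bound, so the membership claim follows immediately once the encoding is in place and shown to preserve reachability.

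To build the encoding I would fuse the two constructions already given: each of the $m$ annotated higher-order pushdowns of order $n$ is represented as a tree as in Sec.~\ref{sec:annotated:pda}, and the $m$ pushdowns are grouped together in the manner of the multi-pushdown encoding of Sec.~\ref{sec:ordered:mpds}. The decisive choice is the order function: to stack level $j \in \{1, \dots, n\}$ of the $i$-th pushdown, with $i \in \{1, \dots, m\}$, I assign the global order $(i-1) \cdot n + j$, so that pushdown $i$ occupies the contiguous block $\{(i-1)n+1, \dots, in\}$ and the whole system ranges over orders $\{1, \dots, mn\}$. Push operations, being unrestricted, become shallow rules; pop and collapse operations become deep rules whose lookahead subtree carries the global order of the manipulated stack level.

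The core of the correctness argument is the verification of the AOTPS ordering condition for the pop and collapse rules. A pop or collapse at level $j$ of pushdown $i$ must destroy exactly two kinds of material: the annotation structure of order strictly below $j$ inside pushdown $i$, and the entire contents of the pushdowns $1, \dots, i-1$ lying to its left. Under the assignment $(i,j) \mapsto (i-1)n+j$ both kinds receive a global order strictly smaller than $(i-1)n+j$, which is precisely the order of the lookahead subtree; conversely everything that must survive (the higher substacks of pushdown $i$ and all of pushdowns $i+1, \dots, m$) receives a strictly larger global order. Thus the two distinct ordering restrictions of the source model, the horizontal one across pushdowns and the vertical one inside a single annotated pushdown, merge into one instance of the single AOTPS ordering condition, which forces exactly the siblings of order at most $(i-1)n+j$ to be discarded ($r$-ground). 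Since annotations are always created to sub-pushdowns of the same pushdown, collapse links stay within one order block and no cross-block bookkeeping is needed. A direct step-by-step check then shows $c \goesto^* D$ in the source model iff $\enc{}{c} \goesto_\Sys^* \enc{}{D}$, so reachability is preserved and Theorem~\ref{thm:complexity} yields the bound.

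The step I expect to be the main obstacle is engineering the tree shape so that, for every pop and collapse rule, the subtrees that the source semantics erases actually appear as siblings of the single lookahead subtree permitted by a deep rule, rather than being nested more deeply; this is where the interaction between the horizontal and vertical orderings has to be reconciled concretely, and it may require auxiliary shallow rules to expose the relevant subtree at the right level so that each original transition is simulated by a bounded number of AOTPS steps. Once the lexicographic order above is fixed, however, this is a faithful, if somewhat tedious, amalgamation of the encodings in Sec.~\ref{sec:ordered:mpds} and Sec.~\ref{sec:annotated:pda}, after which the complexity statement is a corollary of Theorem~\ref{thm:complexity} with no additional effort.
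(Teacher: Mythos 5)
Your proposal follows essentially the same route as the paper: the appendix encoding assigns level $k$ of pushdown $l$ the global order $(l-1)\cdot n+k$, renders pushes as shallow rules and pops/collapses as deep rules whose lookahead carries that order, verifies the ordering condition exactly as you do, and the theorem is then stated as a direct consequence of Theorem~\ref{thm:complexity}. The obstacle you flag (keeping the inspected and erased material as siblings of the single lookahead subtree, not nested deeper) is precisely what the paper resolves by hoisting, for each stack, the topmost symbol $(l,a)$, its annotation (rooted at a fixed-order marker $(l,\bullet)$), and all its substacks to be direct children of the root $\bullet$ — devices that also keep the encoding polynomial, which is needed for the bound to come out as \nexptime{(mn)} rather than one exponential worse.
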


We remark that our result is for alternating systems, while in \cite{Hague:FSTTCS:2013} they consider non-deterministic systems and obtain \nexptime{(m(n-1))} complexity.
It seems that their method can be extended to alternating systems, and then the complexity becomes \nexptime {(mn)} as well.



\section{Safety}
\label{sec:safety}

\ignore{

When the order is monotone w.r.t. the nesting of subtrees, we say that the tree is safe.
Formally, a tree $u$ is \emph{safe} if every subtree $t$ thereof has order $\ord(t) \leq \ord(u)$ and it is itself safe.
Safe trees will arise as a structural invariant in some of the translations presented in Sec.~\ref{sec:examples}.

A rewrite rule $l \rewritesto r$ is \emph{safe} if both $l$ and $r$ are safe.

We say that $\Sys$ is \emph{safe} if all its rules are safe.

We observe that, if we apply the encoding above to higher-order pushdown stack (i.e., without stack annotations),
then we obtain \emph{safe} trees.
Indeed, the only source of non-safety in the encoding tree $\tuple {a, k}(\enc 1 {u_1}, \dots, \enc k {u_k}, \enc i u)$
is in the annotation $u$ which can have rank $i$ higher than $k$ in general.
A similar translation from higher-order pushdown systems (i.e., without annotations) can be given,
yielding a \emph{safe} AOTPS by the considerations above.

\noindent
We remark that we can obtain a \emph{safe} AOPDS from the translation above
by imposing a simple structural condition on lambda terms (also called safety);
cf.{\@} App.~\ref{app:krivine:safety} for the details.

}

The notion of safety has been made explicit by Knapik,
Niwi\'nski, and Urzyczyn~\cite{KnapikNiwinskiUrzyczyn:Easy:2002} who identified the class
of \emph{safe recursive schemes}. They have shown that this class
defines the same set of infinite trees as higher-order pushdown
systems, i.e., the systems from Sec.~\ref{sec:annotated:pda} but
without annotations. Blum and Ong~\cite{Blum:Ong:2009} have extended 
the notion of safety to the simply-typed
$\lambda$-calculus in a clear way.  Then~\cite{Salvati:Walukiewicz:MSCS:2015} adapted it to
$\lambda Y$-calculus, and have shown that safe $\lambda Y$-terms correspond to
higher-order pushdown automata without annotation.

There is a simple notion of safety for AOTPSs that actually corresponds
to safety  for pushdown systems and terms.  We say
that a $(\Sigma\cup \Vars)$-tree is \emph{safe} when looking from the
root to the leafs the order does never increase.
Formally, a tree $u$ is \emph{safe} if every subtree $t$ thereof has
order $\ord(t) \leq \ord(u)$ and it is itself safe.
A rewrite rule $l \rewritesto r$ is \emph{safe} if both $l$ and $r$ are safe.
We say that $\Sys$ is \emph{safe} if all its rules are safe.

As a first example, let us look at the encoding of annotated
higher-order pushdown systems from Sec.~\ref{sec:annotated:pda}. If
we drop annotation then higher-order pushdowns are represented by safe
trees, and all the rules are safe in the sense above.
The case of Krivine machines is more difficult to explain,
because it would need the definition of safety from~\cite{Salvati:Walukiewicz:MSCS:2015}.
In particular, one would have to partition variables into \emph{lambda-variables} and \emph{$Y$-variables},
which we avoid in the current presentation for simplicity.
In the full version of the paper we will show that safe terms are encoded by safe trees,
and that all the rules of the encoding of the Krivine machine preserve safety.
Finally, we remark that the translation from AOTPSs to the Krivine machine with states previously announced in Theorem~\ref{thm:aotps2km}
can be adapted to produce a safe Krivine machine with states from a safe AOTPS.


\ignore{


 In order to examine the case of Krivine machines we formulate a
 definition of safe term. A term is $M$ is called \emph{locally unsafe}
 if it has free variables of level strictly smaller than $\level M$,
 otherwise it is \emph{locally safe}. An occurrence of a term $N$ as a
 subterm of a term $M$ is \emph{safe} if it is in the context
 $\dots (N\,K)\dots$.  A term is \emph{safe}, if all occurrences of
 locally unsafe subterms are safe.  An alternating Krivine Machine with
 states $\calM=\langle l,\Gamma,Q,M^0,\Delta\rangle$ is \emph{safe} if
 $M^0$ is safe. A configuration
 $(p,C^\tau,C_1^{\tau_1},\dots,C_k^{\tau_k})$ is \emph{safe} if in all
 its closures (also recursively inside environments), except
 potentially the head closure, skeletons are safe.  Notice that when
 the machine is safe, the skeleton of the head closure has to be a
 subterm of a safe term, but it need not to be safe itself (all
 occurrences of locally unsafe proper subterms are safe, but the whole
 term may be locally unsafe).  It can be checked that all
 configurations of the computation of the the safe Krivine machine are
 safe.

 Let us now look at our encoding of Krivine machines with states into
 AOTPSs. It follows that a safe configuration is encoded in a safe
 tree.  Indeed, $\enc{}{N^\tau,\rho}$ starts with a node of order
 $\ord(\tau)$, and since $N$ is safe (and in particularly locally safe)
 then in $\rho$ we have only closures of levels not smaller than
 $\mathsf{level}(\tau)$; the encodings of these closures will have
 orders not greater than $\ord(\tau)$.  The head closure (whose
 skeleton need not to be locally safe) is not encoded this way, but
 rather as the root labeled by a $[N]$ letter of maximal order $l$.
 It can be checked that the rules of the created AOTPS are safe as well.
%
%
}

\section{Conclusions}

\label{sec:conclusions}

We have introduced a novel extension of pushdown automata which is
able to capture several sophisticated models thanks to a simple
ordering condition on the tree-pushdown.  While ordered tree-pushdown
systems are not more expressive than annotated higher-order pushdown
systems, or than Krivine machines, they offer some conceptual
advantages. 
Compared to Krivine machines, they have states, and
typing is replaced by a lighter mechanism of ordering; for example, the
translation from our model back to the Krivine machine is much more cumbersome. Compared to
annotated pushdown automata, the tree-pushdown is more versatile than
a higher-order stack; for example,
one can compare the encoding of the Krivine machine into our model
to its encoding to annotated pushdown automata. We hope that ordered tree-pushdown
systems will help to establish more connections with other models, as
we have done in this paper with multi-pushdown systems.

There exist restrictions of multi-pushdown systems that we do not cover in this paper.
Reachability games are decidable for
phase-bounded multi-pushdown systems~\cite{Seth:Global:2010}. We can encode
the phase-bounded restriction directly in our tree-pushdown systems, but
we do not know how to deal with the scope-bounded restriction. Encoding
the scope-bounded restriction would give an algorithm for reachability
games over such systems, but we do not know if the problem is decidable.

%
Our general saturation algorithm can be used to verify reachability properties.
We plan to extend it to the more general \emph{parity properties},
in the spirit of \cite{HagueOng:SaturationPDS:IC2011}. 
We leave as future work implementing our saturation algorithm,
leveraging on subsumption techniques to keep the search space as small as possible.


\paragraph{\bf Acknowledgments.}
We kindly acknowledge stimulating discussions with Ir\`ene Durand, G\'eraud S\'enizergues, and Jean-Marc Talbot,
and the anonymous reviewers for their helpful comments.

\bibliographystyle{abbrv.bst}
\bibliography{local} 

\begin{thebibliography}{10}

\bibitem{aehlig07}
K.~Aehlig.
\newblock A finite semantics of simply-typed lambda terms for infinite runs of
  automata.
\newblock {\em Log. Methods Comput. Sci.}, 3(1):1--23, 2007.

\bibitem{Atig:ordered:2012}
M.~F. Atig.
\newblock Model-checking of ordered multi-pushdown automata.
\newblock {\em Log. Methods Comput. Sci.}, 8(3), 09 2012.

\bibitem{AtigBolligHabermehl:Ordered:2008}
M.~F. Atig, B.~Bollig, and P.~Habermehl.
\newblock Emptiness of multi-pushdown automata is {2ETIME}-complete.
\newblock In {\em Proc. of DLT'08}, pages 121--133. Springer, 2008.

\bibitem{Blum:Ong:2009}
W.~Blum and C.-H.~L. Ong.
\newblock The safe lambda calculus.
\newblock {\em Log. Methods Comput. Sci.}, 5(1), 2009.

\bibitem{BouajjaniEsparzaMaler:Pushdown:1997}
A.~Bouajjani, J.~Esparza, and O.~Maler.
\newblock Reachability analysis of pushdown automata: Application to
  model-checking and saturation method.
\newblock In {\em Proc. of CONCUR'97}, volume 1243 of {\em LNCS}, pages
  135--150, 1997.

\bibitem{BreveglieriCherubiniCitriniCrespi-Reghizzi:Ordered:1996}
L.~Breveglieri, A.~Cherubini, C.~Citrini, and S.~Crespi-Reghizzi.
\newblock Multi-push-down languages and grammars.
\newblock {\em Int. J. Found. Comput. Sci.}, 7(3):253--292, 1996.

\bibitem{BroadbentCarayolHagueSerre:Saturation:2012}
C.~Broadbent, A.~Carayol, M.~Hague, and O.~Serre.
\newblock A saturation method for collapsible pushdown systems.
\newblock In {\em Proc. of ICALP'12}, volume 7392 of {\em LNCS}, pages
  165--176, 2012.

\bibitem{BroadbentCarayolHagueSerre:C-SHORe:2013}
C.~Broadbent, A.~Carayol, M.~Hague, and O.~Serre.
\newblock {C-SHORe}: A collapsible approach to higher-order verification.
\newblock In {\em Proc. of ICFP '13}, pages 13--24. ACM, 2013.

\bibitem{BroadbentK13}
C.~Broadbent and N.~Kobayashi.
\newblock Saturation-based model checking of higher-order recursion schemes.
\newblock In {\em In Proc. of CSL'13}, pages 129--148, 2013.

\bibitem{CarayolHague:SaturationSurvey:2014}
A.~Carayol and M.~Hague.
\newblock Saturation algorithms for model-checking pushdown systems.
\newblock In {\em Proc. of AFL'14}, volume 151 of {\em EPTCS}, pages 1--24, 5
  2014.

\bibitem{CyriacGK12}
A.~Cyriac, P.~Gastin, and K.~N. Kumar.
\newblock {MSO} decidability of multi-pushdown systems via split-width.
\newblock In {\em In Proc. of CONCUR'12}, pages 547--561, 2012.

\bibitem{Esparza:Schwoon:MOPED:2001}
J.~Esparza and S.~Schwoon.
\newblock A {BDD}-based model checker for recursive programs.
\newblock In {\em Proc. of CAV'01}, pages 324--336. Springer-Verlag, 2001.

\bibitem{FinkelWillemsWolper:Pushdown:1997}
A.~Finkel, B.~Willems, and P.~Wolper.
\newblock A direct symbolic approach to model checking pushdown systems.
\newblock In {\em Proc. of INFINITY'97}, volume~9, pages 27--37, 1997.

\bibitem{Guessarian:TPDA:1983}
I.~Guessarian.
\newblock Pushdown tree automata.
\newblock {\em Theor. Comp. Sys.}, 16:237--263, 1983.

\bibitem{Hague:FSTTCS:2013}
M.~Hague.
\newblock Saturation of concurrent collapsible pushdown systems.
\newblock In {\em Proc. of FSTTCS'13}, volume~24 of {\em LIPIcs}, pages
  313--325, Dagstuhl, Germany, 2013.

\bibitem{HagueMurawskiOngSerre:Collapsible:2008}
M.~Hague, A.~Murawski, C.-H.~L. Ong, and O.~Serre.
\newblock Collapsible pushdown automata and recursion schemes.
\newblock In {\em Proc. of LICS'08}, pages 452--461, 2008.

\bibitem{HagueOng:SaturationPDS:IC2011}
M.~Hague and C.-H. Ong.
\newblock A saturation method for the modal mu-calculus over pushdown systems.
\newblock {\em Inform. and Comput.}, 209(5):799 -- 821, May 2011.

\bibitem{Kartzow:Parys:MFCS:2012}
A.~Kartzow and P.~Parys.
\newblock Strictness of the collapsible pushdown hierarchy.
\newblock In {\em Proc. of MFCS'12}, pages 566--577. Springer, 2012.

\bibitem{KnapikNiwinskiUrzyczyn:Easy:2002}
T.~Knapik, D.~Niwi{\'n}ski, and P.~Urzyczyn.
\newblock Higher-order pushdown trees are easy.
\newblock In {\em Proc. of FOSSACS'02}, volume 2303 of {\em LNCS}, pages
  205--222, 2002.

\bibitem{KnapikNikinskiUrzyczynWalukiewicz:Panic:2005}
T.~Knapik, D.~Niwi\'{n}ski, P.~Urzyczyn, and I.~Walukiewicz.
\newblock Unsafe grammars and panic automata.
\newblock In {\em Proc. of ICALP'05}, pages 1450--1461. Springer-Verlag, 2005.

\bibitem{Krivine:Machine:2007}
J.-L. Krivine.
\newblock A call-by-name lambda-calculus machine.
\newblock {\em Higher Order Symbol. Comput.}, 20:199--207, September 2007.

\bibitem{Maslov:Multilevel:1976}
A.~N. Maslov.
\newblock Multilevel stack automata.
\newblock {\em Probl. Peredachi Inf.}, 12(1):55--62, 1976.

\bibitem{SalvatiWalukiewicz:Krivine:2011}
S.~Salvati and I.~Walukiewicz.
\newblock Krivine machines and higher-order schemes.
\newblock In {\em Proc. of ICALP'11}, volume 6756 of {\em LNCS}, pages
  162--173. Springer-Verlag, 2011.

\bibitem{SalvatiWalukiewicz:InfComp:2014}
S.~Salvati and I.~Walukiewicz.
\newblock Krivine machines and higher-order schemes.
\newblock {\em Inf. Comput.}, 239(0):340--355, 2014.

\bibitem{Salvati:Walukiewicz:MSCS:2015}
S.~Salvati and I.~Walukiewicz.
\newblock Simply typed fixpoint calculus and collapsible pushdown automata.
\newblock {\em Math. Struct. in Comp. Science}, pages 1--47, 5 2015.

\bibitem{Seth:Global:2010}
A.~Seth.
\newblock Global reachability in bounded phase multi-stack pushdown systems.
\newblock In T.~Touili, B.~Cook, and P.~Jackson, editors, {\em Proc. of
  CAV'10}, volume 6174 of {\em LNCS}, pages 615--628. Springer, 2010.

\end{thebibliography}
\clearpage
\appendix

\section{Proof of Lemma~\ref{lem:saturation:correct}}

\label{app:correctness}

Let $\ATA A$ be the automaton recognizing the target set of configurations,
and let $\ATA B$ be the automaton obtained at the end of the
saturation procedure (cf.~page~\pageref{def:B}).

\lemsaturationiscorrect*

\noindent
We prove the two inclusions of the lemma separately.

\begin{lemma}[Completeness]
	For $\ATA A$ and $\ATA B$ as above,
	$\prestar {\lang {\ATA A, P}} \subseteq \lang {\ATA B, P}$.
\end{lemma}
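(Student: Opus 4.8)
The plan is to establish $\prestar{\lang{\ATA A, P}} \subseteq \lang{\ATA B, P}$ by induction on the length of an alternating reachability witness. Writing $C := \lang{\ATA A, P}$, I would assign to each $c \in \prestar C$ a rank equal to the least depth of a derivation of $\set c \goesto^*_1 C$, and induct on this rank. Two elementary facts about alternating tree automata will be used throughout, and I would record them first: (a) a tree $w$ admits a run tree from a set of states $R$ if and only if it admits a run tree from $\set q$ for every $q \in R$, which is immediate from the fact that the transition relation is lifted to sets by union; and (b) if $w$ is accepted from $R_1$ and from $R_2$, then it is accepted from $R_1 \cup R_2$, obtained by overlaying the two run trees node by node. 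The base case (rank $0$) is then trivial: here $c = (p,t)$ with $t \in \lang p$ in $\ATA A$, and since $\ATA B$ only adds states and transitions ($\Delta \subseteq \Delta'$), every accepting run of $\ATA A$ is one of $\ATA B$, whence $c \in \lang{\ATA B, P}$.

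For the inductive step, the first rewriting layer applies a rule $g = (p, l \rewritesto S, r) \in \Rules$ with a substitution $\sigma$, so that $t = l\sigma$ and every $(q, r\sigma)$ with $q \in S$ has strictly smaller rank. By the induction hypothesis $r\sigma \in \lang q$ in $\ATA B$ for each such $q$, so by fact (a) there is a run tree $s$ from $S$ on $r\sigma$ in $\ATA B$. Cutting $s$ at the positions where the variables of $r$ occur yields a run tree $\bar s$ from $S$ on $r$ (each such leaf is labelled by a set of states of order $\ord(\var x)$, as the definition of run trees requires, because $\ATA B$ is ordered). This $\bar s$ is exactly one of the run trees on $r$ over which the saturation procedure quantifies, so the transitions it generates for $g$ and $\bar s$ — the pair in \eqref{eq:rules} for a deep rule, or just the first one for a shallow rule — already belong to $\Delta'$.

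It then remains to assemble these transitions into a run tree from $\set p$ on $t = l\sigma$. Reading the root symbol $a$ by the first transition of \eqref{eq:rules}, I would verify for each argument position $j$ that $u_j\sigma$ is accepted from $P^{\bar s}_j$, and, in the deep case, that the lookahead child $b(v_1\sigma,\dots,v_{m'}\sigma)$ is accepted from $P^{\bar s}_b$ via the second transition, with each $v_j\sigma$ accepted from $S^{\bar s}_j$. There are two subcases. If the subtree is $r$-ground, then $P^{\bar s}_j = \set{p^{u_j}}$ and $u_j\sigma \in \lang{p^{u_j}}$ holds by the very definition of the states in $Q_0$ and transitions in $\Delta_0$. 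If instead the subtree is a variable $\var x$ occurring in $r$, then $P^{\bar s}_j = \bigcup \bar s(r^{-1}(\var x))$, and since $\sigma(\var x)$ is the common subtree sitting under every occurrence of $\var x$ in $r\sigma$, it is accepted by $s$ from each state set labelling such an occurrence; by fact (b) it is therefore accepted from their union $P^{\bar s}_j$. The $v_j\sigma$ are handled identically with $S^{\bar s}_j$, and the ordering condition guarantees $P^{\bar s}_{k+1},\dots,P^{\bar s}_m \subseteq Q'_{\ord(b)+1}$, so $(g, P^{\bar s}_{k+1},\dots,P^{\bar s}_m)$ is a legitimate state of $\ATA B$. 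Combining these pieces gives a run tree from $\set p$ on $l\sigma$, i.e.\ $(p,t) \in \lang{\ATA B, P}$.

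The main obstacle I anticipate is the bookkeeping for variables of $r$ that occur several times: one must match the single subtree $\sigma(\var x)$ against the possibly different state sets assigned to its various occurrences by $s$, and it is precisely the union-closure fact (b) that makes the saturation's choice $P^{\bar s}_j = \bigcup \bar s(r^{-1}(\var x))$ work. Keeping the correspondence between the run $s$ on $r\sigma$, its restriction $\bar s$ on $r$, and the induced sets $P^{\bar s}_j, S^{\bar s}_j, P^{\bar s}_b$ straight — together with checking that the deep-rule lookahead state lands in the correct stratum — is the delicate part; the remainder is routine verification.
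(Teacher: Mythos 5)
Your proposal is correct and follows essentially the same route as the paper's proof: induction on the length of the shortest rewrite sequence, restricting the run tree on $r\sigma$ to a run tree on $r$, and reusing exactly the transitions that saturation adds for that rule and run tree, with the same case split between $r$-ground subtrees (handled by $p^{u_j}$) and variables of $r$ (handled by $\bigcup \bar s(r^{-1}(\var x))$), plus the lookahead state in the deep case. The only difference is presentational: you make explicit the union/overlay facts for alternating run trees and the order check at variable leaves, which the paper uses implicitly.
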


\begin{proof}
	Let $(p, t)$ be a configuration in $\prestar {\lang {\ATA A, P}}$. 
	We show $(p, t) \in \lang {\ATA B, P}$ by induction on the length $d \geq 0$
	of the shortest sequence of rewrite steps from $(p, t)$ to $\lang {\ATA A, P}$.
	If $d = 0$, then $(p, t) \in \lang {\ATA A, P}$.
	Since the saturation procedure only adds states and transitions to $\ATA A$,
	we directly have $(p, t) \in \lang{\ATA B, P}$.
	Inductively, assume that the property holds for all configurations reaching $\lang {\ATA A, P}$ in at most $d \geq 0$ steps,
	and let configuration $(p, t)$ be at distance $d+1 > 0$ from $\lang {\ATA A, P}$.
	There exists a rule $(p, l \rewritesto S, r) \in \Rules$ 
	and a substitution $\sigma$ s.t.{\@} $t = l\sigma$ and from every configuration in $S\times\{r\sigma\}$ we can reach $\lang {\ATA A, P}$ in at most $d$ steps.
	%
	%
	Let $l=a(u_1,\dots,u_m)$ and $t=a(t_1, \dots, t_m)$.
	By induction hypothesis, $S \times \set{r\sigma} \subseteq \lang{\ATA B, P}$, thus $\ATA B$ has a run tree $\beta$ from $S$ on $r\sigma$.
	Its part, also denoted $\beta$, is a run tree from $S$ on $r$.
	Suppose first that our rule is shallow, and
	consider the transition $p \trans a P_1 \cdots P_m$ added to $\Delta'$ in the saturation procedure because of this rule $p,l\rewritesto S,r$ and this run tree $\beta$.
	This transition can be used in the root of $t$, so it suffices to show that $t_1 \in \lang {P_1}, \dots, t_m \in \lang {P_m}$.
	If $u_i$ is $r$-ground, then $P_i = \set{p^{u_i}}$ and $t_i \in \lang {p^{u_i}}$ by construction.
	If $u_i = \var x$ is a variable appearing in $r$, then by definition $P_i = \bigcup \beta(r^{-1}(\var x))$.
	Since $\beta$ is a run tree on the whole $r\sigma$, we have $t_i=\sigma(\var x)\in\lang{P_i}$.
	The case of a deep rule is similar. 
	Let $u_k=b(v_1, \dots, v_{m'})$ be the lookahead subtree of $l$ that is neither $r$-ground nor a variable, and let $t_k = b(s_1, \dots, s_{m'})$.
	Then in the root of $t_k$ we use the second added transition $(g, P_{k+1}, \dots, P_m) \trans b S_1 \cdots S_{m'}$.
	It suffices to show $s_1 \in \lang {S_1}, \dots, s_{m'} \in \lang {S_{m'}}$, which is done as above.
\end{proof}

\begin{lemma}[Soundness]
	\label{lem:soundness}
	For $\ATA A$ and $\ATA B$ as above,
	$\lang {\ATA B, P} \subseteq \prestar {\lang {\ATA A, P}}$.
\end{lemma}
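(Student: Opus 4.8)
The plan is to establish the converse inclusion by the standard ``soundness'' half of a saturation argument: I assign to every state of $\ATA B$ a fixed set of trees, show that each transition of $\ATA B$ respects this assignment, and conclude that acceptance by $\ATA B$ forces membership in $\prestar{\lang{\ATA A, P}}$. Concretely, I define $\sem q \subseteq \Trees \Sigma$ for each state $q$ as follows: for an initial state $p \in P$, $\sem p = \setof{t}{(p, t) \in \prestar{\lang{\ATA A, P}}}$; for an original non-initial state $q \in Q \setminus P$, $\sem q = \lang q$ (this language is the same in $\ATA A$ and $\ATA B$, since initial states have no incoming transitions, so saturation adds no transition reachable from such a $q$); for an auxiliary state $p^v \in Q_0$, $\sem{p^v} = \lang{p^v}$, the set of instances of $v$; and for a context state coming from a deep rule $g = (p, a(u_1, \dots, u_k, b(v_1, \dots, v_{m'}), u_{k+1}, \dots, u_m) \rewritesto S, r)$,
\[
  \sem{(g, P_{k+1}, \dots, P_m)} = \setof{s}{a(r_1, \dots, r_k, s, r_{k+1}, \dots, r_m) \in \sem p \text{ whenever } r_j \in \sem{p^{u_j}}\ (j \le k),\ r_j \in \sem{P_j}\ (j > k)}.
\]
I extend the notation by $\sem R = \bigcap_{q \in R} \sem q$. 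The point of the context-state interpretation is that it records, through the sets $P_{k+1}, \dots, P_m$, exactly the constraints on the surviving high-order siblings needed to decide whether re-inserting a lookahead subtree into the context $a(\dots)$ yields a predecessor; the low-order siblings, which the ordering condition forces to be discarded, are quantified over all instances of the corresponding $u_j$.

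Call a transition $q \trans a R_1 \cdots R_l$ \emph{sound} if $a(t_1, \dots, t_l) \in \sem q$ whenever $t_i \in \sem{R_i}$ for every $i$. The technical core is a \emph{substitution lemma}: if $\beta$ is a run tree from a set $R$ on a $(\Sigma \cup \Vars)$-tree $e$ using only sound transitions, and $\sigma$ is a substitution with $\sigma(\var x) \in \sem{\beta(u)}$ for each variable leaf $u$ of $e$ (where $e(u) = \var x$), then $e\sigma \in \sem R$. I would prove this by induction on $e$: an internal node is handled by soundness of the transition used there, a variable leaf by the hypothesis on $\sigma$, and a ground leaf by the base transition. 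Instantiated with a ground $e$ and empty $\sigma$, it gives $\lang q \subseteq \sem q$ for every state $q$ of $\ATA B$.

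It then remains to verify that every transition of $\ATA B$ is sound, which I would do by induction on the saturation step at which it is introduced, so that the run tree witnessing a new transition may be assumed to use only already-sound transitions. Transitions inherited from $\ATA A$ and those in $\Delta_0$ are sound directly from the definitions (for $p \in P$ a transition of $\ATA A$ yields $(p, a(t_1, \dots, t_l)) \in \lang{\ATA A, P} \subseteq \prestar{\lang{\ATA A, P}}$). For the pair added from a deep rule $g$ together with a run tree $t$ from $S$ on $r$, consider $(g, P^t_{k+1}, \dots, P^t_m) \trans b S^t_1 \cdots S^t_{m'}$. Take $s_i \in \sem{S^t_i}$ and arbitrary $r_j \in \sem{p^{u_j}}$ $(j \le k)$, $r_j \in \sem{P^t_j}$ $(j > k)$. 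The tree $a(r_1, \dots, r_k, b(s_1, \dots, s_{m'}), r_{k+1}, \dots, r_m)$ matches the left-hand side of $g$ under the substitution $\sigma$ sending each surviving variable $u_j$ to $r_j$ and each lookahead variable in $v_i$ to the relevant part of $s_i$, so $g$ rewrites the configuration $(p, a(\dots))$ to $S \times \set{r\sigma}$. Since $t(u) \subseteq P^t_j$ at every occurrence of $u_j$ in $r$ and $t(u) \subseteq S^t_i$ at every occurrence of a $v_i$-variable, we have $\sigma(\var x) \in \sem{t(u)}$ at each variable leaf of $r$; the substitution lemma then gives $r\sigma \in \sem S$, i.e. $(S, r\sigma) \subseteq \prestar{\lang{\ATA A, P}}$, and one backward rewriting step yields $a(\dots) \in \sem p$, that is $b(s_1, \dots, s_{m'}) \in \sem{(g, P^t_{k+1}, \dots, P^t_m)}$. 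Soundness of the companion transition $p \trans a P^t_1 \cdots P^t_k P_b^t P^t_{k+1} \cdots P^t_m$ is then immediate from the definition of the context-state interpretation, and a shallow rule is treated by the same backward-rewriting argument applied to its single added transition.

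Putting the pieces together, $\lang q \subseteq \sem q$ for every $q$ specialises at $p \in P$ to $\lang p \subseteq \sem p = \setof{t}{(p, t) \in \prestar{\lang{\ATA A, P}}}$, which is precisely $\lang{\ATA B, P} \subseteq \prestar{\lang{\ATA A, P}}$. I expect the main obstacle to be fixing the context-state interpretation so that both paired transitions become sound at once: the definition must universally quantify over exactly the high-order siblings tracked by the $P_j$ (and over all instances of the discarded low-order $u_j$), and this is where the ordering condition is essential, since it guarantees that every sibling surviving into $r$ is of order strictly above $\ord(b)$ and hence properly accounted for by the context state.
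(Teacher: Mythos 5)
Your proposal is correct and follows essentially the same route as the paper's proof: the same state semantics (with the context-state interpretation quantifying over $\sem{p^{u_j}}$ for the discarded low-order siblings and over $\sem{P_j}$ for the surviving ones), the same notion of sound transition, soundness of $\Delta\cup\Delta_0$, an induction over the saturation steps using backward rewriting, and the final specialisation at $p\in P$. The only differences are cosmetic: you merge the paper's two inductions (ground trees, and run trees on the variable-containing r.h.s.\ $r$) into a single substitution lemma, you verify the two added transitions in the opposite order, and you drop the paper's harmless extra requirement $t_k\in\sem{p^{u_k}}$ in the context-state semantics---none of which changes the argument.
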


\noindent
The soundness proof requires several steps.
First, we assign a semantics $\sem p \subseteq \Trees{\Sigma}$ to all states $p$ in $\ATA B$.
For a set of states $S \subseteq Q'$, $\sem S := \bigcap_{p\in S} \sem p$.
For $p\in Q'_{n+1}$ we take
\begin{align*}
	\sem p := \left\{ \begin{array}{ll}
		\setof {t} {(p, t) \in \prestar {\lang {\ATA A, P}}} & \textrm{if } p \in P, \\
		\mathcal L_{\ATA A}(p) & \textrm{if } p \in Q \setminus P,\\
		\mathcal L_{\ATA B}(p) & \textrm{if } p=p^v\in Q_0.
	\end{array}\right.
\end{align*}
Then by induction on $n-i$ we define $\sem p$ for $p=((q,l\rewritesto S,r),P_{k+1},\dots,P_m)\in Q'_i\setminus Q'_{i+1}$.
Let $l=a(u_1,\dots,u_m)$, where $u_k$ is the lookahead subtree.
As $\sem p$ we take the set of trees $t_k\in\sem{p^{u_k}}$ s.t.{\@} for all $t_1\in\sem{p^{u_1}},\dots,t_{k-1}\in\sem{p^{u_{k-1}}}$ and all $t_{k+1} \in \sem {P_{k+1}}, \dots, t_m \in \sem {P_m}$
it holds $(q, a(t_1, \dots, t_m)) \in \prestar {\lang {\ATA A, P}}$.
Notice that $P_{k+1}, \dots, P_m \subseteq Q_{i+1}'$, so $\sem {P_{k+1}}, \dots, \sem {P_m}$ as well as $\sem{p^{u_1}},\dots,\sem{p^{u_k}}$ are already defined.
%
%
Second, we define sound transitions as those respecting the semantics.  
Formally, a transition $p \trans a P_1 \cdots P_m$ is \emph{sound} iff
$\forall (t_1 \in \sem {P_1}, \dots, t_m \in \sem {P_m})$, $a(t_1, \dots, t_m) \in \sem p$.
%
%
\begin{proposition}\label{prop:sound}
	If all transitions are sound, then $\lang p \subseteq \sem p$ for every $p \in Q'$.
\end{proposition}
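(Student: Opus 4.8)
The plan is to prove a slightly stronger statement, by structural induction on the input tree: \emph{for every $\Sigma$-tree $t$ and every set of states $P \subseteq Q'$, if $t \in \lang P$ then $t \in \sem P$}, where as usual $\sem P = \bigcap_{p \in P} \sem p$. The proposition is then just the special case $P = \set p$. Strengthening to arbitrary sets of states is exactly what makes the induction close: the alternating transition relation is defined set-wise, and the semantics of a set is an intersection, so the singleton statement alone would not feed back into itself.

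First I would fix $t = a(t_1, \dots, t_m) \in \lang P$, witnessed by a run tree $s : D \to \pow Q$ with $s(\varepsilon) = P$. Since $t$ is ground, only clause i) of the run-tree definition applies at the root, giving $P \trans a P_1 \cdots P_m$ with $P_j = s(j)$. By the definition of the extended transition relation, for each $p \in P$ there is an individual transition $p \trans a P_1^p \cdots P_m^p$ with $P_j = \bigcup_{p \in P} P_j^p$. The subtree of $s$ at child $j$ is a run tree from $P_j$ on $t_j$, so $t_j \in \lang{P_j}$, and the induction hypothesis yields $t_j \in \sem{P_j}$ for every $j$.

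Next I would invoke monotonicity of the semantics: since $P_j^p \subseteq P_j$, we have $\sem{P_j} = \bigcap_{q \in P_j} \sem q \subseteq \bigcap_{q \in P_j^p} \sem q = \sem{P_j^p}$, hence $t_j \in \sem{P_j^p}$. Now fixing any $p \in P$ and applying soundness of the transition $p \trans a P_1^p \cdots P_m^p$ to the tuple $(t_1, \dots, t_m)$ gives $a(t_1, \dots, t_m) \in \sem p$. As this holds for every $p \in P$, we conclude $t \in \bigcap_{p \in P} \sem p = \sem P$. The base case $m = 0$ (a leaf) is the same argument with the conjunction over children being vacuous, so soundness of the nullary transition $p \trans a$ directly yields $a \in \sem p$; the degenerate case $P = \emptyset$ is consistent as well, since then $\lang P = \sem P = \Trees{\Sigma}$.

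I do not expect a genuine obstacle here once the statement is strengthened to sets; the whole argument is a routine structural induction. The only point requiring care is the bookkeeping of set-valued transitions: a single run tree from $P$ encodes, for each $p \in P$, a \emph{possibly different} witnessing transition $p \trans a P_1^p \cdots P_m^p$, whereas soundness is a hypothesis about these individual transitions, so the monotonicity step bridging $\sem{P_j}$ and $\sem{P_j^p}$ is indispensable. The real difficulty in the soundness direction lies not in this proposition but in the subsequent lemmas, which must establish that the initial automaton is sound and that the saturation procedure only ever adds sound transitions.
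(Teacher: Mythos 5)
Your proof is correct and follows essentially the same route as the paper's: an induction on the input tree, invoking the induction hypothesis on the children and then soundness of the root transition to conclude $a(t_1,\dots,t_m)\in\sem p$. The only difference is presentational — you strengthen the statement to sets of states and spell out the decomposition of the extended transition $P \trans a P_1\cdots P_m$ into individual sound transitions $p \trans a P_1^p\cdots P_m^p$ together with monotonicity of $\sem{\cdot}$, bookkeeping that the paper's two-line proof (stated for a single $p$, applying the induction hypothesis set-wise to $t_j\in\lang{P_j}$) leaves implicit.
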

\begin{proof}
	Let $t \in \lang p$.
	We proceed by complete induction on the height of $t$.
	%
	Let $t = a(t_1, \dots, t_m)$ (possibly $m=0$ if $t=a$ is a leaf).
	There exists a sound transition $p \trans a P_1 \cdots P_m$
	s.t.{\@} $t_1 \in \lang {P_1}, \dots, t_m \in \lang {P_m}$.
	By induction hypothesis, $t_1 \in \sem {P_1}, \dots, t_m \in \sem {P_m}$,
	and thus by the definition of sound transition, $a(t_1, \dots, t_m) \in \sem p$.
\end{proof}

\begin{proposition}\label{prop:sound:initial}
	Transitions in $\Delta \cup \Delta_0$ are sound.
\end{proposition}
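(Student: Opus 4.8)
The plan is to unfold the definition of a sound transition and treat the two families $\Delta$ and $\Delta_0$ separately, exploiting in each case that the states involved have a semantics coinciding with an automaton language (of $\ATA A$ for transitions in $\Delta$, of $\ATA B$ for those in $\Delta_0$). Concretely, I would fix a transition $p \trans a P_1 \cdots P_m$ in $\Delta \cup \Delta_0$ together with arbitrary trees $t_1 \in \sem{P_1}, \dots, t_m \in \sem{P_m}$, and aim to show $a(t_1, \dots, t_m) \in \sem p$. In both cases the heart of the matter is the routine fact that one transition plus accepting run trees for the children assemble into a run tree for the parent; the bookkeeping is entirely in checking that the semantics $\sem\cdot$ reduces to the appropriate language.

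For a transition of $\Delta$, I would first invoke the w.l.o.g.\ assumption that states of $P$ have no incoming transitions in $\ATA A$. This forces $P_1, \dots, P_m \subseteq Q \setminus P$, so that $\sem{P_j} = \bigcap_{q \in P_j}\mathcal L_{\ATA A}(q)$ and hence each $t_j$ has an $\ATA A$-run tree from every state of $P_j$, and thus (merging them) a run tree from $P_j$. Feeding these through the transition $p \trans a P_1 \cdots P_m$ of $\ATA A$ produces an $\ATA A$-run tree for $a(t_1, \dots, t_m)$ from $p$, i.e.\ $a(t_1, \dots, t_m) \in \mathcal L_{\ATA A}(p) = \lang p$. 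If $p \in Q \setminus P$, this is exactly $\sem p$ and we are done. If instead $p \in P$, then $(p, a(t_1, \dots, t_m)) \in \lang{\ATA A, P}$, and since $\prestar\cdot$ is reflexive (as $C \subseteq \prestar C$), we obtain $a(t_1, \dots, t_m) \in \sem p$.

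For a transition of $\Delta_0$ the source is some $p^v \in Q_0$ and, by construction of $\Delta_0$, its targets again range over $Q_0$, whose semantics is by definition the $\ATA B$-language. Thus from $t_j \in \sem{P_j} = \bigcap_{q \in P_j}\mathcal L_{\ATA B}(q)$ each $t_j$ has a run tree of $\ATA B$ from every state of $P_j$, hence from $P_j$; and since the transition lies in $\Delta_0 \subseteq \Delta'$ it is a transition of $\ATA B$. Combining these run trees through it yields an $\ATA B$-run tree for $a(t_1, \dots, t_m)$ from $p^v$, i.e.\ $a(t_1, \dots, t_m) \in \mathcal L_{\ATA B}(p^v) = \sem{p^v}$.

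The only genuinely delicate point is the subcase $p \in P$ of a $\Delta$-transition: there soundness is not about membership in an automaton language but about membership in $\prestar{\lang{\ATA A, P}}$, and it is exactly the no-incoming-transitions assumption (which keeps the targets inside $Q \setminus P$, where $\sem\cdot$ is the plain $\ATA A$-language) together with reflexivity of $\prestar\cdot$ that bridges this gap. Everywhere else the argument collapses to the definition of a run tree, so I expect no additional machinery to be needed.
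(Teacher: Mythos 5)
Your proposal is correct and follows essentially the same route as the paper's proof: for $\Delta$-transitions, use the no-incoming-transitions assumption to place all targets in $Q \setminus P$, assemble run trees of $\ATA A$ to get $a(t_1,\dots,t_m) \in \mathcal L_{\ATA A}(p)$, and split on $p \in P$ versus $p \notin P$ using $\lang{\ATA A, P} \subseteq \prestar{\lang{\ATA A, P}}$; for $\Delta_0$-transitions, reduce directly to the $\ATA B$-language of the states $p^v$. The extra details you supply (merging run trees, noting that $\Delta_0$-targets stay in $Q_0$) are exactly the steps the paper leaves implicit.
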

\begin{proof}
	Let $(p \trans a P_1 \cdots P_m) \in \Delta$,
	and let $t_1 \in \sem {P_1}, \dots, t_m \in \sem {P_m}$.
	Since we assume that there are no transitions back to the initial states in $P$,
	we have $P_1, \dots, P_m \subseteq Q \setminus P$,
	and thus $t_1 \in \mathcal L_{\ATA A}(P_1), \dots, t_m \in \mathcal L_{\ATA A}(P_m)$ by the definition of the semantics.
	Consequently, $t := a(t_1, \dots, t_m) \in \mathcal L_{\ATA A}(p)$.
	If $p \not\in P$ we are done, since $\sem p = \mathcal L_{\ATA A}(p)$ in this case.
	Otherwise, if $p \in P$ then $(p, t) \in \lang {\ATA A, P}$, which is included in $\prestar {\lang{\ATA A, P}}$,
	and thus we have $t \in \sem p$ by definition.
	
	For $\Delta_0$ the situation is even simpler.
	Let $p \trans a P_1 \cdots P_m \in \Delta_0$,
	and let $t_1 \in \sem {P_1}, \dots, t_m \in \sem {P_m}$.
	By the definition of the semantics we have $t_1 \in \mathcal L_{\ATA B}(P_1), \dots, t_m \in \mathcal L_{\ATA B}(P_m)$ which implies that $t := a(t_1, \dots, t_m) \in \mathcal L_{\ATA B}(p)=\sem p$.
\end{proof}


%
%
%
%
%
%
%
%
\begin{proposition}\label{prop:sound:inductive}
	The saturation procedure adds only sound transitions.
\end{proposition}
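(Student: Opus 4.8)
The plan is to prove the statement by induction on the stage at which the saturation adds a transition, so that when a new pair of transitions is created from a deep rule $g=(p,a(u_1,\dots,u_k,b(v_1,\dots,v_{m'}),u_{k+1},\dots,u_m)\rewritesto S,r)$ together with a run tree $t$ from $S$ on $r$ in $\ATA B$, I may assume that every transition used by $t$ is already sound. The backbone is a generalisation of Proposition~\ref{prop:sound} to trees with variables, which I would isolate as a sublemma: \emph{if $t$ is a run tree from $S$ on a $(\Sigma\cup\Vars)$-tree $r$ using only sound transitions, and $\sigma$ is a substitution with $\sigma(\var x)\in\sem{\bigcup t(r^{-1}(\var x))}$ for every variable $\var x$ of $r$, then $r\sigma\in\sem S$.} Its proof is the same bottom-up induction as Proposition~\ref{prop:sound}: at an internal node the hypothesis on the children together with soundness of the transition fired there gives membership in $\sem{t(w)}$, while at a variable leaf $w$ one uses the hypothesis on $\sigma$ and the inclusion $\sem{\bigcup t(r^{-1}(\var x))}\subseteq\sem{t(w)}$ (larger state set, smaller semantics).

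Granting the sublemma, I would first dispatch the easy transition (T1) $p\trans a P^t_1\cdots P^t_k P^t_b P^t_{k+1}\cdots P^t_m$ with $P^t_b=\set{(g,P^t_{k+1},\dots,P^t_m)}$. Its soundness is immediate from the definition of $\sem{(g,P^t_{k+1},\dots,P^t_m)}$: the low-order positions are $r$-ground, so $P^t_j=\set{p^{u_j}}$ and hence $\sem{P^t_j}=\sem{p^{u_j}}$; thus the quantifications in the soundness condition of (T1) match exactly those in the semantics of the context state, and the latter asserts precisely $a(t_1,\dots,t_k,s,t_{k+1},\dots,t_m)\in\sem p=\setof t{(p,t)\in\prestar{\lang{\ATA A,P}}}$. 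For a \emph{shallow} rule there is no lookahead and the single added transition is covered by the main argument below applied uniformly to all positions.

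The core is soundness of (T2) $(g,P^t_{k+1},\dots,P^t_m)\trans b S^t_1\cdots S^t_{m'}$. Fixing $s_j\in\sem{S^t_j}$, I must show $b(s_1,\dots,s_{m'})\in\sem{(g,P^t_{k+1},\dots,P^t_m)}$. First, $b(s_1,\dots,s_{m'})\in\sem{p^{b(v_1,\dots,v_{m'})}}$ because each $s_j$ instantiates $v_j$ (if $v_j$ is $r$-ground then $S^t_j=\set{p^{v_j}}$, so $s_j\in\sem{p^{v_j}}$, the set of instantiations of $v_j$; if $v_j$ is a variable, $s_j$ only needs the correct order). For the universal part I take arbitrary $t_j\in\sem{p^{u_j}}$ at the low-order positions and $t_j\in\sem{P^t_j}$ at the high-order positions and build one substitution $\sigma$ with $u_j\sigma=t_j$ and $v_j\sigma=s_j$, so that $l\sigma=a(t_1,\dots,t_k,b(s_1,\dots,s_{m'}),t_{k+1},\dots,t_m)$; linearity of $l$ makes this consistent, and $r$-groundness of the low-order (and of any $r$-ground high-order) positions means those choices do not constrain $r$. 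By construction $\sigma(\var x)\in\sem{\bigcup t(r^{-1}(\var x))}$ at every variable of $r$, so the sublemma yields $r\sigma\in\sem S$, i.e.\ $S\times\set{r\sigma}\subseteq\prestar{\lang{\ATA A,P}}$; since $(p,l\sigma)\rewritesto_\Sys S\times\set{r\sigma}$ via $g$ and $\prestar{\cdot}$ is closed under one backward alternating step, $(p,l\sigma)\in\prestar{\lang{\ATA A,P}}$, which is exactly the required membership.

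I expect the main obstacle, and the place where the ordering condition is essential, to be the consistent construction of $\sigma$ and the verification of the sublemma's hypothesis at every variable: one must track position by position whether $u_j,v_j$ is $r$-ground or a variable of $r$, and it is precisely the ordering condition (low-order siblings of the scanned subtree are $r$-ground, hence discarded) that makes the low-order positions universally quantifiable in $\sem{(g,\dots)}$ and keeps the context state dependent only on the high-order sets $P^t_{k+1},\dots,P^t_m$, so that it legitimately lives in $Q'_{\ord(b)}$. The remaining bookkeeping — that $\sem q$ contains only trees of order $\ord(q)$, and that singleton semantics $\sem{p^{u_j}}$ coincide with the instantiation languages — is routine.
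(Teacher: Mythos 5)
Your proof is correct and follows essentially the same route as the paper's: induction on the order in which transitions are added, a direct appeal to the definition of $\sem{(g,P_{k+1},\dots,P_m)}$ for the transition leaving $p$ in the deep case, and, for the remaining transitions, construction of a substitution $\sigma$ with $l\sigma$ equal to the tree in question, the run-tree induction giving $r\sigma\in\sem S$, and one backward rewrite step. Your sublemma is precisely the paper's inline ``induction on the height of the run tree'' made explicit, and the order bookkeeping you flag as routine is the same point the paper dispatches by noting that the sets $\bigcup t(r^{-1}(\var x))$ contain only states of order $\ord(\var x)$.
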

\begin{proof}
	This is induction on the order in which transitions are added by the procedure.
	Let $g = (p, l \rewritesto S, r)$ with $l = a(u_1, \dots, u_m)$,
	and let $t$ be a run tree in $\ATA B$ from $S$ on $r$.
	Since all transitions used in $t$ were present in $\ATA B$ earlier, they are sound.
	We show that the transition $p \trans a P_1 \cdots P_m$ as added by saturation is sound.
	To this end, let $t_1 \in \sem {P_1}, \dots, t_m \in \sem {P_m}$,
	and we show $t' := a(t_1, \dots, t_m) \in \sem p$.
	Since $p \in P$, this amounts to showing that $(p, t') \in \prestar {\lang {\ATA A, P}}$.

	First, assume that $g$ is shallow.
	Observe that $t'=l\sigma$ for some substitution $\sigma$ (if $u_i$ is $r$-ground then $P_i=\{p^{u_i}\}$, so $t_i\in\sem{P_i}$ means that $t_i$ ``matches'' to $u_i$;
	if $u_i=\var x$ is a variable appearing in $r$ then $P_i$ is nonempty and contains states of order $\ord(\var x)$, so $t_i$ is of the same order as $\var x$).
	Thus the system has a transition from $(p,t')$ to $S \times \set{r\sigma}$, 
	and it thus suffices to show $S \times \set{r\sigma} \subseteq \prestar {\lang {\ATA A, P}}$.
	Every node of $r$ labeled by a variable $\var x$ is labelled in the run tree $t$
	by a subset of $P_i = \bigcup t(r^{-1}(\var x))$ for some $i$, and simultaneously $\sigma(\var x)=t_i$ (recall that all variables of $r$ have to appear in $l$).
	Since $t$ uses only sound transitions and $t_1 \in \sem {P_1}, \dots, t_m \in \sem {P_m}$,
	by induction on its height we have $r\sigma \in \sem S$,
	which implies $S \times \set{r\sigma} \subseteq \prestar {\lang {\ATA A, P}}$
	by the definition of the semantics since $S \subseteq P$.
	
	If $g$ is deep, then $P_k = \set{(g, P_{k+1}, \dots, P_m)}$.
	Recall our assumption that $u_1,\dots,u_{k-1}$ have order at most $\ord(u_k)$;
	due to the ordering condition they are $r$-ground.
	It follows that $P_1=\{p^{u_1}\},\dots,P_{k-1}=\{p^{u_{k-1}}\}$, so $t_1\in\sem{p^{u_1}},\dots,t_{k-1}\in\sem{p^{u_{k-1}}}$.
	Since $t_k \in \sem{P_k},\dots,t_m\in\sem{P_m}$, 
	we deduce directly from the definition of $\sem{P_k}$ that $(p, t') \in \prestar {\lang {\ATA A, P}}$.
	
	When $g$ is deep,
	the transition $(g, P_{k+1}, \dots, P_m) \trans b S_1 \cdots S_{m'}$
	is additionally added for this rule,
	and we have to show that this transition is sound too.
	Let $w_1 \in \sem {S_1}, \dots, w_{m'} \in \sem {S_{m'}}$,
	and we show $t_k := b(w_1, \dots, w_{m'}) \in \sem {(g, P_{k+1}, \dots, P_m)}$.
	To this end, let $t_1\in\sem{p^{u_1}},\dots,t_{k-1}\in\sem{p^{u_{k-1}}}$ and $t_{k+1} \in \sem {P_{k+1}}, \dots, t_m \in \sem {P_m}$,
	and we show $(p, a(t_1, \dots, t_m)) \in \prestar {\lang {\ATA A, P}}$.
	The proof is as for a shallow rule, noticing that a node labeled in $r$ by a variable $\var x$ is labeled in $t$ 
	either by a subset of $P_i$ for some $i\in\{k+1,\dots,m\}$ (and then $\sigma(\var x)=t_i$), or by a subset of $S_j$ for some $j\in\{1,\dots,m'\}$ (and then $\sigma(\var x)=w_j$);
	we can again conclude that $r \sigma \in \sem S$ by induction on the height of $t$.
\end{proof}

\begin{proof}[Proof of Lemma~\ref{lem:soundness}]
	By \propref{prop:sound:initial}, the initial transitions in $\Delta \cup \Delta_0$ are sound,
	and by \propref{prop:sound:inductive}, all transitions in $\Delta'$ are sound.
	Let $(p, t) \in \lang {\ATA B, P}$. Thus, $t \in \lang p$.
	By \propref{prop:sound}, $t \in \sem p$.
	Since $p \in P$, by the definition of the semantics,
	$(p, t) \in \prestar {\lang {\ATA A, P}}$.
\end{proof}


\section{Complexity of control-state reachability for flat non-deterministic AOTPSs}
\label{app:complexity:nondeterministic}

While control-state reachability (reachability of a configuration having a particular control location)
for order-$n$ annotated pushdown systems is \nexptimec n (cf.{\@} Theorem~\ref{thm:apds:complexity})%
---and similarly for the Krivine machine (cf.{\@} Theorem~\ref{thm:krivine:complexity})---%
it is known that if we consider \emph{non-deterministic} annotated pushdown systems of order $n$,
the complexity goes down to \nexptimec {(n-1)} (similarly for an analogous restriction on the Krivine machine).
As we will show below, this is also the case for \emph{flat} AOTPSs.

\thmcomplexitynondeterministic*

\noindent
In fact, we prove a stronger statement (cf.{\@} Theorem~\ref{thm:non-alt} below).
Instead of control-state reachability,
we consider reachability of target sets defined by a restricted class of alternating tree automata,
which we call \emph{non-$n$-alternating alternating tree automata}.
Intuitively, this class of tree automata will be defined in such a way that
\begin{itemize}
	\item it is preserved by the saturation procedure, and
	\item allows a faster running time of the procedure by saving one exponential in the number of states (and, consequently, transitions).
\end{itemize}

\noindent
Formally, an alternating tree automaton $\ATA A = \tuple {\Sigma, Q, \Delta}$ is \emph{non-$n$-alternating}%
\footnote{A similar notion for collapsible pushdown systems was proposed in \cite{BroadbentCarayolHagueSerre:Saturation:2012}.}
if its state space $Q$ can be partitioned into two sets $Q_-$ and $Q_+$ so that:
\begin{itemize}
\item	for every transition $p\xrightarrow{a} P_1\dots P_m \in \Delta$ with $p\in Q_-$ it holds $P_i\subseteq Q_-$ for all $i$,
\item	for every transition $p\xrightarrow{a} P_1\dots P_m \in \Delta$ with $p\in Q_+$ it holds $\sum_i|P_i\cap Q_+|\leq 1$, and
\item	in $Q_-$ there is exactly one state of order $n$, call it $p^n$, and the set of transitions from $p^n$ is $\{p\xrightarrow{a}\emptyset\dots\emptyset\mid\ord(a)=n\}$.
\end{itemize}
That is, states in $Q_-$ are closed under the transition relation.
Moreover, they are either of order $\leq n-1$, or trivially accept every order-$n$ tree
(i.e, $p^n$ is the unique such state in $Q_-$).
On the other side, a state $p$ in $Q_+$ can be of order $n$,
but then it can non-trivially accept at most one subtree of order $n$ (by staying in $Q_+$).
When this happens, there is no alternation when doing so, i.e.,
this unique subtree of order $n$ is accepted by a single state in $Q_+$.

\begin{theorem}\label{thm:non-alt}
	Reachability of a target set defined by a non-$n$-alternating alternating tree automaton
	in order-$n$ non-deterministic flat AOTPSs is \nexptimec {(n-1)}, where $n\geq 2$.
\end{theorem}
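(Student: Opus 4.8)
The plan is to prove both the lower and the upper bound. The lower bound is inherited from Sec.~\ref{sec:examples}: non-deterministic annotated pushdown systems of order $n$ are \nexptimeh{(n-1)} by Theorem~\ref{thm:apds:complexity}, their encoding in Sec.~\ref{sec:annotated:pda} produces \emph{non-deterministic flat} AOTPSs of order $n$, and the control-state target $T\times\Trees\Sigma$ is accepted by a non-$n$-alternating automaton (put the control locations in $Q_+$ and route every child of their transitions to a universal state of $Q_-$, the unique order-$n$ one being $p^n$, so that $\sum_i|P_i\cap Q_+|=0$ on these transitions). It remains to prove the matching \nexptime{(n-1)} upper bound by refining the complexity analysis of the saturation of Theorem~\ref{thm:main} under these two hypotheses.

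First I would show that the saturated automaton $\ATA B$ is again non-$n$-alternating, for a partition $Q'=Q'_-\cup Q'_+$ extending $Q_-\cup Q_+$. I put $P$ and every new order-$n$ state $(g)$ into $Q'_+$ (so that $p^n$ stays the unique order-$n$ state of $Q'_-$), every universal state $p^v$ of order at most $n-1$ into $Q'_-$ (identifying all universal order-$n$ states with $p^n$), and each remaining new state $(g,P_{k+1},\dots,P_m)$ into $Q'_+$ exactly when it carries a transition with a $Q'_+$ component, into $Q'_-$ otherwise. The proof is an induction on the saturation steps whose engine is a \emph{single-thread property}: because $\Sys$ is non-deterministic, every run tree produced by the procedure starts from a \emph{single} state $S=\{q\}\subseteq Q'_+$, and because $\ATA B$ is, so far, non-$n$-alternating, the $Q'_+$-labelled nodes of that run tree form a single root-to-node path, all other labels lying in $Q'_-$. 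Flatness makes every $u_i,v_j$ a variable, so this unique thread meets at most one variable occurrence of $r$; hence at most one $Q'_+$ state occurs in total across the components $P^t_1,\dots,P^t_m,P^t_b,S^t_1,\dots,S^t_{m'}$. The thread either exits through a sibling $u_{j_0}$ (then the lookahead run is all-$Q'_-$, so its state goes to $Q'_-$ and the unique $Q'_+$ state sits in the sibling component), through a lookahead child $v_{\ell_0}$ (then the siblings are all-$Q'_-$ and the lookahead state goes to $Q'_+$), or ends internally; in every case the added transition $p\trans a P^t_1\cdots P^t_k P^t_b P^t_{k+1}\cdots P^t_m$ satisfies $\sum_i|P_i\cap Q'_+|\le1$, with no double counting, and the $r$-ground siblings of a deep rule contribute only universal $Q'_-$ states. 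Thus non-$n$-alternation, including the uniqueness of the order-$n$ state $p^n$ in $Q'_-$, is preserved.

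The counting then closes the upper bound. Since the only order-$n$ state of $Q'_-$ is the universal $p^n$, the stratified construction of Theorem~\ref{thm:main} restricted to $Q'_-$ begins its tower one level lower: $O(1)$ states of order $n$, $\mathrm{poly}(\size Q+\size{\Rules})$ states of order $n-1$, and in general $\nexp{n-1-i}{\cdots}$ states of order $i$, so $\size{Q'_-}\le\nexp{n-2}{\cdots}$. By the single-thread property a $Q'_+$ state stores only $Q'_-$ components, so it is counted by $\size{\Rules}$ times a tuple of $Q'_-$-subsets, again giving $\size{Q'_+}\le\nexp{n-2}{\cdots}$ (the $Q'_-$ states that store a single $Q'_+$ component add only a factor $\size{Q'_+}$, absorbed by the tower). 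Hence $\size{Q'}\le\nexp{n-2}{O((m-1)\cdot(\size Q+\size{\Rules}))}$. Each transition costs one further exponential — a transition out of a $Q'_+$ state has a single $Q'_+$ component together with arbitrary subsets of the now only $\nexp{n-2}{\cdots}$-many $Q'_-$ states — so $\size{\Delta'}\le\nexp{n-1}{\cdots}$. Building $\ATA B$ and testing $(p_0,t_0)\in\lang{\ATA B,P}$ therefore takes $(n-1)$-fold exponential time.

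The step I expect to be the main obstacle is discharging the single-thread induction cleanly, in particular checking that the classification of a repeated state is consistent: a state with an all-$Q'_-$ stored tuple may be generated both by a run whose thread ends internally and by one whose thread enters the lookahead, and it must then be placed in $Q'_+$ without retroactively breaking closure of the already-committed $Q'_-$ states; conversely a state that stores a $Q'_+$ component can only arise from a thread exiting through a sibling, so all its lookahead transitions stay in $Q'_-$. Once the invariant ``every $Q'_+$ state stores only $Q'_-$ components, and every $Q'_-$ state stores at most one $Q'_+$ state'' is shown to be maintained by saturation, the counting above and hence the \nexptimec{(n-1)} classification follow at once.
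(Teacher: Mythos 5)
Your proposal matches the paper's proof essentially step for step: the same lower bound via the annotated-pushdown encoding, the same gluing of all universal order-$n$ states into the single state $p^n$, the same partition of the saturated automaton (a lookahead state belongs to $Q'_+$ exactly when its stored tuple lies entirely in $Q'_-$), the same single-thread argument exploiting non-determinism and flatness of the rules, and the same counting in which the exponential tower starts one level lower because useful order-$(n-1)$ states store essentially only $p^n$ plus at most one further state. The one obstacle you flag --- consistency of the classification of a state generated by different runs --- does not arise in the paper's formulation because the classification is purely syntactic and fixed upfront: every potential state $(g,P_{k+1},\dots,P_m)$ is assigned to $Q'_+$ or $Q'_-$ according to its stored tuple alone (exactly your final invariant), before any transition is added, so a $Q'_+$ state may legitimately acquire transitions with zero or with one $Q'_+$ component (thread ending internally versus entering the lookahead) and nothing ever needs to be reclassified.
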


\noindent
In particular we can realize control-state reachability,
since one can build a non-$n$-alternating automaton that only checks the control-state and accepts every tree.
%
%
The better complexity bound described by Theorem \ref{thm:non-alt} is realized by almost the same saturation procedure as in the general case; we only perform two small modifications.
First, for different variables $\var x$ of order $n$,
we have separate states $p^\var x$.
All these states do the same: they just check that the order of the node is $n$.
Moreover, we already have such state in $\calA$: it is called $p^n$ in the definition of a non-$n$-alternating automaton.
This redundancy should be eliminated: instead of using all these states we glue them together into this single state $p^n$.
Second, we have created states $p^v$ for every subtree $v$ of the l.h.s.{\@} $l$ of every rule.
But we need such states only for $v$ other than the whole $l$, and, for a deep rule, than the lookahead subtree of $l$.
For convenience, let us keep only such $p^v$ and remove all other.
It is easy to see that these modifications do not influence correctness.
Thus, we only need to analyze the complexity.

First, we observe that the automaton obtained at any step of the saturation procedure is non-$n$-alternating,
and we will later analyze its size depending on this assumption.
We divide the states of $\ATA B$ in $Q'$ into $Q'_-$ and $Q'_+$ as follows
(as required in the definition of a non-$n$-alternating automaton).
The original automaton $\ATA A$ by assumption is non-$n$-alternating,
which by definition gives us a partitioning of states from $Q$ into $Q_+$ and $Q_-$.
Recalling that we can assume that in $\ATA A$ we do not have transitions leading to initial states
(we already made and justified this same assumption when describing the saturation procedure),
we assume that all initial states are in $Q_+$.
We inherit this partitioning for states in $Q'$.
The states from $Q_0$ are all taken to $Q'_-$.
%
Because the system is flat, none of states $p^v$ is of order $n$ (we have such states only for $v$ being a variable of order $\leq n-1$), and thus
$\langle\Sigma,Q\cup Q_0,\Delta\cup\Delta_0\rangle$ is non-$n$-alternating for this division of states.
Next, by induction on $n-i$ we classify states from $Q_i'$.
Consider a state $(g,P_{k+1},\dots,P_m)\in Q_i'\setminus Q_{i+1}'$.
We put it into $Q'_+$ if $P_{k+1}\cup\dots\cup P_m\subseteq Q'_-$;
otherwise (some state from some $P_i$ is in $Q'_+$), we put it into $Q'_-$.
In particular, for $i=n$ the state is always taken to $Q'_+$, as necessarily $k=m$.
Recall that the states in the sets $P_j$ come from $Q_{i+1}'$,
so for them we already know whether they are in $Q'_+$ or in $Q'_-$.
We have not added any transitions, so $\langle\Sigma,Q_1',\Delta\cup\Delta_0\rangle$ is still non-$n$-alternating for the above division of states.

Now we should see that a single step of the saturation procedure preserves the property that the automaton is non-$n$-alternating for our division of states.
We only need to check that the newly added transitions satisfy the required properties.
By induction assumption we know that the automaton before the considered step is non-$n$-alternating.
Moreover the system is non-deterministic, so $|S|=1$ in the considered rule $p,l\to S,r$.
This ensures the following property of the run tree $t$ from $S$ on $r$:
State sets on only one path may contain states from $Q'_+$, each set at most one such state.
In particular at most one leaf is labelled by such state set.
Thus, at most one among the sets $\bigcup t(r^{-1}(\var x))$ contains a state from $Q'_+$,
and it contains at most one such state; the sets $\{p^\var x\}$ do not contain states from $Q'_+$.
For a shallow rule this is the end, as all $P^t_j$'s are of this form (with different variables for different $j$'s).
For a deep rule we also have the special child with $P_b^t=\set{(g, P^t_{k+1},\dots,P^t_m)}$.
When none of the $P_j^t$'s for $j\neq k$ has a state from $Q'_+$, then by definition $(g, P^t_{k+1},\dots,P^t_m) \in Q'_+$;
the transition from $p$ is fine (recall that the initial state $p$ belongs to $Q'_+$),
and the transition from $(g, P^t_{k+1},\dots,P^t_m)$ is also fine since at most one among $S_j^t$'s has a state from $Q'_+$.
Suppose that some $P_j^t$ for $i\neq k$ has a state from $Q'_+$.
This is only possible when the corresponding subtree of $l$ is not $r$-ground,
so it is of order greater than the special deep subtree $b(v_1, \dots, v_{m'})$ of $l$ thanks to the ordering condition.
Thus, this set $P_j^t$ is listed in the state $(g, P^t_{k+1},\dots,P^t_m)$, and the latter by definition belongs to $Q'_-$.
Then the transition from $p$ is fine,
and the transition from $(g, P^t_{k+1},\dots,P^t_m)$ is also fine since none of the $S_j^t$'s has a state from $Q'_+$.

Let us now analyze the complexity of the algorithm.
The original saturation algorithm by definition uses states in $Q'_1, \dots, Q'_{n+1}$.
We show that in fact a subset of those states is actually needed,
by constructing a sequence $Q''_1, \dots, Q''_{n+1}$ which is pointwise included in the former,
and such that $Q''_1 \cup \cdots \cup Q''_{n+1}$ is one exponential smaller than $Q'_1 \cup \cdots \cup Q'_{n+1}$.
We define the following sets:
\begin{align*}
	&Q''_{n+1}=Q'_{n+1},\qquad
	Q''_n=Q'_n,\\
	&Q''_{n-1}=Q''_n\cup\setof{(g,P_{k+1},\dots,P_m)\in Q'_{n-1}}
		{\forall i.|P_i|\geq 1,\ \sum_i|P_i\setminus\{p^n\}|\leq 1},\\
	&Q''_i = Q''_{i+1}\cup\bigcup_{g\in\Rules_i}\set g\times\left(2^{Q''_{i+1}}\right)^{m-k}\qquad\mbox{for $1\leq i\leq n-2$},
\end{align*}
where $g=(p,a(u_1,\dots,u_k,b(v_1,\dots,v_{m'}),u_{k+1},\dots,u_m)\rewritesto S,r)$ and $\ord(b) = i$.
Thus to $Q''_{n-1}$ we only add those states $(g,P_{k+1},\dots,P_m)$ where all $P_i$ except one are equal to $\{p^n\}$, and the remaining one is either a singleton or a pair $\{q,p^n\}$ for some state $q$.
We can see that all transitions in $\Delta'$ will only use states from $Q_1''$.
To prove this, recall that $\Delta'$ is the least set containing $\Delta\cup\Delta_0$ and closed under applying the saturation procedure.
The initial set $\Delta\cup\Delta_0$ only uses states from $Q'_{n+1}\subseteq Q_1''$.
Thus, suppose that the current set of transitions uses only states from $Q_1''$;
we should prove that transitions added by (one step of) the saturation procedure also use only states from $Q_1''$.
Notice that all states in the considered run tree $t$ on $r$ appear in some transition, so they come from $Q_1''$. 
The only ``new'' state is $(g,P_{k+1},\dots,P_m)$ created for a deep rule.
Let $b$ be the root of the special subtree of the left side of $g$.
If $\ord(b)\neq n-1$, then we have $(g,P_{k+1},\dots,P_m)\in Q''_{\ord(b)}\subseteq Q''_1$,
since, as already observed, $P_{k+1}, \dots, P_m \subseteq Q''_1\cap Q'_{\ord(b)+1}=Q''_{\ord(b)+1}$.
For $\ord(b)=n-1$, we recall that each of the sets $P_{k+1}, \dots, P_m$ describes a subtree of order $n$, 
so it is either of the form $\{p^n\}$ or $\bigcup t(r^{-1}(\var x))$ for some variable $\var x$ of order $n$.
But, as observed previously, at most one of these sets may contain a state from $Q'_+$.
However (as required in an non-$n$-alternating automaton) all states of order $n$ except $p^n$ are from $Q'_+$.
Thus $\sum_{j=k+1}^{m}|P_j\setminus\{p^n\}|\leq 1$,
and in consequence $(g,P_{k+1},\dots,P_m)\in Q_{n-1}''\subseteq Q_1''$.
This finishes the proof that only states from $Q_1''$ are used.

Notice that $|Q''_{n-1}|\leq |Q'_n|+|\calR|\cdot2(m-1)\cdot|Q_n'|$ (while $|Q'_{n-1}|$ is exponential in $|Q_n'|$),
where $m$ is the maximal rank of any symbol in $\Sigma$.
Consequently,
$|Q''_1|\leq\mathsf{exp}_{n-2}(O((|Q|+|\Rules|)\cdot m\cdot|\Rules|))$ and $|\Delta'|\leq\mathsf{exp}_{n-1}(O((|Q|+|\Rules|)\cdot m\cdot|\Rules|))$.

\section{The translation for annotated pushdown systems}

We present graphically the rewrite rules of the resulting ordered tree transition system.
The rules in Figure~\ref{fig:apds:otpds:rules} are the same as in the main text.
We hope that the graphical presentation better conveys the intuition behind them.

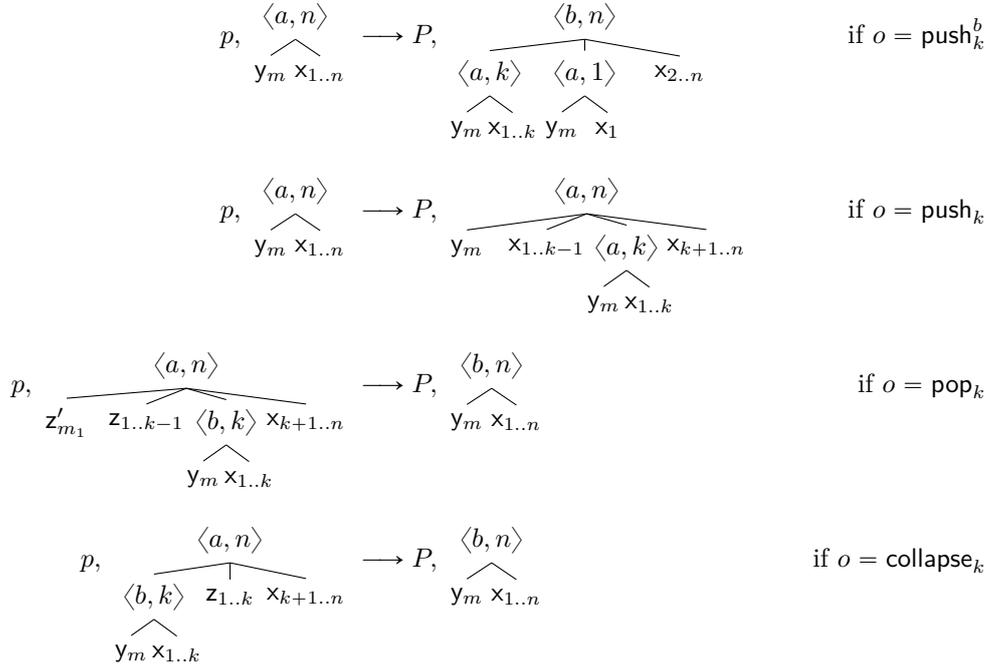
\begin{figure}
	\begin{align*}
		p, \begin{tikzpicture}[baseline=-2.3ex,scale=0.5,level/.style={sibling distance = 1.3cm}]
			\node {$\tuple{a, n}$}
			 	child{ node {${\var y_m}$}}
				child{ node {$\var x_{1..n}$}};
		\end{tikzpicture}
		&\longgoesto
		P, \begin{tikzpicture}[baseline=-2.3ex,scale=0.5,level 1/.style={sibling distance = 2.5cm},level 2/.style={sibling distance = 1.2cm}]
			\node {$\tuple{b, n}$}
				child{ node {$\tuple{a, k}$}
					child{ node {${\var y_m}$}}
					child{ node {$\var x_{1..k}$}}
				}
				child{ node {$\tuple{a, 1}$}
					child { node {${\var y_m}$}}
					child { node {$\var x_1$}}
				}
				child{ node {$\var x_{2..n}$}};
		\end{tikzpicture}
		&\textrm{ if } o = \apush k b
		\\[1ex]
		p, \begin{tikzpicture}[baseline=-2.3ex,scale=0.5,level/.style={sibling distance = 1.3cm}]
			\node {$\tuple{a, n}$}
				child{ node {${\var y_m}$}}
				child{ node {$\var x_{1..n}$}};
		\end{tikzpicture}
		&\longgoesto
		P, \begin{tikzpicture}[baseline=-2.3ex,scale=0.5,level 1/.style={sibling distance = 2.1cm},level 2/.style={sibling distance = 1.2cm}]
			\node {$\tuple{a, n}$}
				child{ node {${\var y_m}$}}
				child{ node {$\var x_{1..k-1}$}}
				child{ node {$\tuple{a, k}$}
					child{ node {${\var y_m}$}}
					child{ node {$\var x_{1..k}$}}
				}
				child{ node {$\var x_{k+1..n}$}};
		\end{tikzpicture}
		&\textrm{ if } o = \apush k {}
		\\[1ex]
		p, \begin{tikzpicture}[baseline=-2.3ex,scale=0.5,level 1/.style={sibling distance = 2.1cm},level 2/.style={sibling distance = 1.2cm}]
			\node {$\tuple{a, n}$}
			    	child{ node {${\var z'_{m_1}}$}}
				child{ node {$\var  z_{1..k-1}$}}
				child{ node {$\tuple{b, k}$}
					child { node {${\var y_m}$}}
					child { node {$\var x_{1..k}$}}
					}
			    	child{ node {$\var x_{k+1..n}$}};
		\end{tikzpicture}
		&\longgoesto
		P, \begin{tikzpicture}[baseline=-2.3ex,scale=0.5,level/.style={sibling distance = 1.3cm}]
			\node {$\tuple{b, n}$}
				child{ node {${\var y_m}$}}
				child{ node {$\var x_{1..n}$}};
		\end{tikzpicture}
		&\textrm{ if } o = \apop k
		\\[1ex]
		p, \begin{tikzpicture}[baseline=-2.3ex,scale=0.5,level 1/.style={sibling distance = 2cm},level 2/.style={sibling distance = 1.2cm}]
			\node {$\tuple{a, n}$}
				child{ node {$\tuple{b, k}$}
					child { node {${\var y_m}$}}
					child { node {$\var x_{1..k}$}}
				}
				child{ node {$\var z_{1..k}$}}
			    	child{ node {$\var x_{k+1..n}$}};
		\end{tikzpicture}
		&\longgoesto
		P, \begin{tikzpicture}[baseline=-2.3ex,scale=0.5,level/.style={sibling distance = 1.3cm}]
			\node {$\tuple{b, n}$}
				child{ node {${\var y_m}$}}
				child{ node {$\var x_{1..n}$}};
		\end{tikzpicture}
		&\textrm{ if } o = \acollapse k
	\end{align*}
	
	\caption{Translation from annotated higher-order pushdown systems to AOTPSs}
	\label{fig:apds:otpds:rules}
	
\end{figure}
\vfill


\section{The translation for Krivine machines}

\label{app:krivine:safety}

We present graphically the rewrite rules of the resulting ordered
tree transition system. The rules in
Figure~\ref{fig:km:otpds:rules} are the same as in the main text. We
hope that the graphical presentation better conveys the
intuition behind them.

\begin{figure}
	
	\begin{align}
		\tag*{(var)}
		\label{rule:var}
		&p, \begin{tikzpicture}[baseline=-2.3ex,scale=0.5,level/.style={sibling distance = 1.2cm}]
			\node {$\marked {x_i^\alpha}$}
				child { node {$\var z_1$} }
				child { node {$\cdots$} edge from parent[draw=none] }
				child { node {$\var z_{i-1}$} }
			    child { node {$M^\alpha$}
					child { node {$\vec {\var x}$} } }
				child { node {$\var z_{i+1}$} }
				child { node {$\cdots$} edge from parent[draw=none] }
				child { node {$\var z_n$} }
				child { node {$\vec {\var y}$} };
		\end{tikzpicture}
		\longgoesto
		\set p, \begin{tikzpicture}[baseline=-2.3ex,scale=0.5,level/.style={sibling distance = 1.2cm}]
			\node {$\marked {M^\alpha}$}
			  	child { node {$\vec {\var x}$} }
				child { node {$\vec {\var y}$} };
		\end{tikzpicture}
		\\
		\tag*{(app)}
		\label{rule:app}
		&p, \begin{tikzpicture}[baseline=-2.3ex,scale=0.5,level/.style={sibling distance = 1.2cm}]
			\node {$\marked {M^\alpha N^{\alpha_1}}$}
			  	child { node {$\vec {\var x}$} }
				child { node {$\var y_2$} }
			    child { node {$\cdots$} edge from parent[draw=none] }
			    child { node {$\var y_k$} };
		\end{tikzpicture}
		\longgoesto
		\set p, \begin{tikzpicture}[baseline=-2.3ex,scale=0.5,level/.style={sibling distance = 1.5cm}]
			\node {$\marked {M^\alpha}$}
			  	child { node [left=1ex]{\makebox[1ex][l]{$\restrict {\vec {\var x}} {M^\alpha}$} } }
				child { node {$N^{\alpha_1}$}
					child { node {\parbox[b][1ex]{1ex}{$ \restrict {\vec {\var x}} {N^{\alpha_1}} $}} } }
				child { node {$\var y_2$} }
			    child { node {$\cdots$} edge from parent[draw=none] }
			    child { node {$\var y_k$} };
		\end{tikzpicture}
		\\
		\tag*{(fix)}
		\label{rule:fix}
		&p, \begin{tikzpicture}[baseline=-2.3ex,scale=0.5,level/.style={sibling distance = 1.2cm}]
			\node {$\marked {Y M^{\alpha \arrow \alpha}}$}
			  	child { node {$\vec {\var x}$} }
				child { node {$\vec {\var y}$} };
		\end{tikzpicture}
		\longgoesto
		\set p, \begin{tikzpicture}[baseline=-2.3ex,scale=0.5,level/.style={sibling distance = 1.5cm}]
			\node {$\marked {M^{\alpha \arrow \alpha}}$}
			  	child { node {$\vec {\var x}$} }
				child { node {$Y M$}
					child { node {$\vec {\var x}$} } }
				child { node {$\vec {\var y}$} };
		\end{tikzpicture}
		\\
		\tag*{(abs$\mbox{}$)}
		&p, \begin{tikzpicture}[baseline=-2.3ex,scale=0.5,level/.style={sibling distance = 1.2cm}]
			\node {$\marked {\lambda x_i^{\alpha_0}. M^\alpha}$}
			  	child { node {$\vec {\var x}$} }
				child { node {$\var y_0$} }
				child { node {$\vec {\var y}$} };
		\end{tikzpicture}
		\longgoesto
		\set p, \begin{tikzpicture}[baseline=-2.3ex,scale=0.5,level/.style={sibling distance = 1.3cm}]
			\node {$\marked {M^\alpha}$}
			  	child { node {$\var x_1$} }
				child { node {$\cdots$} edge from parent[draw=none] }
			  	child { node {$\var x_{i-1}$} }
				child { node {$\var y_0$} }
			  	child { node {$\var x_{i+1}$} }
				child { node {$\cdots$} edge from parent[draw=none] }
				child { node {$\var x_n$} }
				child { node {$\vec {\var y}$} };
		\end{tikzpicture}
		\ignore{
			\tag*{(abs$\mbox{}_2$)}
			\label{rule:abs2}
			p, \begin{tikzpicture}[baseline=-2.3ex,scale=0.5,level/.style={sibling distance = 1.2cm}]
				\node {$\marked {\lambda x^{\alpha_0} \cdot M^\alpha}$}
				  	child { node {$\vec {\var x}$} }
					child { node {$\var y_0$} }
					child { node {$\vec {\var y}$} };
			\end{tikzpicture}
			&\longgoesto
			\set p, \begin{tikzpicture}[baseline=-2.3ex,scale=0.5,level/.style={sibling distance = 1.2cm}]
				\node {$\marked {M^\alpha}$}
				  	child { node {$\vec {\var x}$} }
					child { node {$\vec {\var y}$} };
			\end{tikzpicture}
			\textrm{ if } x^{\alpha_0} \not\in \freevars{M^\alpha}
			\\
		}
		\\
		\tag*{(const$\mbox{}_1$)}
		&p, \begin{tikzpicture}[baseline=-2.3ex,scale=0.5,level/.style={sibling distance = 1.5cm}]
			\node {$\marked {a^{0^k \arrow 0}}$}
				child { node {$\vec {\var x}$} }
				child { node {$\vec {\var y}$} };
		\end{tikzpicture}
		\longgoesto
		\set{(1, P_1), \dots, (k, P_k)}, \begin{tikzpicture}[baseline=-2.3ex,scale=0.5,level/.style={sibling distance = 1.5cm/#1}]
			\node {$\marked {a^{0^k \arrow 0}}$}
				child { node {$\vec {\var x}$} }
				child { node {$\vec {\var y}$} };
		\end{tikzpicture}
		\quad\qquad \forall (p \trans a P_1 \cdots P_k) \in \Delta
		\\
		\tag*{(const$\mbox{}_2$)}
		&(i, P_i), \begin{tikzpicture}[baseline=-2.3ex,scale=0.6,level/.style={sibling distance = 1.2cm}]
			\node {$\marked {a^{0^k \arrow 0}}$}
				child { node {$\vec {\var z}$} }
				child { node {$\var y_1$} }
			  	child { node {$\cdots$} edge from parent[draw=none] }
				child { node {$\var y_{i-1}$} }
				child { node {$M_i^0$} edge from parent
					child { node {$\vec {\var x}$} } }
				child { node {$\var y_{i+1}$} }
			  	child { node {$\cdots$} edge from parent[draw=none] }
				child { node {$\var y_k$} };
		\end{tikzpicture}
		\longgoesto
		P_i, \begin{tikzpicture}[baseline=-2.3ex,scale=0.5,level/.style={sibling distance = 1.5cm/#1}]
			\node {$M_i^0$}
				child { node {$\vec {\var x}$} };
		\end{tikzpicture}
	\end{align}

	\caption{Translation from the Krivine machine to AOTPSs}
	\label{fig:km:otpds:rules}
	
\end{figure}

\ignore{

\paragraph{Safety.}

We show a connection between a special class of Krivine machine and safe AOPDS (as defined in Sec.~\ref{sec:otpds}).
A term $M$ is \emph{locally unsafe} if it has a free variable of level strictly smaller than $\level M$,
otherwise it is \emph{locally safe}.
An occurrence of a term $N$ as a subterm of a term $M$ is \emph{safe} if it is in the context $\dots (N\,K)\dots$.
A term is \emph{safe},if all occurrences of locally unsafe subterms are safe.
An alternating Krivine Machine with states $\KM = \tuple {l, \Gamma, Q, M^0, \Delta}$ is \emph{safe} if $M^0$ is safe.
A configuration $(p,C^\tau,C_1^{\tau_1},\dots,C_k^{\tau_k})$ is \emph{safe} if in all its closures (also recursively inside environments), except potentially the head closure, skeletons are safe.
Notice that when the machine is safe, the skeleton of the head closure has to be a subterm of a safe term, but it need not to be safe itself (all occurrences of locally unsafe proper subterms are safe, but the whole term may be locally unsafe).
Observe that every transition of a safe Krivine machine from a safe configuration necessarily lead to a safe configuration.
Indeed, in the rule for application $M\,N$ we create a new argument closure with term $N$, but the appearance of $N$ in $M\,N$ is not safe, so $N$ is locally safe hence safe.
Similarly, in the rule for $Y\,M$ we create a new argument closure with term $YM$, but the appearance of $M$ in $Y\,M$ is not safe, so $M$ is locally safe, and then $Y\,M$ is also locally safe since it has the same free variables and smaller level.
All other rules change only the head closure.
%
It is easy to see that a safe configuration of a Krivine machine is encoded in a safe tree of the AOPDS.
Indeed, $\enc {} {N^\tau,\rho}$ starts with a node of order $\ord(\tau)$,
and, since $N$ is safe (locally safe),
in $\rho$ we have only closures of levels not smaller than $\level\tau$;
the encodings of these closures will have orders not greater than $\ord(\tau)$.
The head closure (whose skeleton need not to be locally safe) is not encoded in this way,
but the root will be labelled by $[N]$ letter, which is of maximal order $l$.
Additionally, the rules of the created AOTPS are safe as well.

}

\section{Encoding AOTPS into the Krivine machine with states}

\label{app:aotps2km}

\thmaotpstokm*

The encoding is performed in four steps.

\subsection*{Step 1: Flattening the AOTPS}
\label{app:flat}

\newcommand{\br}[1]{\llbracket #1\rrbracket}
\newcommand{\sub}{\mathsf{sub}}

In this step we show that the l.h.s.{\@} of rewrite rules of AOTPSs can be flattened,
in the sense that the only lookahead that the system has is in deep rules,
and all other subtrees are just variables.
%
%
We recall that a rule $p,l\to S,r$ is \emph{flat} if $l$ is either of the form $a(\var x_1,\dots,\var x_m)$ (shallow rule) or $a(\var x_1,\dots,\var x_{k-1},b(\var y_1,\dots,\var y_{m'}),\var x_{k+1},\dots,\var x_m)$ (deep rule);
an AOTPS is \emph{flat} when all its rules are flat.

\begin{theorem}
	\label{thm:flat}
	Every AOTPS $\Sys$ can be converted into an equivalent flat AOTPS $\Sys'$ of exponential size.
\end{theorem}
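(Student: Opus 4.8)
The plan is to reduce pattern matching against the finitely many left-hand sides to a purely local check by relabelling the tree-pushdown with the states of a bottom-up tree automaton that recognises the relevant patterns. First I would collect the finite set $\Pi$ of all subtrees occurring in the left-hand sides $l$ of the rules of $\Sys$, treating the variables inside these patterns as wildcards that match any tree of the correct order. From $\Pi$ I would build a deterministic bottom-up tree automaton $\mathcal D$ over $\Sigma$ whose state at a node records, for every $\pi\in\Pi$, whether $\pi$ matches at that node. Since $\Pi$ is fixed, $\mathcal D$ has finitely many states, at most exponentially many in $\size{\Rules}$; this is the sole source of the claimed exponential blow-up.

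Next I would enrich the alphabet: every $a\in\Sigma$ of rank $m$ and order $\ord(a)$ is replaced by symbols $\tuple{a,q_1,\dots,q_m}$ of the same rank and order, one for each tuple of $\mathcal D$-states, with the intended meaning that the $i$-th child carries $\mathcal D$-state $q_i$. Call an enriched tree \emph{well-formed} when these annotations agree with the actual run of $\mathcal D$; every tree in $\Trees\Sigma$ has a unique well-formed enrichment, and this correspondence is a regularity-preserving bijection, so the target automaton and the initial configuration enrich canonically. Each rule of $\Sys$ is then replaced by a family of \emph{flat} rules. The crucial point is that the constraint ``child $i$ matches the $r$-ground subtree $u_i$'' (respectively ``$v_j$ matches under $b$''), however deep $u_i$ or $v_j$ may be, is equivalent to a constraint on a single $\mathcal D$-state, and that state is recorded in the enriched label of the root (respectively of the lookahead child $b$). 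So for each tuple of states compatible with the original pattern I add one flat rule that reads only the enriched root and, for a deep rule, one lookahead on $b$; its right-hand side is the unique well-formed enrichment of $r$, computable statically because the enriched labels read on the left fix the $\mathcal D$-states of all subtrees captured by the variables of $l$. A lookahead on $b$ is retained precisely when some kept variable $v_j$ sits under $b$ and must be physically reattached into $r$, which is why the construction does not collapse deep rules into shallow ones.

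Finally I would prove correctness in three steps: that well-formedness is an invariant of $\Sys'$; that the enrichment bijection sends $\goesto_\Sys$ exactly onto $\goesto_{\Sys'}$, so that $\prestar{C}$ is transported faithfully in both directions; and that each produced rule respects the shallow/deep format together with the ordering condition, which holds because enrichment preserves orders and keeps the distinguished symbol $b$ in place, so the subtrees forced to be $r$-ground are still exactly those of order at most $\ord(b)$. The hard part will be the well-formedness of the right-hand side under the bounded lookahead of a flat rule: a flat rule may inspect only the enriched root and at most one child, whereas a pattern may look arbitrarily deep. The resolution, which I would make precise, is that in an AOTPS rule every variable of $l$ that actually occurs in $r$ lies at depth at most two --- it is a child of the root or of $b$ --- so its $\mathcal D$-state is among the entries of an enriched label we read, while all genuinely deep material is $r$-ground and discarded and therefore never needs its state propagated into $r$. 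Getting this bookkeeping exactly right, and matching it against the ordering condition, is the delicate step of the argument.
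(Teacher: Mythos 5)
Your proposal is correct and follows essentially the same route as the paper's proof: the paper enriches the alphabet so that each node carries a tuple of subsets of $\mathsf{sub}(\Rules)$, one per child, recording which left-hand-side subpatterns match at that child --- exactly your child-indexed $\mathcal D$-states --- and it relies on the same key observation that variables occurring in $r$ sit at depth at most two (under the root or under $b$) while all deeper material is $r$-ground, so the annotations of the rewritten right-hand side can be computed statically and compositionally (the paper's function $\mathsf{mp}$). The only cosmetic difference is packaging: you present the annotation as the run of a deterministic bottom-up pattern-matching automaton and allow a deep rule to collapse to a shallow one when its lookahead subtree is entirely $r$-ground, whereas the paper defines the marking directly and keeps deep rules deep.
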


Fix an AOTPS $\calS=\langle n,\Sigma,P,\calR\rangle$.
We create an equivalent flat AOTPS $\calS'=\langle n,\Gamma,P,\calR'\rangle$ as follows.
Let $\sub(\calR)$ contain all proper (that is, other than the whole tree) subtrees of $l$ for all rules $p,l\to S,r$ in $\calR$.
The intuition is that the new system will store in each node with $k$ children a tuple consisting of $k$ subsets of $\sub(\calR)$; the $i$-th of them will contain these patterns that match to the tree at the $i$-th child.
In this way, when a rule $p,l\to S,r$ is applied,
it is sufficient to read in the root whether appropriate subpatterns of $l$ match to subtrees starting in children of the root, instead of testing a longer part of the tree.
Thus, as the new alphabet $\Gamma$ we take $\bigcup_{a\in\Sigma}\{a\}\times(2^{\sub(\calR)})^{\rank(a)}$,
where the rank and the order of a symbol in $\Gamma$ is inherited from its $\Sigma$ coordinate.
For a $\Sigma$-tree $t$ we obtain the $\Gamma$-tree $\enc {} t$ by labelling each node $u$ with the tuple of $\rank(t(u))$ sets where the $i$-th set contains those trees $l'\in\sub(\calR)$ 
that match the subtree rooted at the $i$-th child of $u$ (that is, for which this subtree equals $l'\sigma$ for some substitution $\sigma$).

Consider some rule $p,l\to S,r$.
Thanks to the additional labeling of $\enc {} t$,
seeing only the label in the root of $t$ we know whether $l$ matches to $t$.
This allows us to replace every $r$-ground subtree of $l$ by an $r$-ground variable, and obtain $l$ in one of the two forms allowed in a flat rule.
On the other hand, we have to ensure that $r$ creates a tree with the correct labelling.
This is possible since our labelling is compositional: 
the tuple in a node can be computed basing on labels of its children.

More precisely, for each rule $p,l\to S,r$ we create new rules as follows.
Let $l=a(u_1,\dots,u_m)$, and, if the rule is deep, let $u_k=b(v_1,\dots,v_{m'})$ be the lookahead subtree of $l$.
For all sets $X_1,\dots,X_m\subseteq\sub(\calR)$ such that $u_1\in X_1,\dots,u_m\in X_m$, and (in the case of a deep rule) for all sets $Y_1,\dots,Y_{m'}\subseteq\sub(\calR)$ such that $v_1\in Y_1,\dots,v_{m'}\in Y_{m'}$ there will be one new rule.\footnote{%
	Some sets in $\sub(\calR)$ are ``inconsistent'', as well as some sets $Y_1,\dots,Y_{m'}$ may be ``inconsistent'' with $X_k$.
	New rules using such sets are redundant, but it also does not hurt to add them, as anyway they cannot be applied to any tree of the form $\enc {} t$.}
If $p,l\to S,r$ is shallow, as the new left side we take $(a,X_1,\dots,X_m)(\var x_1,\dots,\var x_m)$, where we leave $\var x_i=u_i$ if $u_i$ is a variable, and we take a fresh variable $\var x_i\not\in\calV(r)$ of order $\ord(u_i)$ otherwise (when $u_i$ is $r$-ground).
If $p,l\to S,r$ is deep, as the new left side we take $(a,X_1,\dots,X_m)(\var x_1,\dots,\var x_{k-1},(b,Y_1,\dots,Y_{m'})(\var y_1,\dots,\var y_{m'}),\var x_{k+1},\dots,\var x_m)$,
where again we leave subtrees being variables and we replace other subtrees by fresh variables.
The choice of $X_1,\dots,X_m$ and $Y_1,\dots,Y_{m'}$ assigns sets of ``matching patterns'' to all variables:
\begin{align*}
	&\mathsf{mp}(\var x_i)=X_i,&&\mathsf{mp}(\var y_i)=Y_i.
\end{align*}
Then in a bottom-up manner we can assign such sets to all subtrees of $r$:
\begin{align*}
	&\mathsf{mp}(c(w_1,\dots,w_j))=\{\var z\in\sub(\calR)\mid\ord(\var z)=\ord(c)\}\cup\\
	&\qquad\cup\{c(w_1',\dots,w_j')\in\sub(\calR)\mid w_1'\in\mathsf{mp}(w_1),\dots,w_j'\in\mathsf{mp}(w_k)\}.
\end{align*}
The new right side of the rule is $\br{r}$ with $\br\cdot$ defined by:
\begin{align*}
	&\br{w}=(c,\mathsf{mp}(w_1),\dots,\mathsf{mp}(w_j))(\br{w_1},\dots,\br{w_j}) &&\mbox{if }w=c(w_1,\dots,w_j),\\
	&\br{w}=w &&\mbox{if $w$ is a variable}.
\end{align*}
%
%
We remark that in order to recover correct marking of the right side of a rule it was necessary to mark a node by patterns matching to children of that node instead of patterns matching to the node itself (the latter marking would be insufficient).
It is easy to see that each transition of $\langle\calC_\calS,\to_\calS\rangle$
can be faithfully simulated by a transition of $\langle\calC_{\calS'},\to_{\calS'}\rangle$.

\ignore{ 

	It remains to argue that in the non-deterministic case,
	we can avoid the exponential blow-up in the construction of the alphabet $\Sigma\times 2^{\sub(\calR)}$.
	Indeed, when alternation is present,
	in general we need to decorate the root of a tree with all the elements in $\calR$
	that could be possibly matched in later application of rules.
	In the presence of alternation, there can be several such later applications, and all different.
	However, when no alternation is present, i.e., the system is non-deterministic,
	it suffices to guess in each node the only single decoration that will be later used in the rule testing and destroying this node (notice that when a node is matched by some non-variable node of $l$, then this node will not be present in the next configuration).
	Thus, for non-deterministic systems it suffices to take as the new alphabet $\Sigma \times \sub(\calR)$.
	The construction is modified accordingly in order to guess a single decoration from $\sub(\calR)$
	instead of carrying all of them in parallel.
	Therefore, we can transform a given non-deterministic AOTPS into a (non-deterministic) flat one with only a polynomial blow-up.
	Here the equivalence is of a weaker form than in the general (alternating) case: a single configuration of the original system corresponds to several configurations of the new system (with different guesses); 
	nevertheless such equivalence is sufficient for our proof of Theorem \ref{thm:complexity:nondeterministic} in the next section.
	
}

\subsection*{Step 2: Eliminating control locations}

To ease the presentation in step 3, we now remove control locations from the AOTPS.
To allow alternation, we have to extend slightly the definition of an AOTPS.
The rules will be now of the form $l\rewritesto R$, where $R$ is a set of trees $r$ such that $l\rewritesto r$ is a rewrite rule.
The resulting alternating transition system $\tuple {\mathcal C_\Sys, \rewritesto_\Sys}$ has $\Sigma$-trees as configurations, 
and, for every configuration $t$, and set of configurations $U$ there is a transition $t\rewritesto_\Sys U$ if 
there exists a rule $l\rewritesto R$ of $\Sys$ and a substitution $\sigma$ s.t.{\@} $t = l\sigma$ and $U = \setof{r\sigma}{r\in R}$. 

Control locations can be encoded in the root symbol of the pushdown tree.
The new alphabet is $\Sigma' = \Sigma \cup (\Sigma \times P)$,
where new symbols in $\Sigma \times P$ inherit order and rank from the $\Sigma$-component.
Let $p, l \rewritesto S, r$ be a shallow rule of the original system,
with $l = a(\var x_1, \dots, \var x_m)$ and $r = c(s_1, \dots, s_k)$ (the case for a deep rule is analogous).
Then, the new system has a rule $l' \rewritesto R$ (with no control locations),
with $l' = (a, p)(u_1, \dots, u_m)$ and $R = \setof{(c, q)(s_1, \dots, s_k)}{q\in S}$.
There is a problem when $r$ is just a variable $\var x$ (necessarily occurring in $l$).
In this case, we use a deep rule by guessing the root symbol $c$ at the (unique) position of $\var x$ in $l$ (below $a$).
That is, for every shallow rule $p, l \rewritesto S, \var x_k$ of the original system,
and for every symbol $c$ of rank $h$,
we introduce a deep rule $l' \rewritesto R$,
where $l' = (a, p)(\var x_1, \dots, \var x_{k-1}, c(\var y_1, \dots, \var y_h), \var x_{k+1}, \dots, \var x_m)$
and $R = \setof{(c, q)(\var y_1, \dots, \var y_h)}{q\in S}$.
We can assume that in the original system there are no deep rules of the form $p, l \rewritesto S, \var x$,
as those can be broken down into two rules, a deep one where the r.h.s.{\@} is not a variable,
and a shallow one where the r.h.s.{\@} is a variable.
Thus, we do not have consider any other case when removing control locations.
%

Notice that while starting from a flat AOTPS, the obtained AOTPS without control locations is also flat.

\subsection*{Step 3: From AOTPSs to higher-order recursion schemes}

%
After the first two steps we have a flat AOTPS without control locations (where rules are of the form $l\rewritesto R$ with $R$ a set of trees). 
Our goal is to translate an AOTPS of this form into a Krivine machine,
thus proving Theorem~\ref{thm:aotps2km}.
In order to obtain a more natural translation, we use recursion schemes, a model quite similar to the Krivine machine.

We first define alternating higher-order recursion schemes with states.
We use types as defined in Sec.~\ref{sec:krivine:machine}.
However, instead of $\lambda Y$-terms, we use applicative terms, where moreover we can use typed nonterminals from some set $\calN$.
An \emph{applicative term} is either (i) a constant $a^{0^{\rank(a)}\to 0}\in\Gamma$, (ii) a variable $x^\alpha\in\calV$, (iii) a nonterminal $A^\alpha\in\calN$, (iv) an application $(M^{\alpha\to\beta}\,N^\alpha)^\beta$.
An \emph{alternating recursion scheme with states} of level $n\in\Nat_{>0}$ is a tuple $\calG=\langle n,\Gamma,Q,\calN,\calR,\Delta\rangle$, where $\langle\Gamma,Q,\Delta\rangle$ is an alternating tree automaton,
$\calN$ is a finite set of nonterminals of level at most $n$,
and $\calR$ is a function assigning to each nonterminal $A$ in $\calN$ of type $\alpha_1\to\dots\to\alpha_k\to 0$ a rule of the form
$A\,x_1^{\alpha_1}\,\dots\,x_k^{\alpha_k}\to M$, where $M$ is an applicative term of type $0$ with free variables in $\{x_1^{\alpha_1},\dots,x_k^{\alpha_k}\}$, constants from $\Gamma$, and nonterminals from $\calN$.
A recursion scheme $\calG$ induces an alternating transition system $\langle\calC_\calG,\to_\calG\rangle$, 
where in a configuration $(p,M)\in\calC_\calG$ we have $p\in Q$ and $M$ is a closed applicative term of type $0$ using constants from $\Gamma$ and nonterminals from $\calN$.
We have two kinds of transitions.
First, for each rule $A\,x_1^{\alpha_1}\,\dots\,x_k^{\alpha_k}\to M$ in $\calR$ we have a transition 
$(p,A\,M_1\,\dots\,M_k)\to_\calG \{(p,M[M_1/x_1,\dots,M_k/x_k])\}$.
Second, for every $(p\xrightarrow{a}P_1\dots P_k)\in\Delta$ we have a transition $(p,a\,M_1\,\dots\,M_k)\to_\calG (P_1\times\{M_1\})\cup\dots\cup (P_k\times\{M_k\})$.

Fix a flat AOTPS $\calS=\langle n,\Sigma,\calR\rangle$ without control locations.
An \emph{extended letter} is a pair $(a,o)$ where $a\in\Sigma$ and $o\colon\{1,\dots,\rank(a)\}\to\{1,\dots,n\}$.
The meaning is that the letter is $a$ and its children have orders $o(1),\dots,o(\rank(a))$.
%
For each extended letter $(a,o)$ we will have a corresponding nonterminal $A_{a,o}$.
A first approximation of the encoding is that a tree $a(u_1,\dots,u_k)$ will be represented as $A_{a,o}$ to which encodings of $u_1,\dots,u_k$ are applied as arguments.
Then the rule for the nonterminal $A_{a,o}$ can simulate all shallow rules of our system $\calS$, constructing any term having $u_1,\dots,u_k$ as subterms.
Notice that rules of $\calS$ are flat, so we need not to look inside $u_1,\dots,u_k$; however we need to know their orders---that is why we assign nonterminals to extended letters not just to letters.

Nevertheless, there are also deep rules.
In order to handle them, each tree $a(u_1,\dots,u_k)$ will be represented in multiple ways.
One representation will handle shallow rules.
Moreover, for each extended letter $(b,o')$ and $\ihat$ we will have a nonterminal $A_{a,o}^{b,o',\ihat}$, 
and $a(u_1,\dots,u_k)$ will be represented as $A_{a,o}^{b,o',\ihat}$ with applied representations of $u_1,\dots,u_k$.
This nonterminal will be still waiting for subtrees of a potential parent of our $a$, having label $b$; when they will be applied, the nonterminal will simulate deep rules of $\calS$ having $b(\dots,a(\dots),\dots)$ on the left side,
where the $a$ is on the $\ihat$-th position.
As we have multiple encodings of a tree, we need to keep them in parallel.
Thus $A_{a,o}$ (and similarly $A_{a,o}^{b,o',\ihat}$) instead of taking one argument for each subtree,
takes multiple arguments for each subtree, one for each encoding of the subtree.

Let $\overline\Sigma_k$ be the set of all triples $(b,o',\ihat)$, where $(b,o')$ is an extended letter for $b\in\Sigma$, and $o'(\ihat)=k$.
Let us fix some (arbitrary) total order on elements of $\overline\Sigma_k$, that will be used to order arguments of our terms.
Using the product notation $\alpha\times\beta\to\gamma$ for the type $\alpha\to\beta\to\gamma$, we define types 
\begin{align*}
	\Psi_k=0\times\prod_{(a,o,\ihat)\in\overline\Sigma_k}\alpha_{a,o,>k},
\end{align*}
where the $(a,o,\ihat)$ are ordered according to our fixed order on $\overline\Sigma_k$.
We have used here the types
\begin{align*}
	\alpha_{a,o,>k}=\Psi_{o(i_1)}\to\dots\to\Psi_{o(i_m)}\to 0,
\end{align*}
where $i_1<\dots<i_m$ is the list of all $i\in\{1,\dots,\rank(a)\}$ for which $o(i)>k$.
In particular we have $\alpha_{a,o,>n}=0$.
Notice that $\alpha_{a,o,>k}$ and types in $\Psi_k$ are of level at most $n-k$.
We are now ready to define the type of our nonterminals: $A_{a,o}$ has type
\begin{align*}
	\Psi_{o(1)}\to\dots\to\Psi_{o(\rank(a))}\to 0
\end{align*}
and $A_{a,o}^{b,o',\ihat}$ has type
\begin{align*}
	\Psi_{o(1)}\to\dots\to\Psi_{o(\rank(a))}\to \alpha_{b,o',>\ord(a)}.
\end{align*}

Recall that the resulting higher-order recursion scheme has to use constants and an alternating tree automaton to simulate alternation.
We will have two kinds of constants in $\Gamma$: $\vee_i$ of type $0^i\to 0$, and $\wedge_i$ of type $0^i\to 0$, defined for appropriate numbers $i\in\Nat$ (we need only finitely many of them).
The constant $\vee_i$ simulates nondeterministic choice, and $\wedge_i$ simulates universal choice.
The tree automaton $\langle\Gamma,Q,\Delta\rangle$ is defined in the natural way: $Q$ consists of a single state $q$, 
and we have transitions
\begin{align*}
	q\xrightarrow{\vee_i}\emptyset\dots\emptyset Q\emptyset\dots\emptyset,\qquad q\xrightarrow{\wedge_i}Q\dots Q.
\end{align*}
In particular there is no transition reading $\vee_0$, and we have the transition $q\xrightarrow{\wedge_0}\varepsilon$.

Next, we define our encoding of trees into applicative terms.
In the definition below we use tuples just as a shorthand: $M\,(N_1,\dots,N_k)$ is intended to mean $M\,N_1\,\dots\,N_k$.
A tree is encoded as a tuple of applicative terms:
\begin{align*}
	\Enc(u)=(\encP(u),\encP^{b_1,o_1,\ihat_1}(u),\dots,\encP^{b_k,o_k,\ihat_k}(u))
\end{align*}
where $(b_1,o_1,\ihat_1)<\dots<(b_k,o_k,\ihat_k)$ are all the elements of $\overline\Sigma_{\ord(u)}$
ordered according to the fixed order on $\overline\Sigma_{\ord(u)}$.
We have here one basic encoding:
\begin{align*}
	\encP(a(u_1,\dots,u_m))=A_{a,o}\,\Enc(u_1)\,\dots\,\Enc(u_k)\qquad\mbox{where }\forall i \cdot o(i)=\ord(u_i).
\end{align*}
Additionally for each $(b_j,o_j,\ihat_j)\in\overline\Sigma_{\ord(a)}$ we have 
\begin{align*}
	&\encP^{b_j,o_j,\ihat_j}(a(u_1,\dots,u_m))=A^{b_j,o_j,\ihat_j}_{a,o}\,\Enc(u_1)\,\dots\,\Enc(u_k)
	\qquad\mbox{where } \forall i \cdot o(i)=\ord(u_i).
\end{align*}

It remains to define rules for the nonterminals from the rules of $\calS$.
Let us use the convention on variable naming that every left side of a rule in $\calR$ is of the form $a(\var x_1,\dots,\var x_m)$ or $b(\var y_1,\dots,\var y_{\ihat-1},a(\var x_1,\dots,\var x_m),\var y_{\ihat+1},\dots,\var x_k)$.
Recall that $\calS$ is flat, so every left side is of such form. 
A right side $R=\{r_1,\dots,r_l\}$ of a rule is encoded as
\begin{align*}
	\encP(R)=\wedge_l\,\encP(r_1)\,\dots\,\encP(r_l),
\end{align*}
where $\encP(r_i)$ is defined as for normal trees, with the encoding of variables given by:
\begin{align*}
	&\encP(\var x_i)=x_i,&&\encP^{a,o,\ihat}(\var x_i)=x_i^{a,o,\ihat},\\
	&\encP(\var y_i)=y_i,&&\encP^{a,o,\ihat}(\var y_i)=y_i^{a,o,\ihat}.
\end{align*}
Here, $\var x_i$ is a variable of the AOTPS, while $x_i$ is a variable in an applicative term of the recursion scheme,
and similarly for the other variables.
In order to define a rule for a nonterminal $A_{a,o}$ we look at all rules with $a(\var x_1,\dots,\var x_m)$ on the left side with $\ord(\var x_i)=o(i)$ for all $i$.
Let $R_1,\dots,R_l$ be the right sides of these rules.
Then we take 
\begin{align*}
	A_{a,o}\,\Enc(\var x_1)\,\dots\,\Enc(\var x_m)\to\vee_{l+m}\,\encP(R_1)\,\dots\,\encP(R_l)\,\deep_1\,\dots\,\deep_m
\end{align*}
with 
\begin{align*}
	&\deep_{\ihat}=x_{\ihat}^{a,o,\ihat}\,\Enc(\var x_{i_1})\,\dots\,\Enc(\var x_{i_d}),
\end{align*}
where $i_1<\dots<i_d$ are all the $i\in\{1,\dots,m\}$ for which $o(i)>o(\ihat)$.
In order to define a rule for a nonterminal $A^{b,o',\ihat}_{a,o}$ we look at all rules having on the left side $b(\var y_1,\dots,\var y_{\ihat-1},a(\var x_1,\dots,\var x_m),\var y_{\ihat+1},\dots,\var x_k)$
with $\ord(\var y_i)=o'(i)$ and $\ord(\var x_i)=o(i)$ for all $i$, and $\ord(a)=o'(\ihat)$ 
(in particular for nonterminals $A^{b,o',\ihat}_{a,o}$ with $\ord(a)\neq o'(\ihat)$ there are no such rules).
Let $R_1,\dots,R_l$ be the right sides of these rules (when $\ord(a)\neq o'(\ihat)$, we always have $l=0$).
Then we take 
\begin{align*}
	A_{a,o}^{b,o',\ihat}\,\Enc(\var x_1)\,\dots\,\Enc(\var x_m)\,\Enc(\var y_{i_1})\,\dots\,\Enc(\var y_{i_d})\to\vee_l\,\encP(R_1)\,\dots\,\encP(R_l),
\end{align*}
where $i_1<\dots<i_d$ are all the $i\in\{1,\dots,k\}$ for which $o'(i)>\ord(a)$.

Let us briefly see the correspondence between steps of $\calS$ and steps of $\calG$.
Consider a tree $u=a(u_1,\dots,u_m)$.
Its encoding $\encP(u)$ is $A_{a,o}\,\Enc(u_1)\,\dots\,\Enc(u_m)$ where $o(i)=\ord(u_i)$ for each $i$.
The first step of $\calG$ performed from $\encP(u)$ results in the right side of the rule for $A_{a,o}$ where we substitute $\encP(u_i)$ for $x_i$ and $\encP^{b,o',\ihat}(u_i)$ for $x_i^{b,o',\ihat}$ for each $i,b,o',\ihat$
(denote this substitution as $[\varphi]$).
The resulting tree starts with some $\vee_j$, so then $\calG$ nondeterministically chooses one of its subtrees.
There is a subtree $\encP(R)[\varphi]$ for each shallow rule $a(\var x_1,\dots,\var x_m)\to R$ of $\calS$ where $\ord(\var x_i)=o(i)=\ord(u_i)$ for each $i$.
These are exactly all shallow rules that can be applied to $u$.
By applying this rule to $u$ we obtain the set $\{r\sigma\mid r\in R\}$ where $\sigma$ maps $\var x_i$ to $u_i$ for each $i$.
Notice that $\encP(R)[\varphi]$ starts with $\wedge_{|R|}$ and has as children $\encP(r)[\varphi]$ for each $r\in R$.
Thus $\calG$ in the next step will transit into the set $\{\encP(r)[\varphi]\mid r\in R\}$.
Finally, we see that $\encP(r\sigma)=\encP(r)[\varphi]$.

Another possibility for $\calG$ in the second step is to transit for some $\ihat\in\{1,\dots,m\}$ to 
\begin{align*}
	\deep_{\ihat}[\varphi] &= (x_{\ihat}^{a,o,\ihat}\,\Enc(\var x_{i_1})\,\dots\,\Enc(\var x_{i_d}))[\varphi] = \encP^{a,o,\ihat}(u_{\ihat})\,\Enc(u_{i_1})\,\dots\,\Enc(u_{i_d})=\\
	&=A^{a,o,\ihat}_{b,o'}\,\,\Enc(v_1)\,\dots\,\Enc(v_k)\,\Enc(u_{i_1})\,\dots\,\Enc(u_{i_d}),
\end{align*}
where $u_{\ihat}=b(v_1,\dots,v_k)$, and $o'(i)=\ord(v_i)$ for each $i$, and $i_1<\dots<i_d$ are all $i\in\{1,\dots,m\}$ for which $o(i)>o(\ihat)$. 
The next three steps of $\calG$ simulate exactly all deep rules of $\calS$ having on the left side $a(\var y_1,\dots,\var y_{\ihat-1},b(\var x_1,\dots,\var x_k),\var y_{\ihat+1},\dots,\var y_m)$
with $\ord(\var y_i)=o(i)$ and $\ord(\var x_i)=o'(i)$ for all $i$
(in the same way as it was for shallow rules and for the nonterminal $A_{a,o}$).
It is important that the right sides of such rules use only those variables $\var y_i$ for which $\ord(\var y_i)=o(i)>o(\ihat)=\ord(b)$.
%
%
We conclude that $\langle\calC_\calS,\to_\calS\rangle$ and $\langle\calC_{\calG},\to_{\calG}\rangle$ faithfully simulate each other.

\subsection*{Step 4: From higher-order recursion schemes to the Krivine machine}

It is well-known how to translate from one formalism to the other; cf.~\cite{Salvati:Walukiewicz:MSCS:2015}.

\ignore{

	\paragraph{Safety.}

	Finally, let we observe that the translation from AOTPSs to higher-order recursion schemes above also preserves safety, after small adjustment.
	For applicative terms we define safety in the same way as for $\lambda Y$-terms.
	A rule $A\,x_1\,\dots\,x_k\to M$ is safe if $M$ is \emph{safe},
	and a higher-order recursion scheme with states is \emph{safe} if all its rules are safe.

	We have to adjust the translation so that a tree of order $k$ will be encoded in a term of level exactly $n-k$ (now it is encoded in a term of level \emph{at most} $n-k$).
	For that we need spare nonterminals of each level.
	We define a type $\beta_k$ of level $k$:
	\begin{align*}
		\beta_0=0,&&\beta_k=\beta_{k-1}\to0\qquad\mbox{for $k>0$.}
	\end{align*}
	We also have a nonterminal $B_k$ of type $\beta_k$ with rules $B_0\to B_0$ and $B_k\,x\to B_k\,x$ for $k>0$.
	We modify types in the encoding as follows:
	\begin{align*}
		&\Psi_k=\beta_{n-k}\times\prod_{(a,o,\ihat)\in\overline\Sigma_k}\alpha_{a,o,>k},\\
		&\alpha_{a,o,>k}=\Psi_{o(i_1)}\to\dots\to\Psi_{o(i_m)}\to \beta_{n-k}.
	\end{align*}
	Now we have that $\alpha_{a,o,>k}$ and all types in $\Psi_k$ are of level exactly $n-k$.
	Nonterminals $A_{a,o}$'s have type
	\begin{align*}
		\Psi_{o(1)}\to\dots\to\Psi_{o(\rank(a))}\to \beta_{n-\ord(a)}.
	\end{align*}
	The definition of the encoding of a tree is left unchanged.
	We need to modify the rules for nonterminals.
	First, for those nonterminals that should now take more arguments (all nonterminals encoding letters of order smaller than $n$) we add an additional spare argument $z$ that will never be used on the right side.
	Second, arguments to $\vee_i$ and $\wedge_i$ should have type $0$, but now an encoding of a tree has some type $\beta_k$.
	We correct it by passing the nonterminal $B_{k-1}$ as an argument.
	Thus in the definition of $\encP(R)$ we use $(\encP(r_i)\,B_{n-\ord(r_i)-1})$ instead of $\encP(r_i)$ if $\ord(r_i)<n$, and we leave $\encP(r_i)$ if $\ord(r_i)=n$.
	Also in the definition of $\deep_{\ihat}$ we append $B_{n-o(\ihat)-1}$ whenever $o(\ihat)<n$.
	The dynamics of the system is as previously, we just sometimes pass some $B_k$ as an additional argument that is ignored.

	Let us see that a $(\Sigma\cup\calV)$-tree $t$ will be encoded into safe terms, by induction on the structure.
	Of course encodings of variables are safe.
	Let $t=a(u_1,\dots,u_k)$.
	Terms in $\Enc(u_i)$ are safe by induction assumption.
	Safety of $t$ implies that all variables used in $t$ are of order at most $\ord(t)$;
	thus all variables used in $\Enc(u_i)$ are of order at least $n-\ord(t)$.
	Encodings of $t$ (that is $\encP(t)$ and all $\encP^{b,o',\ihat}(t)$) are of level $n-\ord(t)$, so they are locally safe.
	Finally, beyond subterms of $\Enc(u_i)$, the only subterms of $\encP(t)$ that do not occur safely are the whole term and $A_{a,o}$, so $\encP(t)$ is safe; similarly $\encP^{b,o',\ihat}(t)$.
	We conclude that the right side of every nonterminal is safe, since every its subterm that does not occur safely is either of level $0$, or is a single nonterminal or a variable or a constant, or is a subterm of the encoding of some $r_i$.
	Thus the obtained higher-order recursion scheme with states is safe.

}

\section{Ordered annotated multi-pushdown systems}

\label{sec:oapds:app}

We encode ordered annotated multi-pushdown systems \cite{Hague:FSTTCS:2013} into AOTPSs.
Formally, an \emph{alternating ordered annotated multi-pushdown system} is a tuple $\OAMPDS = \tuple {m, n, \Gamma, Q, \Delta}$,
where $m \in \Natpos$ is the number of higher-order pushdowns,
$n \in \Natpos$ is the order of each of the $m$ higher-order pushdowns,
$\Gamma$ is a finite pushdown alphabet containing a distinguished initial symbol $e$,
$Q$ is a finite set of control locations,
and $\Delta \subseteq Q \times\{1,\dots,m\}\times\Gamma\times O_n \times 2^Q$
is a set of rules, where 
%
$O_n =  \bigcup_{k = 1}^n \setof {\apush {k} b, \apush {k} {}, \apop {k}, \acollapse {k}} {b \in \Gamma}$.
Intuitively, a rule $(p,l,a,o,P)$ can be applied when the control location is $p$ and the topmost symbol on the $l$-th stack is $a$, and it applies the stack operation specified by $o$ to this stack.
Pop and collapse operations are called \emph{consuming}.
An alternating ordered annotated multi-pushdown system $\OAMPDS$
induces an alternating transition system $\tuple {\mathcal C_\OAMPDS, \goesto_\OAMPDS}$, where
$\mathcal C_\OAMPDS = Q \times \Gamma_n^m$,
and $(p, w_1, \dots, w_m) \goesto_\OAMPDS P \times \set {(w_1', \dots, w_m')}$
if, and only if, there exists a rule $(p, l, a, o, P) \in \Delta$ s.t.
\begin{enumerate}[1)]
	\item $w_l = \stack {a^{u_l}, \cdots}$,
	\item if $o$ is consuming, then $w_1' = \dots = w_{l-1}' = \stack{e^{\stack{\,}},\stack{\,},\dots,\stack{\,}}$,
	\item if $o$ is not consuming, then $w_1' = w_1, \dots, w_{l-1}' = w_{l-1}$,
	\item $w_l' = o(w_l)$, and
	\item $w_{l+1}' = w_{l+1}, \dots, w_m' = w_m$.
\end{enumerate}
For $c\in\mathcal C_\OAMPDS$ and $T \subseteq Q$,
the \emph{(control-state) reachability problem} for $\OAMPDS$ asks whether
$c\in \prestar {T \times \Gamma_n^m}$.

\paragraph{Encoding.}

Let $\Sigma$ be an ordered alphabet containing, 
for every pushdown index $l\in\{1,\dots,m\}$ and order $k\in\{1,\dots,n\}$,
\begin{inparaenum}[1)]
\item an end-of-stack symbol $(l, k, \bot)$ of order $(l-1) \cdot n + k$ and rank $0$,
\item a symbol $(l, k)$ of order $(l-1) \cdot n + k$ and rank $k+2$,
\item a symbol $(l, \bullet)$ of order $(l-1) \cdot n + 1$ and rank $n+2$.
\end{inparaenum}
Moreover, $\Sigma$ contains a symbol $\bullet$ of order $1$ and rank $m \cdot (n+1)$, and, 
for every pushdown index $l\in\{1,\dots,m\}$ and every $a \in \Gamma$, a symbol $(l,a)$ of order $(l-1)\cdot n+1$ and rank $0$.
Thus $\Sigma$ has order $mn$.
Notice that the size of $\Sigma$ is $\size \Sigma = O(m\cdot (n+\size \Gamma))$.
Fix a pushdown index $l$.
An empty order-$k$ pushdown is encoded as the tree $\enc{l,k}{\stack{\,}}=(l,k,\bot)$.
A nonempty order-$k$ pushdown $\stack{a^{\hat u}, u_1, \dots, u_k}$ is encoded as the tree 
\begin{align*}
	\enc {l, k} {\stack{a^{\hat u}, u_1, \dots, u_k}} &= 
	\begin{tikzpicture}[baseline=-2.3ex,scale=0.6,level/.style={sibling distance = 2.7cm}]
		\node {$(l, k)$}
			child{ node {$(l, a)$}}
			child{ node {$\encB l {\hat u}$}}
			child{ node {$\enc {l, 1} {u_1}$}}
		    child{ node {$\cdots$} edge from parent[draw=none]}
			child{ node {$\enc {l, k} {u_k}$}};
	\end{tikzpicture},
\end{align*}
and
\begin{align*}
	\encB l {\stack{b^{\hat u}, u_1, \dots, u_n}} &= 
	\begin{tikzpicture}[baseline=-2.3ex,scale=0.6,level/.style={sibling distance = 2.7cm}]
		\node {$(l, \bullet)$}
			child{ node {$(l, b)$}}
			child{ node {$\encB l {\hat u}$}}
			child{ node {$\enc {l, 1} {u_1}$}}
		    child{ node {$\cdots$} edge from parent[draw=none]}
			child{ node {$\enc {l, n} {u_n}$}};
	\end{tikzpicture}.
\end{align*}
To simplify the encoding, we assume w.l.o.g.{\@} that collapse links are always of order $n$ (collapse to a lower order is still allowed though).
An $m$-tuple of nonempty order-$n$ pushdowns
\[ w = (\stack {a_1^{\hat u_1}, u_{1, 1}, \dots, u_{1, n}}, \dots, \stack {a_m^{\hat u_m}, u_{m,1},\dots, u_{m, n}}) \]
is encoded as the following tree $\enc{} {w}$
%
\ignore{
\begin{align*}
	\begin{tikzpicture}[baseline=-2.3ex,scale=0.7,level/.style={sibling distance = 2.2cm}]
		\node {$\bullet$}
			child{ node {$(1, a_1)$}}
			child{ node {$\encB 1 {\hat u_1}$}}
			child{ node {$\enc {1, 1} {u_{1, 1}}$}}
		    child{ node {$\cdots$} edge from parent[draw=none]}
			child{ node {$\enc {1, n} {u_{1, n}}$}}
		    child{ node {$\cdots$} edge from parent[draw=none]}
			child{ node {$(m,a_m)$}}
			child{ node {$\encB m {\hat u_m}$}}
			child{ node {$\enc {m, 1} {u_{m, 1}}$}}
		    child{ node {$\cdots$} edge from parent[draw=none]}
			child{ node {$\enc {m, n} {u_{m, n}}$}};
	\end{tikzpicture}
\end{align*}
}
\begin{align*}
	\begin{tikzpicture}[baseline=-2.3ex,scale=0.7,level/.style={sibling distance = 1.5cm}]
		\node {$\bullet$}
			child{ node {$(1, a_1)$}}
			child{ node {$t^\bullet_1$}}
			child{ node {$t_{1,1}$}}
		    child{ node {$\cdots$} edge from parent[draw=none]}
			child{ node {$t_{1,n}$}}
		    child{ node {$\cdots$} edge from parent[draw=none]}
			child{ node {$(m, a_m)$}}
			child{ node {$t^\bullet_m$}}
			child{ node {$t_{m,1}$}}
		    child{ node {$\cdots$} edge from parent[draw=none]}
			child{ node {$t_{m,n}$}};
	\end{tikzpicture},
\end{align*}
where $t^\bullet_l = \encB l {\hat u_l}$ for every $l = 1, \dots, m$,
and $t_{l,k} = \enc {l, k} {u_{l, k}}$ for every $l = 1, \dots, m$ and $k = 1, \dots, n$.
%
We can observe here two differences when comparing to the encoding of annotated higher-order pushdown systems.
First, we do not put the topmost stack symbol in the root, but in an additional child of the form $(l, a)$ just below the root. 
This allows to avoid an alphabet of exponential size containing all $m$-tuples of letters from $\Gamma$
(the rest of the stack is encoded in an analogous way, but this is only for uniformity).
Second, at the beginning of each collapse link we use a special letter $(l,\bullet)$ instead of $(l,k)$.
This letter has a fixed order, and thanks to that we can use a variable of this fixed order to match the subtree encoding a collapse link.
Recall that for annotated higher-order pushdown systems we have created separate rules for each possible order of the collapse links,
but here such a solution would result in an exponential blowup since we have $m$ collapse links of independent orders.

Let $\tuple {m, n, \Gamma, Q, \Delta}$ be an ordered annotated multi-pushdown system.
We define an equivalent AOTPS $\Sys = \tuple {mn, \Sigma, Q, \Rules_\Sys}$
of order $mn$,
where $\Sigma$ and $Q$ are as defined above,
and $\Rules_\Sys$ contains a rule for each rule in $\Delta$.
%
%
We use the convention that variables $\var x_{l, k}$, $\var z_{l, k}$ have order $(l-1) \cdot n + k$, and 
variables $\var y_l$, $\var y'_l$, $\var t_l$, are of order $(l-1) \cdot n + 1$.
We write $\var x_l^{i..j}$ (with $i \leq j$) for the tuple of variables $(\var x_{l, i}, \dots, \var x_{l, j})$.
If $(p, l, a, \apush k b, P) \in \Delta$, then there is the following shallow rule in $\Rules_\Sys$:
\begin{align*}
	&p, \begin{tikzpicture}[baseline=-.0ex]
		\Tree [. $\bullet$
			[. $\var t_1$ ]
			[. $\var y_1$ ]
			[. $\var x_1^{1..n}$ ]
			[. $\cdots$ ]
			[. $(l,a)$ ]
			[. $\var y_l$ ]
			[. $\var x_l^{1..n}$ ]
			[. $\cdots$ ]
			[. $\var t_m$ ]
			[. $\var y_m$ ]
			[. $\var x_m^{1..n}$ ]
		]
	\end{tikzpicture}
	\\ &\longgoesto
	P, \!\!\!\!\!\!\!\!\!\!\!\! \begin{tikzpicture}[baseline=-.0ex]
		\Tree [. $\bullet$
			[. $\var t_1$ ]
			[. $\var y_1$ ]
			[. $\var x_1^{1..n}$ ]
			[. $\cdots$ ]
			[. {$(l, b)$} ]
			[. {$(l, \bullet)$}
				[. {$(l, a)$} ]
				[. $\var y_l$ ]
				[. $\var x_l^{1..n}$ ]
			]
			[. {$(l, 1)$}
				[. {$(l, a)$} ]
				[. $\var y_l$ ]
				[. $\var x_{l, 1}$ ]
			]
			[. $\var x_l^{2..n}$ ]
			[. $\cdots$ ]
			[. $\var t_m$ ]
			[. $\var y_m$ ]
			[. $\var x_m^{1..n}$ ]
		]
	\end{tikzpicture}.
\end{align*}
If $(p, l, a, \apush k {}, P) \in \Delta$, then there is the following shallow rule in $\Rules_\Sys$:
\begin{align*}
	&p, \begin{tikzpicture}[baseline=-2.3ex,scale=0.6,level/.style={sibling distance = 1.3cm}]
		\node {$\bullet$}
					child{ node {$\var t_1$}}
				    child{ node {$\var y_1$}}
					child{ node {$\var x_1^{1..n}$}}
				    child{ node {$\cdots$} edge from parent[draw=none]}
					child{ node {$(l,a)$}}
	    			child{ node {$\var y_l$}}
					child{ node {$\var x_l^{1..n}$}}
				    child{ node {$\cdots$} edge from parent[draw=none]}
					child{ node {$\var t_m$}}
				    child{ node {$\var y_m$}}
				    child{ node {$\var x_m^{1..n}$}};
	\end{tikzpicture}
	\\ &\longgoesto
	P, \begin{tikzpicture}[baseline=-2.3ex,scale=0.6,level 1/.style={sibling distance = 1.6cm},level 2/.style={sibling distance = 1.4cm}]
		\node {$\bullet$}
			child{ node {$\var t_1$}}
		    child{ node {$\var y_1$}}
			child{ node {$\var x_1^{1..n}$}}
	    	child{ node {$\cdots$} edge from parent[draw=none]}
	    	child{ node {$(l,a)$}}
	    	child{ node {$\var y_l$}}
		    child{ node {$\var x_l^{1..k-1}$}}
			child{ node {$(l, k)$}
				child{ node {$(l,a)$}}
			    child{ node {${\var y}_l$}}
				child{ node {$\var x_l^{1..k}$}}
			}
		    child{ node {$\var x_l^{k+1..n}$}}
		    child{ node {$\cdots$} edge from parent[draw=none]}
			child{ node {$\var t_m$}}
		    child{ node {${\var y}_m$}}
		    child{ node {$\var x_m^{1..n}$}};
	\end{tikzpicture}.
\end{align*}
%
%
If $(p, l, a, \apop k, P) \in \Delta$, then there is the following deep rule in $\Rules_\Sys$:
\begin{align*}
	&p, \begin{tikzpicture}[baseline=-2.3ex,scale=0.6,level 1/.style={sibling distance = 1.6cm},level 2/.style={sibling distance = 1.3cm}]
		\node {$\bullet$}
			child{ node {$\var t_1$}}
		    child{ node {${\var y}_1$}}
			child{ node {$\var x_1^{1..n}$}}
	    	child{ node {$\cdots$} edge from parent[draw=none]}
			child{ node {$(l,a)$}}
		    child{ node {$\var y_l'$}}
			child{ node {$\var z_l^{1..k-1}$}}
			child{ node {$(l, k)$}
				child { node {$(l,b)$}}
				child { node {${\var y}_l$}}
				child { node {$\var x_l^{1..k}$}}
				}
			child{ node [left=-0.8cm] {$\var x_l^{k+1..n}$}}
		    child{ node [left=-0.5] {$\cdots$} edge from parent[draw=none]}
			child{ node {$\var t_m$}}
		    child{ node {${\var y}_m$}}
			child{ node {$\var x_m^{1..n}$}};
	\end{tikzpicture}
	\\& \longgoesto P, \begin{tikzpicture}[baseline=-2.3ex,scale=0.6,level/.style={sibling distance = 1.4cm}]
		\node {$\bullet$}
			child { node [left=5ex] {$(1,e)$}}
			child { node [left=1ex] {$(1,1,\bot)$}}
			child { node [left=2.4ex] {$\cdots$} edge from parent[draw=none]}
			child { node [left=-3ex] {$(l\!-\!1,n,\bot)$}}
			child { node {$(l,b)$}}
		    child { node {${\var y}_l$}}
			child { node {$\var x_l^{1..n}$}}
		    child { node {$\cdots$} edge from parent[draw=none]}
			child{ node {$\var t_m$}}
		    child{ node {${\var y}_m$}}
			child{ node {$\var x_m^{1..n}$}};
	\end{tikzpicture}
\end{align*}
This deep rule satisfies the ordering condition since $(l, k)$ has order $(l-1)\cdot n + k$,
and all variables $\var x_l^{k+1..n}$ that appear also on the r.h.s.{\@} have order strictly higher than $(l, k)$.
Notice that $y_l'$ has relatively low order, but it does not appear on the r.h.s.
%
%
Finally, if $(p, l, a, \acollapse k, P) \in \Delta$, then there is the following deep rule in $\Rules_\Sys$:
\begin{align*}
	&p, \begin{tikzpicture}[baseline=-2.3ex,scale=0.6,level 1/.style={sibling distance = 1.6cm},level 2/.style={sibling distance = 1.2cm}]
		\node {$\bullet$}
			child{ node {$\var t_1$}}
		    child{ node {${\var y}_1$}}
			child{ node {$\var x_1^{1..n}$}}
		    child{ node {$\cdots$} edge from parent[draw=none]}
			child{ node {$(l,a)$}}
			child{ node {$(l, \bullet)$}
				child { node {$(l,b)$}}
				child { node {$\varord {y} l {}$}}
				child { node {$\var x_l^{1..k}$}}
			}
			child { node {$\var z_l^{1..k}$}}
			child { node {$\var x_l^{k+1..n}$}}
			child{ node {$\cdots$} edge from parent[draw=none]}
			child{ node {$\var t_m$}}
		    child{ node {${\var y}_m$}}
			child{ node {$\var x_m^{1..n}$}};
	\end{tikzpicture}
	\\ &\longgoesto P, \begin{tikzpicture}[baseline=-2.3ex,scale=0.6,level/.style={sibling distance = 1.5cm}]
		\node {$\bullet$}
			child{ node [left=6ex] {$(l,e)$}}
			child { node [left=2ex] {$(1,1,\bot)$}}			
			child { node [left=3ex] {$\cdots$} edge from parent[draw=none]}
			child { node [left=-3ex] {$(l\!-\!1,n,\bot)$}}
			child{ node {$(l,b)$}}
			child { node {$\varord {y} l {}$}}
			child { node {$\var x_l^{1..n}$}}
			child{ node {$\cdots$} edge from parent[draw=none]}
			child{ node {$\var t_m$}}
		    child{ node {${\var y}_m$}}
			child{ node {$\var x_m^{1..n}$}};
	\end{tikzpicture}.
\end{align*}
The deep rule above satisfies the ordering condition since $(l, \bullet)$ has order $(l-1)\cdot n + 1$,
and all variables $\var x_l^{k+1..n}$, $\var t_{l+1}$, $\var y_{l+1}$, $\var x_{l+1}^{1..n}$, $\dots$, $\var t_m$, $\var y_m$, $\var x_m^{1..n}$ that appear also on the r.h.s.{\@}
have strictly higher order.

\begin{lemma}[Simulation]
	We have that $(p, w) \goesto_\OAMPDS^* P \times \set {w'}$ if, and only if,
	$(p, \enc{} w) \goesto_{\Sys}^* P \times \set {\enc{} {w'}}$.
	Thus, the (control-state) reachability problem for $\OAMPDS$ is equivalent to the reachability problem for $\Sys$.
\end{lemma}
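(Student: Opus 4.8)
The plan is to establish a tight one-step correspondence between $\OAMPDS$ and $\Sys$ and then lift it to the reflexive-transitive closure by induction on the length of a rewriting sequence. Two structural facts about the encoding drive the argument. First, $\enc{}{\cdot}$ is \emph{injective} on configurations, and, being defined compositionally by recursion on the pushdown structure, the group of children of $\bullet$ associated with index $l$ in $\enc{}{w}$ is exactly the encoding of the $l$-th pushdown $w_l$: its topmost symbol is exposed as the child $(l,a_l)$, its top annotation as $t^\bullet_l=\encB{l}{\hat u_l}$, and its order-$k$ sub-pushdowns as $t_{l,k}=\enc{l,k}{u_{l,k}}$. Second, the construction provides exactly one rule of $\Rules_\Sys$ for each rule of $\Delta$, and distinct groups of children of $\bullet$ carry disjoint ranges of orders, so a rule acting on the $l$-th pushdown never interferes with the encodings of the other pushdowns.

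First I would prove the one-step forward simulation: if $(p,w)\goesto_\OAMPDS P\times\set{w'}$ using a rule $(p,l,a,o,P)\in\Delta$, then $(p,\enc{}{w})\goesto_\Sys P\times\set{\enc{}{w'}}$. This is a case analysis on the four shapes of $o$. For each shape one exhibits the substitution $\sigma$ mapping the variables of the corresponding AOTPS rule to the appropriate encoded subtrees --- $\var x_l^{1..n}$ to the $\enc{l,k}{u_{l,k}}$, the variables $\var y_l$ and $\var t_l$ to the exposed top symbols and top annotations --- and verifies $\enc{}{w}=l_\Sys\sigma$ and $\enc{}{w'}=r_\Sys\sigma$. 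For the two push operations $\apush{k}{b}$ and $\apush{k}{}$ this is a direct rewriting of the $l$-th group; for the two consuming operations one must additionally check that the lower-indexed groups are replaced by the encodings of empty pushdowns $(l',k',\bot)$ and initial symbols $(l',e)$, exactly as the operational semantics prescribes for pop and collapse. The ordering condition of each deep rule has already been verified in the text.

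Next I would prove the converse one-step statement: every step $(p,\enc{}{w})\goesto_\Sys U$ has $U=P\times\set{\enc{}{w'}}$ arising from an $\OAMPDS$ step. Since every tree $\enc{}{w}$ is well-formed (root $\bullet$, children in the canonical groups), the left-hand side of an applicable AOTPS rule must match one of the $m$ groups; injectivity of $\enc{}{\cdot}$ and the bijection between $\Rules_\Sys$ rules and $\Delta$ rules then let us read back the corresponding $\OAMPDS$ rule and its effect $w'$. The key point here is that the deep rules for $\apop{k}$ and $\acollapse{k}$ can fire only when the matched subtree genuinely encodes a top order-$k$ sub-pushdown, respectively a collapse link: the latter is enforced by the fixed-order marker $(l,\bullet)$ at the head of every encoded annotation, which is precisely what allows a single variable of order $(l-1)\cdot n+1$ to capture a collapse link of any order while keeping $\Sigma$ of polynomial size.

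Finally, the two one-step statements lift to $\goesto_\OAMPDS^*$ and $\goesto_\Sys^*$ by induction on the number of steps, treating alternation pointwise: a branching step to a target set $P\times\set{w'}$ (resp.\ $P\times\set{\enc{}{w'}}$) is matched branch-by-branch, and the induction hypothesis is applied to each $(q,w')$ with $q\in P$. Since $\enc{}{\cdot}$ is a bijection between $\OAMPDS$ configurations and well-formed trees and it commutes with single steps, the target set $T\times\Gamma_n^m$ of control-state reachability corresponds exactly to the set of well-formed trees with control location in $T$, and the equivalence of the two reachability problems follows. I expect the main obstacle to be the bookkeeping for the two consuming operations: confirming, in both directions, that destructuring the collapse link through the $(l,\bullet)$ marker together with the simultaneous erasure of all lower-indexed pushdowns faithfully mirrors the ordered semantics, all while the ordering condition keeps these deep rules legal.
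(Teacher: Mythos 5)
Your proposal is correct and follows exactly the route the paper intends: the paper states this Simulation lemma without an explicit proof, treating it as immediate from the fact that each rule of $\Rules_\Sys$ mirrors one rule of $\Delta$ step-for-step, and your argument (one-step forward and backward correspondence via explicit substitutions, then induction on the length of the rewriting sequence with alternation handled pointwise) is precisely the fleshing-out of that correspondence. The details you single out --- injectivity of $\enc{}{\cdot}$, the reset of lower-indexed groups under consuming operations, and the role of the fixed-order marker $(l,\bullet)$ in matching collapse links --- are the right ones to check and raise no issues.
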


\end{document}